\newtheorem{theorem}{Theorem}[section]
\newtheorem{lemma}[theorem]{Lemma}
\newtheorem{proposition}[theorem]{Proposition}
\newtheorem{observation}[theorem]{Observation}
\newtheorem{corollary}[theorem]{Corollary}
\theoremstyle{definition}
\newtheorem{definition}[theorem]{Definition}
\newtheorem{conjecture}[theorem]{Conjecture}
\newtheorem{remark}[theorem]{Remark}
\newtheorem{question}[theorem]{Open Question}
\DeclareMathOperator{\GL}{GL}
\DeclareMathOperator{\Span}{Span}
\DeclareMathOperator{\Id}{Id}
\DeclareMathOperator{\rk}{rk}
\newcommand{\F}{\mathbb{F}}
\newcommand{\R}{\mathbb{R}}
\newcommand{\diag}{\text{diag}}
\title{On the algebraic proof complexity of Tensor Isomorphism}
\author{Nicola Galesi\footnote{Dipartimento Ingegneria Informatica Automatica e Gestionale ``A. Ruberti'', Sapienza Università di Roma, Italy, \texttt{nicola.galesi@uniroma1.it}}, Joshua A. Grochow\footnote{Departments of Computer Science and Mathematics, University of Colorado Boulder, \texttt{jgrochow@colorado.edu}. Supported by NSF CAREER award CCF-2047756.}, Toniann Pitassi\footnote{Department of Computer Science, Columbia University, \texttt{tonipitassi@gmail.com}. Supported by NSF grant CCF-1900460, and by the IAS School of Mathematics.}, and Adrian She\footnote{Department of Mathematics and Computer Science, University of Toronto, \texttt{adrian.she@mail.utoronto.ca}. Supported by NSERC Canada Graduate Scholarship.}}
\begin{document}
\maketitle


\begin{abstract}
The \textsc{Tensor Isomorphism} problem (\textsc{TI}) has recently emerged as having connections to multiple areas of research within complexity and beyond, but the current best upper bound is essentially the brute force algorithm. Being an algebraic problem, \textsc{TI} (or rather, proving that two tensors are \emph{non}-isomorphic) lends itself very naturally to algebraic and semi-algebraic proof systems, such as the Polynomial Calculus (PC) and Sum of Squares (SoS). For its combinatorial cousin \textsc{Graph Isomorphism}, essentially optimal lower bounds are known for approaches based on PC and SoS (Berkholz \& Grohe, SODA '17). Our main results are an $\Omega(n)$ lower bound on PC degree or SoS degree for \textsc{Tensor Isomorphism}, and a nontrivial upper bound for testing isomorphism of tensors of bounded rank.

We also show that PC cannot perform basic linear algebra in sub-linear degree, such as comparing the rank of two matrices, or deriving $BA=I$ from $AB=I$. As linear algebra is a key tool for understanding tensors, we introduce a strictly stronger proof system, PC+Inv, which allows as derivation rules all substitution instances of the implication $AB=I \rightarrow BA=I$. We conjecture that even PC+Inv cannot solve  \textsc{TI} in polynomial time either, but leave open getting lower bounds on PC+Inv for any system of equations, let alone those for \textsc{TI}. We also highlight many other open questions about proof complexity approaches to \textsc{TI}.
\end{abstract}

\section{Introduction}
Tensors have rapidly emerged as a fundamental data structure and key mathematical object of the 21st century. They play key roles in many different areas of science, engineering, and mathematics, from quantum mechanics and general relativity to neural networks \cite{TNN} and mechanical engineering. They arise in theoretical computer science in many ways, including from (post-quantum) cryptography \cite{Pat96, JQSY19}, derandomization, \textsc{Matrix Multiplication}, \textsc{Graph Isomorphism} \cite{GQ}, and several different parts of Geometric Complexity Theory. 

The fundamental notion of equivalence between tensors is that of isomorphism: two tensors are isomorphic if one can be transformed into the other by an invertible linear change of basis in each of the corresponding vector spaces. For example, two 2-tensors (=matrices) $M,M'$ are equivalent under this notion if there are invertible matrices $X,Y$ such that $XMY = M'$; similarly, two 3-tensors, represented by 3-way arrays $T_{ijk}, T'_{ijk}$ are isomorphic if there are three invertible matrices $X,Y,Z$ such that 
\begin{equation} \label{eq:TI}
\sum_{ijk} X_{ii'} Y_{jj'} Z_{kk'} T_{ijk} = T'_{i'j'k'}
\end{equation}
for all $i',j',k'$. The problem of (3-)\textsc{Tensor Isomorphism} (\textsc{TI}) is: given two such 3-way arrays, to decide if they are isomorphic. 

Over finite fields, two different versions of \textsc{TI} sandwich the complexity of its more famous cousin, \textsc{Graph Isomorphism}. Namely, as presented above, \textsc{GI} reduces to \textsc{TI}. In the other direction, over a finite field $\F=\F_{p^a}$, one can take an $n \times n \times n$ tensor and list it out ``verbosely'', as a set of $p^{an}$ many $n \times n$ matrices over $\F$; the isomorphism problem for such verbosely given tensors is equivalent to \textsc{Group Isomorphism} for a certain class of $p$-groups, widely believed to be the hardest cases of \textsc{Group Isomorphism} in general. As such, this verbose version of \textsc{TI} reduces to \textsc{GI}. Furthermore, with Babai's quasi-polynomial-time algorithm \cite{Babai}, the running times are quite close: $N^{O(\log N)}$ for \textsc{VerboseTI} and $N^{O(\log^2 N)}$ for \textsc{GI} (the exponent of the exponent was worked out by Helfgott \cite{helfgott}). Thus \textsc{TI} stands as a key obstacle to putting \textsc{GI} into $\mathsf{P}$.

In this paper, we initiate the study of (algebraic) proof complexity approaches to proving that two tensors are non-isomorphic.
Lower bounds on the Polynomial Calculus proof system imply lower bounds on Gr\"{o}bner basis techniques, and the latter are some of the leading methods for solving $\mathsf{TI}$-complete problems in cryptanalysis, e.g., \cite{DBLP:conf/eurocrypt/TangDJPQS22, DBLP:conf/eurocrypt/FaugereP06}.
In the context of \textsc{GI}, proof complexity plays an important role, through its connection with the Weisfeiler--Leman (WL) algorithm. Although this algorithm does not, on its own, solve \textsc{GI} in polynomial time \cite{CFI}, it is a key subroutine in many of the best algorithms for \textsc{GI}, both in theory \cite{Babai} and in practice (see \cite{mckay, MP}). And the picture that has emerged is that some proof systems for \textsc{GI} are known to be equivalent in power to WL \cite{AM, BerkholzG15}, and some lower bounds on proof systems are closely related to lower bounds for WL \cite{SSC, DBLP:conf/soda/ODonnellWWZ14, BerkholzG15}. Versions of WL for groups, and in particular finite $p$-groups---and hence, by the connection above, tensors over finite fields---have only recently begun to be explored \cite{BGLQW, BS1, BS2, CL}.

\subsection{Main results}

We focus on the Polynomial Calculus (PC, or Gr\"{o}bner) proof system \cite{CEI}, though our results will also hold for semi-algebraic proof systems such as Sum-of-Squares \cite{lasserre} as well. PC is used to show that a system of polynomial equations over a field $\F$ is unsatisfiable over the algebraic closure $\overline{\F}$, by deriving from the system of equations, in a line-by-line fashion, the contradiction $1=0$. The degree of a PC proof is the maximum degree of any line appearing in the proof, and it is a fundamental result that PC proofs of constant degree can be found in polynomial time \cite{CEI}. Much as WL informally ``captures all combinatorial approaches'' to \textsc{GI}, PC informally ``captures all approaches based on Gr\"{o}bner bases'' to showing that a system of polynomial equations is unsatisfiable. 

The systems of equations we study are, for two non-isomorphic tensors $T,T'$, the equations (\ref{eq:TI}) along with new matrices $X',Y',Z'$, and equations saying that these are the inverses of $X,Y,Z$, resp., viz.: $XX'=X'X = \Id$, and similarly for the others. The reason for introducing these new matrices, despite their not appearing in (\ref{eq:TI}), is that these invertibility equations are only degree 2. In contrast, if we instead used the determinant to indicate that $X$ was invertible, then our starting equations would have degree $n+1$, rather than constant degree $\leq 3$. Since the main complexity measure we study on PC is degree, having starting equations of degree $n$ would make it difficult to make meaningful lower bound statements.

Our first main result is (two proofs of) a lower bound on such techniques.

\begin{theorem} \label{thm:lowerMain}
Over any field, there are instances of $n \times n \times n$ \textsc{Tensor Isomorphism} that require PC degree $\Omega(n)$ to refute. Over $\R$, they also require Sum-of-Squares degree $\Omega(n)$ to refute.
\end{theorem}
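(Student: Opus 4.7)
The plan is to exhibit an explicit pair of non-isomorphic tensors whose distinguishing invariant is linear-algebraic (specifically, matrix rank), and to construct a degree-$\Omega(n)$ pseudo-expectation $\mathcal{E}$ that satisfies all axioms of (\ref{eq:TI}) together with the inverse axioms $XX' = X'X = I$ and analogously for $Y, Z$. I would first take $T$ and $T'$ to be 3-tensors supported only on their first third-mode slice: $T_{ij1} = M_{ij}$, $T'_{ij1} = M'_{ij}$, with $T_{ijk} = T'_{ijk} = 0$ for $k > 1$, where $M, M'$ are $n \times n$ matrices of distinct ranks $r \ne r'$, both of order $\Theta(n)$. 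Unfolding (\ref{eq:TI}) then forces $Z_{1,k'} = 0$ for every $k' > 1$, together with $M' = Z_{11}\, X^T M Y$ and $Z_{11} \ne 0$; since $X, Y \in \GL_n$, this forces $\rk M = \rk M'$, so $T \not\cong T'$. The whole difficulty is thus reduced to a PC/SoS degree lower bound for matrix non-equivalence of ranks $r$ versus $r'$, matching one of the ``basic linear algebra'' hardness statements flagged in the abstract.

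To produce the pseudo-expectation, I would average over a measure $\mu$ on $\GL_n(\F)^3$ that (a) places all its mass on triples $(A, B, C)$ with $C$ having first row $(1, 0, \ldots, 0)$, so that the axioms for $k' > 1$ hold identically on the support, and (b) is engineered so that the degree-$d$ moments of $A^T M B$ match those of $M'$ for every $d = O(n)$. One can then define $\mathcal{E}[p(X, Y, Z, X', Y', Z')] := \E_\mu\bigl[p(A, B, C, A^{-1}, B^{-1}, C^{-1})\bigr]$ on polynomials of degree $\leq d$. The inverse axioms are satisfied identically on the support, and property (b) gives $\mathcal{E}\bigl[\bigl((X^T M Y)_{i'j'} Z_{11} - M'_{i'j'}\bigr)\cdot q\bigr] = 0$ for every $q$ of degree $\leq d - 3$.

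The main obstacle is constructing such a measure $\mu$ --- equivalently, showing that the low-degree joint moments of $(A, B, A^{-1}, B^{-1})$ cannot witness the rank discrepancy between $M$ and $M'$. I expect the core technical step to be a switching/resampling lemma: any degree-$d$ polynomial in the entries of $(A, B, A^{-1}, B^{-1})$ touches only $O(d)$ coordinates of each matrix, and one can resample the remaining block in a rank-changing manner without altering the touched moments. Over $\R$, the Sum-of-Squares extension is automatic, because $\mathcal{E}$ is defined as a genuine expectation against a probability measure, so $\mathcal{E}[q^2] \ge 0$ by construction, yielding the $\Omega(n)$ SoS degree lower bound from the same object. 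The theorem's promise of \emph{two} proofs likely corresponds to two routes through the switching lemma: one direct combinatorial resampling argument on the rows and columns of the random $\GL_n$-matrices, and one black-box reduction from an existing PC degree lower bound (for example, the Berkholz--Grohe lower bound for Graph Isomorphism, imported via the standard constant-degree reduction from GI to TI).
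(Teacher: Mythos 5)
Your main construction — tensors supported on a single third-mode slice with $M,M'$ of different ranks, together with a moment-matching measure $\mu$ on $\GL_n(\F)^3$ — cannot work, and the paper itself explains why. Once $T,T'$ are concentrated in the $k=1$ slice and you have derived $Z_{1,k'}=0$ for $k'>1$, you are left with exactly the 2-tensor (matrix equivalence) problem with invertibility axioms, plus a harmless nonzero scalar $Z_{11}$. The paper's Section~\ref{sec:trick} proves that precisely this system admits a \emph{degree-$3$ PC refutation and degree-$4$ Nullstellensatz refutation}: after normalizing $M=I_r\oplus 0$ and $M'=I_{r'}\oplus 0$, one derives both $X_{11}Y_{11}^T=\Id_{r'}$ and $Y_{11}^TX_{11}=\Id_r$ from the invertibility axioms, and the cyclic invariance of the trace gives $r'-r = \Tr(X_{11}Y_{11}^T)-\Tr(Y_{11}^TX_{11})\equiv 0$. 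So your instances are among the easiest, not the hardest, and by the same token the pseudo-expectation you propose cannot exist: any $\mu$ satisfying your moment-matching condition (b) to degree $\geq 4$ would produce a valid degree-$4$ pseudo-expectation for a system with a degree-$4$ refutation, which is a contradiction. There is no switching/resampling lemma to be had here; the trace polynomial is a degree-$2$ obstruction that sees the rank difference.

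Your final sentence gestures at the correct route — black-box reduction from the Berkholz--Grohe PC/SoS lower bound for \textsc{Graph Isomorphism} — and that is in fact how Theorem~\ref{thm:lowerMain} is proved (Section~\ref{sec:lowerboundGI}). But calling this ``the standard constant-degree reduction from GI to TI'' hides where the real work lies: it is not enough that a polynomial-time many-one reduction exists; one must show it is a \emph{low-degree PC reduction} in the sense of Definitions~\ref{def:reduction}--\ref{def:reduction2}, meaning that the target equations, after the variable substitution, can be \emph{derived in low-degree PC} from the source equations. The paper routes GI through \textsc{Monomial Code Equivalence} (Propositions~\ref{prop:GIMon} and~\ref{prop:MonTI}), and most of Section~\ref{sec:lowerboundGI} is devoted to exhibiting explicit constant-degree PC derivations of the monomial, invertibility, and equivalence axioms of the target from the GI axioms of the source. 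That verification — not any clever choice of hard tensor or pseudo-expectation over $\GL_n$ — is the content of the theorem. (The ``two proofs'' in the paper are the GI reduction of Section~\ref{sec:lowerboundGI} and the independent $\Omega(n^{1/4})$ reduction from Random 3XOR in Section~\ref{sec:lowerbound}; neither resembles your proposed resampling argument.)
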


The preceding goes by reduction from known lower bounds on PC for \textsc{Graph Isomorphism} \cite{BerkholzG15, BerkholzG16}, but has the disadvantage (from the tensor point of view) that the resulting tensors are quite sparse: in one direction, one of the slices is supported on an $\Omega(n) \times n$ matrix and all the others slices have support size 1. In a second proof (Section~\ref{sec:lowerbound}), we get a polynomially worse lower bound $\Omega(\sqrt[4]{n})$, but with a reduction from \textsc{Random 3XOR} that is more direct. Indeed, we show that 3XOR itself can be viewed as a particular instance of a tensor problem \emph{without} gadgets; gadgets are only then needed to reduce from that tensor problem to \textsc{Tensor Isomorphism} itself. In contrast, the lower bounds on PC for \textsc{GI} (\emph{ibid.}) already use the Cai--F\"{u}rer--Immerman gadgets \cite{CFI} to reduce from XOR-SAT, and then even further gadgets are needed to reduce from \textsc{GI} to \textsc{TI}. 

Our technical contributions in the above theorem are thus three-fold: 
\begin{enumerate}
\item We show that the known reductions from \textsc{GI} to \textsc{TI} can be carried out in low-degree PC; 
\item We realize 3XOR very naturally as a tensor problem; and 
\item We give new reductions from 3XOR, through a series of tensor-related problems, to \textsc{TI}, that work as many-one reductions of the decision problems that can be carried out in low-degree PC.
\end{enumerate}

Complementing our lower bound, we also show that tensors of low rank are comparatively easy to test for (non)-isomorphism. Here, one of our upper bounds is in the weaker Nullstellensatz proof system (giving a stronger upper bound than only a PC upper bound). In the Nullstellensatz proof system, a proof that a system of equations $f_1 = \dotsb = f_m = 0$ is unsatisfiable consists of polynomials $g_i$ such that $\sum g_i f_i = 1$, and the Nullstellensatz degree is the maximum degree of any $g_i f_i$. The PC degree is always at most the Nullstellensatz degree, and the gap between the two can be nearly maximal for Boolean equations: $O(1)$ versus $\Omega(n / \log n)$ \cite{BCIP}. (For Boolean equations, there is always an $O(n)$ upper bound, though this does not apply to \textsc{TI}, see Remark~\ref{rmk:boolean} below).

\begin{theorem} \label{thm:upper}
Over any field, the Nullstellensatz degree of refuting isomorphism of two $n \times n \times n$ tensors of tensor rank $\leq r$ is at most $2^{O(r^2)}$. If working over a finite field $\F_q$ and including the equations $x^q - x$, the PC degree is at most $O(qr^2)$. 

In particular, isomorphism of constant-rank tensors can be decided in polynomial time.
\end{theorem}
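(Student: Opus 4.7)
The plan is to exploit the rank bound to reduce the isomorphism problem from $O(n^2)$ ambient variables to a system with only $O(r^2)$ essential variables, and then apply an effective Nullstellensatz (or PC) degree bound to that compressed system.

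Fix rank-$\leq r$ decompositions $T = \sum_{i=1}^{r} a_i \otimes b_i \otimes c_i$ and $T' = \sum_{i=1}^{r} a'_i \otimes b'_i \otimes c'_i$; these can be computed outside the proof from the fixed inputs. Let $A_T := \Span\{a_i\}$ and $A_{T'} := \Span\{a'_i\}$ be the mode-$1$ supports (equivalently, column spans of the first flattenings), and define $B_T, B_{T'}, C_T, C_{T'}$ analogously. Each has dimension $\leq r$, and any isomorphism $(X, Y, Z)$ must send $A_T$ to $A_{T'}$, $B_T$ to $B_{T'}$, and $C_T$ to $C_{T'}$, since $X, Y, Z$ map the slices of $T$ to those of $T'$. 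Extending these supports to bases of $\F^n$ and performing the corresponding linear change of variables on $X, Y, Z$, the isomorphism equation decouples: the ``complementary'' blocks of $X, Y, Z$ may be fixed to the identity on their complementary subspaces (which preserves invertibility), while the remaining $3r^2$ essential entries form $r \times r$ matrices $\tilde X, \tilde Y, \tilde Z$ satisfying the compressed isomorphism equation on $\tilde T, \tilde T' \in \F^r \otimes \F^r \otimes \F^r$ together with their own $r \times r$ invertibility equations.

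Next I would apply effective bounds to this compressed system, which has $m = O(r^2)$ variables (plus auxiliary inverse variables) and constant-degree equations. The Brownawell--Kollár effective Nullstellensatz yields a refutation of degree $2^{O(m)} = 2^{O(r^2)}$ over $\overline{\F}$ whenever the system has no common zero. Over $\F_q$, the standard PC upper bound for systems including the constraints $x^q - x = 0$, which force variables to take only $q$ values, gives PC degree $O(qm) = O(qr^2)$. Lifting the compressed certificate back through the (linear) change of basis contributes only a constant factor in degree, so the asymptotic bounds survive.

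The main obstacle will be faithfully transferring the certificate between the full and compressed systems, especially the invertibility conditions. One must produce a low-degree Nullstellensatz or PC derivation showing that the full $n \times n$ inverse equations $XX' = X'X = \Id$ imply, after the change of basis, the corresponding $r \times r$ inverse equations on $\tilde X, \tilde X'$ (and analogously for $Y, Z$); this is essentially a Schur-complement argument and should add at most $O(r^2)$ to the degree, which is absorbed into the final bound. A secondary subtlety is that the rank-$\leq r$ hypothesis is a promise on the inputs rather than an axiom of the system, so the rank decompositions and the change-of-basis matrices are computed externally and baked into the certificate, which is therefore input-dependent.
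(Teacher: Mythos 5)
Your proposal follows the same route as the paper: compute rank-$\leq r$ decompositions of $T,T'$ externally, change basis (a degree-preserving linear substitution, cf.\ the trick in Section~\ref{sec:trick}) to confine both tensors to an $O(r)\times O(r)\times O(r)$ sub-box, reduce to a system in $O(r^2)$ variables, and invoke an effective Nullstellensatz (you cite Brownawell--Koll\'ar, the paper uses Sombra~\cite{sombra}; both give $2^{O(r^2)}$), together with the $x^q-x$ trick for the $O(qr^2)$ PC bound over $\F_q$. The one cosmetic difference is that you extend the mode-$i$ supports of $T$ and of $T'$ to \emph{separate} bases (giving $3r^2$ essential variables), whereas the paper works with the joint span $\Span\{u_1,\ldots,u_r,u'_1,\ldots,u'_r\}$ of dimension $d_i\leq 2r$ in a single basis (giving $\leq 12r^2$ variables); this is the form in which the Inheritance Theorem is most naturally stated, but the two are interchangeable up to constants.

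On the one step you flag as the main obstacle, your proposed resolution has the implication pointing the wrong way. You cannot in general \emph{derive} $\tilde X\tilde X'=I_r$ from $XX'=I_n$ by a Schur-complement manipulation: the $(1,1)$ block of $XX'=I_n$ reads $\tilde X\tilde X'+X_{12}X'_{21}=I_r$, and nothing in the inverse equations alone kills the cross term $X_{12}X'_{21}$. What one actually needs is the reverse passage: to import a low-degree refutation of the compressed system into the full system. The compressed \emph{isomorphism} equations are literally a subset of the full axioms (since $T,T'$ vanish outside the box), but the compressed \emph{invertibility} equations are not, and justifying their availability is exactly the content of the Inheritance Theorem plus the off-block vanishing that it entails --- a point the paper states briefly rather than spells out. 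So your instinct to isolate this step was right, but the fix should be framed as showing that the compressed invertibility equations are derivable using \emph{both} the full invertibility equations \emph{and} the tensor equations (which force the off-diagonal blocks to vanish on the relevant support), not via a Schur complement of $XX'=I_n$ alone.
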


\begin{remark} \label{rmk:boolean}
In many settings in proof complexity, Boolean axioms such as $x_i^2 = x_i$ or $x_i^2 = 1$ are included among the system of equations, and all such unsatisfiable systems of equations can be refuted in degree $O(n)$ ($n=$\# variables). If this were the case here, the above would only be interesting for very small values of $r$. In contrast, the equations for \textsc{TI} do not include any such Boolean axioms, and as such the naive degree upper bound is exponential in the number of variables. For $n \times n \times n$ tensors, this gives an upper bound of $2^{O(n^2)}$ \cite{sombra}, and thus, Theorem~\ref{thm:upper} gives nontrivial upper bounds all the way up to $r \leq n$. (We note that $n \times n \times n$ tensors can have rank up to $\Theta(n^2)$ \cite{lickteigTypical}.) The proof of Theorem~\ref{thm:upper} shows that for rank-$r$ tensors, \textsc{TI} can essentially be reduced to a system of equations in only $O(r^2)$ variables.
\end{remark}

\begin{remark} \label{rmk:rank}
For fixed $r$, testing if an $n \times n \times n$ tensor has rank $\leq r$ can be done in polynomial time, as follows. This will show that the algorithm of Theorem~\ref{thm:upper} genuinely solves the decision problem, and not just a promise problem. Given an $n \times n \times n$ tensor $T$, consider its three $n \times n^2$ flattenings. Use Gaussian elimination to put each such flattening, separately, into reduced row echelon form. If any of these flattenings has rank $> r$, reject. Otherwise, we get from this a list of $3r$ vectors $u_1, \dotsc, u_r, v_1, \dotsc, v_r, w_1, \dotsc, w_r$, such that $T$ lives in the $r \times r \times r$-dimensional space $\Span\{u_1, \dotsc, u_r\} \otimes \Span\{v_1, \dotsc, v_r\} \otimes \Span\{w_1, \dotsc, w_r\}$. Now in this space we can write down the Brent equations \cite{brent} for $T$ to have rank $\leq r$, which will be $r^3$ cubic equations in $3r^2$ variables (Brent's equations \cite[(5.06)]{brent} were specifically for the matrix multiplication tensor, but analogous equations are easily constructed for arbitrary tensors using the same idea). Since $r$ is constant, these equations may be solved in polynomial time (here we assume that we are either working over a finite field, a finite-degree extension of the rationals---see, for example, Grigoriev \cite{grigoriev}---or in the BSS model over an arbitrary field).
\end{remark}

Lastly, one may wonder why we focus on 3-\textsc{Tensor Isomorphism}, and not some of its many related variants. Indeed, just as there are other equivalence notions for matrices---such as conjugacy $XMX^{-1}$ and congruence $XMX^T$---there are many different kinds of multilinear objects that can be represented by multi-way arrays, including tensors, homogeneous polynomials (commutative or noncommutative), alternating matrix spaces, multilinear maps, and so on, each with their own corresponding notion of isomorphism. While these problems are indeed distinct, they are all equivalent under polynomial-time isomorphisms \cite{FGS, GQ}; such problems are called $\mathsf{TI}$-complete. Even isomorphism of $k$-way tensors (for any fixed $k \geq 3$) is equivalent to isomorphism of 3-tensors \cite{GQ}. This partially justifies our focus on \textsc{3-Tensor Isomorphism}. In the course of proving our reductions for the results stated above, we use many of the gadgets from \cite{FGS, GQ}, and show that such uses also often yield proof complexity reductions as well. Because of the variety of gadgets used in our reductions, we believe that many, if not all, of the gadgets from those results would also yield proof complexity reductions, so the proof complexity of all the known $\mathsf{TI}$-complete problems should be polynomially related.

\subsection{Comparison with linear algebra, a new proof system, and a conjecture} \label{sec:conj}
As linear algebra is part of the core toolkit for understanding tensors, it is natural to wonder how linear algebra can help in algebraic proof complexity approaches to \textsc{TI}. We believe that even if it had the ``full power'' of linear algebra at its disposal ``for free,'' PC could still not solve \textsc{TI} efficiently. We begin to make this precise in this section.

Some basic derivations in linear algebra are to relate the ranks of two matrices and to derive $BA=I$ from $AB=I$ (the Inversion Principle, one of the so-called ``hard matrix identities'' \cite{SC}, only recently shown to have short $\mathsf{NC}^2$-Frege proofs \cite{HT}). Soltys \cite{soltys} and Soltys \& Cook \cite{SC} discuss the relationship between these and other standard implications in linear algebra. We show that PC is not strong enough to prove these in low-degree:

\begin{theorem}
The unsatisfiable system of equations $XY=\Id_n$ where $X$ is $n \times r$ and $Y$ is $r \times n$ with $1 \leq r < n$, requires degree $\geq r/2+1$ to refute in PC, over any field.
\end{theorem}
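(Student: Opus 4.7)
The plan is to construct a pseudo-expectation $L : \F[X,Y]_{\le d} \to \F$ satisfying $L(1) = 1$ and $L(h \cdot f_{ij}) = 0$ for every monomial $h$ of degree at most $d-2$ and every $i,j \in [n]$, where $f_{ij} = \sum_k X_{ik} Y_{kj} - \delta_{ij}$. By the usual LP duality for ideal membership, such an $L$ witnesses $1 \notin I_d$ and rules out a PC refutation of degree $d$; producing $L$ for all $d \le r/2$ then yields the claimed lower bound.

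I would build $L$ by averaging over ``partial isomorphism'' assignments. For each $s \in \{0, 1, \ldots, r\}$ and each pair $(S, \pi)$ with $S \subseteq [n]$, $|S| = s$, and $\pi : S \hookrightarrow [r]$ an injection, define $X^\pi_{ik} = 1$ iff $i \in S$ and $\pi(i) = k$, and $Y^\pi_{kj} = 1$ iff $j \in S$ and $\pi(j) = k$ (otherwise $0$). A direct check shows $f_{ij}(X^\pi, Y^\pi) = 0$ for $i \ne j$, while $f_{ii}(X^\pi, Y^\pi) = -\mathbf{1}[i \notin S]$, since the supposed ``partial matching'' fails to cover the missing diagonal. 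Let $L_s$ be the uniform average of $p(X^\pi, Y^\pi)$ over all such $(S, \pi)$ with $|S| = s$, and set $L = \sum_{s=0}^{r} c_s L_s$ for scalars $c_s$ to be determined.

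The $S_n \times S_r$-symmetry of this family collapses the required PC conditions into a small linear system. For $i \ne j$, $L(h f_{ij}) = 0$ is automatic; for $i = j$ with $i$ among the $[n]$-indices $T_h$ of $h$, the indicator $\mathbf{1}[i \notin S]$ forces $h(X^\pi, Y^\pi) \cdot f_{ii}(X^\pi, Y^\pi) = 0$ pointwise. In the remaining case $i = j \notin T_h$, only $t := |T_h|$ matters, and the condition reduces to a single equation $\sum_s c_s \, b(s, t) = 0$ with closed-form coefficients $b(s, t) = \binom{n-t-1}{s-t}(r-t)! / \bigl(\binom{n}{s} r!\bigr)$. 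Together with the normalization $\sum_s c_s = 1$, this is a system of at most $d$ linear equations in $r+1$ unknowns, which is underdetermined whenever $d \le r+1$---leaving a comfortable cushion compared to the claimed threshold $d \le r/2$.

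The main obstacle is verifying consistency of this linear system uniformly over an arbitrary field, especially in small positive characteristic where denominators appearing in $b(s,t)$ may vanish. One strategy is to produce $(c_s)$ explicitly through a M\"obius-type inversion on the poset of partial injections $\{(S,\pi)\}$, yielding integer weights that behave well modulo any prime; a second strategy is to prove the required genericity abstractly by a rank computation on the $b(s,t)$ matrix via a Vandermonde-like identity. In either case, the slack between $r/2+1$ and the constraint-counting bound of $r+1$ leaves ample room for the characteristic-agnostic bookkeeping, which is where the proof is most likely to get technical.
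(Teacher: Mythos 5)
Your approach is fundamentally different from the paper's, and it has a critical gap at the very first step. You claim that exhibiting a linear functional $L:\F[X,Y]_{\le d}\to\F$ with $L(1)=1$ and $L(h\cdot f_{ij})=0$ for all monomials $h$ with $\deg(hf_{ij})\le d$ ``rules out a PC refutation of degree $d$.'' It does not. Such an $L$ witnesses that $1$ is not in the $\F$-span of $\{h f_{ij}:\deg(hf_{ij})\le d\}$, which rules out a degree-$d$ \emph{Nullstellensatz} refutation. A degree-$d$ PC refutation is allowed to derive an intermediate polynomial $g$ of degree $\le d-1$ (as a linear combination that experiences cancellation) and then multiply $g$ by a variable; the result $xg$ need not lie in the span of $\{h f_{ij}:\deg(hf_{ij})\le d\}$. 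The set of degree-$\le d$ PC-derivable polynomials can be strictly larger than the Nullstellensatz span, and indeed the two degree measures can be wildly different (the paper itself notes an $O(1)$ vs.\ $\Omega(n/\log n)$ gap). To rule out PC, one needs a Razborov-style \emph{operator} $R$ on $\F[X,Y]_{\le d}$ satisfying the multiplicative compatibility $R(xm)=R(xR(m))$, not merely a linear functional; the functional is only the ``shadow'' of that operator. Your partial-injection averaging is exactly the right raw material for building such an $R$ (Razborov's PHP lower bound uses precisely this kind of combinatorial structure), but without the operator and its compatibility, you have at best a Nullstellensatz bound.

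You also flag a second gap yourself: the consistency of the $(c_s)$ linear system over every field, especially in small positive characteristic where both the ``uniform average'' normalization $1/\bigl(\binom{n}{s}r!/(r-s)!\bigr)$ and the $b(s,t)$ denominators may vanish. That issue is real and unresolved in your write-up; ``the slack leaves ample room'' is not an argument. For the record, the paper's proof avoids both gaps entirely by taking the opposite route: it gives a low-degree $(1,2)$ PC reduction from the Functional Pigeonhole Principle $PHP^n_r$ to the Rank Principle (via the substitution $x_{ik}=y_{ik}=p_{ik}$, then checking the off-diagonal axioms $\sum_k p_{ik}p_{jk}$ follow from the PHP collision axioms and the diagonal axioms $\sum_k p_{ik}^2-1$ follow from the Boolean and pigeon axioms), and then invokes Razborov's known degree-$(r/2+1)$ PC lower bound for $PHP^n_r$. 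The hard operator-theoretic work is delegated to Razborov rather than re-derived. If you want to make your direct approach work, you would essentially be re-proving Razborov's theorem for a different system; the reduction is much shorter.
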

We refer to this system of equations as the Rank Principle, as refuting them amounts to showing that $\rk \Id_n > r$. 

\begin{theorem}
Any PC derivation of $BA = I$ from $AB=I$, where $A,B$ are $n \times n$ matrices with $\{0,1\}$ entries, requires degree $\geq n/2+1$, over any field.
\end{theorem}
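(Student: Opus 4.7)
My plan is to reduce the problem to the Rank Principle lower bound established in the preceding theorem. The key observation is that although the axioms $AB = I$ together with the Boolean axioms are satisfiable, a carefully chosen restriction turns them into an unsatisfiable Rank Principle instance while simultaneously collapsing one entry of the target $BA - I$ to a nonzero constant. This converts a hypothetical PC derivation of $BA = I$ into a bona fide PC refutation of the restricted system, and since PC degree is preserved under restrictions, the Rank Principle lower bound transfers directly.

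Concretely, I would fix the target polynomial $P := (BA)_{nn} - 1 = \sum_k B_{nk} A_{kn} - 1$ and suppose that PC derives $P$ from the axioms $\{(AB)_{ij} - \delta_{ij}\}_{i,j}$ together with the Boolean axioms $\{A_{ij}^2 - A_{ij},\, B_{ij}^2 - B_{ij}\}$ in degree $d$. I would then apply the restriction $\rho$ sending $A_{i,n} \mapsto 0$ for every $i \in [n]$ (i.e., zeroing out the last column of $A$). Under $\rho$: (i) each axiom $(AB)_{ij} - \delta_{ij}$ becomes $\sum_{k=1}^{n-1} A_{ik} B_{kj} - \delta_{ij}$, which is precisely the Rank Principle system for an $n \times (n-1)$ matrix $X$ and an $(n-1) \times n$ matrix $Y$; (ii) the Boolean axioms restrict either to themselves on the surviving variables or to the trivial $0 = 0$; and (iii) every term $B_{nk} A_{kn}$ in $P$ vanishes under $\rho$, so $P|_\rho = -1$. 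Because PC derivations restrict to PC derivations of no greater degree, the restricted proof is a PC refutation of the Rank Principle with $r = n-1$ (accompanied by stray Boolean axioms on the freed variables $B_{n,j}$, which can be absorbed by the further cost-free restriction $B_{n,j} \mapsto 0$) in degree at most $d$. The preceding theorem then forces $d \geq (n-1)/2 + 1$, matching the claimed $n/2 + 1$ lower bound, exactly for $n$ even.

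The main point requiring care is tightening the constant to exactly $n/2 + 1$ for all parities of $n$: for $n$ odd, the reduction as stated yields $(n+1)/2$, which is a half-unit shy of the claimed bound. I would close this gap either by using a symmetric restriction---zeroing out both a column of $A$ and a row of $B$ and targeting an entry of $BA - I$ whose restriction is still a nonzero constant---or by mimicking the pseudo-expectation argument of the Rank Principle proof directly on the $AB = I$ system, constructing a degree-$(n/2)$ functional that respects every $(AB)_{ij} - \delta_{ij}$ and the Boolean axioms yet assigns a nonzero value to $(BA)_{nn} - 1$. Apart from this constant-tuning the reduction is routine, using only closure of PC degree under restrictions and the Rank Principle lower bound.
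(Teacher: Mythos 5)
Your restriction argument is correct and is genuinely different from the paper's proof. The paper proceeds via Lemma~\ref{lem:inv_to_rank}: it fixes some $n' > n$, examines the upper-left $n\times n$ block $X_0Y_0 = I_n$ of the Rank Principle $\bbI(n,n')$, plugs the hypothetical degree-$d$ Inversion derivation in there as a subroutine to obtain $Y_0X_0 = I_n$, and then cascades this to a degree-$\max\{d,3\}$ refutation of $\bbI(n,n')$; applying Theorem~\ref{thm:rank} with $r=n$ then gives the bound $n/2+1$ exactly. Your approach instead applies the restriction $A_{i,n}\mapsto 0$ to the hypothetical derivation of $(BA)_{nn}-1$, collapsing the axioms $AB=I$ into $\bbI(n-1,n)$ and the target into $-1$, so that the restricted proof is directly a refutation of $\bbI(n-1,n)$. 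This is arguably cleaner and more elementary (it uses only closure of PC degree under restriction), but because you are forced to use $r=n-1$ rather than $r=n$, you only get $(n-1)/2+1$. The paper's construction sidesteps this loss precisely by going to a \emph{larger} Rank Principle instance and applying the Inversion derivation inside it, which a restriction cannot do.

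Regarding your two proposed fixes for the gap: the first (simultaneously restricting a row of $B$ as well and aiming at a different entry of $BA-I$) does not help. After $A_{\cdot,n}\mapsto 0$ and $B_{n,\cdot}\mapsto 0$, the only entry of $BA-I$ that restricts to a nonzero constant is still $(n,n)$, and the surviving axioms are still exactly $\bbI(n-1,n)$, so you remain at $(n-1)/2+1$; the obstruction is structural, since a restriction can only shrink the inner dimension. The second fix (a direct pseudo-expectation argument on $AB=I$) would indeed have to be built from scratch: the paper's lower bound for the Rank Principle is itself obtained by a $(1,2)$-reduction from $PHP^n_r$ followed by Razborov's degree bound, not by exhibiting an explicit admissible functional, so there is no such argument lying around to ``mimic.'' If you want the exact constant $n/2+1$ you should follow the paper's route through Lemma~\ref{lem:inv_to_rank}; if $\Omega(n)$ suffices, your restriction argument is a shorter, self-contained alternative.
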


We also observe that the Rank Principle can be derived in low degree from the Inversion Principle. 

Although it remains open whether the Inversion Principle is ``complete'' for linear-algebraic reasoning (see \cite{soltys, SC}), we introduce the proof system PC+Inv in an attempt to capture some linear-algebraic reasoning that seems potentially useful for \textsc{TI}. PC+Inv has all the same derivation rules as PC, but in addition, for any square matrices $A,B$ (whose entries may themselves be polynomials---that is, we allow substitution instances), we have the rule
\[
\frac{AB=I}{BA=I}.
\]
where the antecedent represents the set of $n^2$ equations corresponding to $AB=I$, and similarly the consequent denotes the set of $n^2$ equations $BA=I$ (see \ref{sec:prelim:inv} for more details). Degree is still measured in the usual way, but this rule lets us ``cut out'' the high-degree proof that would usually be required to derive $BA=I$ from $AB=I$. We now formalize our intuition that linear algebra should not suffice to solve \textsc{TI} efficiently in the following:

\begin{conjecture} \label{conj:inv}
\textsc{Tensor Isomorphism} for $n \times n \times n$ tensors requires degree $\Omega(n)$ in PC+Inv, over any field.
\end{conjecture}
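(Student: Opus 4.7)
The plan is to lift the degree lower bound of Theorem~\ref{thm:lowerMain} from PC to PC+Inv, keeping the reduction from random 3XOR so that the base hardness is well understood. The natural tool is a \emph{pseudo-expectation}: a linear functional $\tilde{\E}$ on polynomials of degree $\leq d$ satisfying $\tilde{\E}[1]=1$ and $\tilde{\E}[g\cdot f]=0$ for every axiom $f$ and every multiplier $g$ with $\deg(g\cdot f)\leq d$. For random 3XOR composed with the tensor gadgets built in the proof of Theorem~\ref{thm:lowerMain}, such a functional with $d$ linear in $n$ (or $d=\Omega(\sqrt[4]{n})$ in the direct reduction) is what underwrites the PC lower bound; the task is to show that the same $\tilde{\E}$, possibly after a mild modification, also vanishes on every polynomial derivable once the Inv rule is added.

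First I would define the degree-$d$ PC+Inv closure of the axioms as the set of polynomials reachable by PC derivations plus applications of Inv in which every intermediate line has degree $\leq d$. The crux is the following structural claim: whenever $A,B$ are $k\times k$ matrices with polynomial entries of moderate degree such that $\tilde{\E}$ already treats each entry of $AB-\Id$ as zero in the degree-$d$ sense, then $\tilde{\E}$ also vanishes on each entry of $BA-\Id$. I see two routes to such a claim. The first is a \emph{truncated Neumann series} calculation that re-derives $BA=\Id$ from $AB=\Id$ inside low-degree PC, under the hypothesis that $(\Id-BA)^{\ell}$ has sufficiently small $\tilde{\E}$-norm for small $\ell$. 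The second, more robust route is to \emph{classify} the pairs $(A,B)$ that can arise in any PC+Inv proof from our axioms, with the working guess that up to low-degree left/right multiplication one may restrict to $A,B$ built polynomially from the axiom matrices $X,X',Y,Y',Z,Z'$, for which $BA=\Id$ is already either an axiom or a low-degree PC consequence.

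The main obstacle, and the reason even general PC+Inv lower bounds are left open in this paper, is that the Inv rule is genuinely nonlocal: the implication $AB=\Id\Rightarrow BA=\Id$ itself requires degree $\geq n/2+1$ in PC for explicit $\{0,1\}$-valued $A,B$ (the theorem just above the conjecture), so there is no automatic algebraic reason for $\tilde{\E}[BA-\Id]$ to vanish from the fact that $\tilde{\E}[AB-\Id]$ does. Overcoming this will likely require either (a) redesigning the pseudo-expectation to be ``invariant-theoretic'' -- for instance, to arise from an actual measure on matrix tuples $(X,X',Y,Y',Z,Z')$ globally satisfying all invertibility relations -- so that closure under Inv holds by construction; or (b) a combinatorial argument tailored to the 3XOR-derived instances showing that no witness $(A,B)$ to a failure of Inv-closure can be assembled in low degree from the underlying XOR structure.

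A secondary issue that I would dispose of at the outset is that PC+Inv applies Inv to \emph{arbitrary} substitution instances, so the collection of $(A,B)$ to which the rule may be applied is formally infinite. Here I would argue by normalization: up to left/right multiplication by scalar-entry matrices one may restrict to a finite family of canonical $(A,B)$ of each degree, and it then suffices to verify the structural claim on that family. Making both the finiteness step and the structural claim precise, over an arbitrary field, looks like the bulk of the work, and on present evidence it is not at all clear that either step can be carried out; substantial progress on (a) or (b) above would probably also yield the very first nontrivial PC+Inv lower bound for \emph{any} unsatisfiable system, which the authors single out as an open problem.
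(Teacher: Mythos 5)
This statement is labeled a \emph{conjecture} in the paper, and the authors explicitly say, immediately after stating it, that they do not know how to prove a PC+Inv lower bound for \emph{any} unsatisfiable system, let alone for \textsc{TI}. So there is no proof in the paper to compare your attempt against. Your write-up is accordingly not a proof but a research plan, and you are appropriately candid about that: you flag the key obstacle (closure of a pseudo-expectation under the Inv rule), you point out that the $\Omega(n)$ PC lower bound on the Inversion Principle itself is precisely what blocks a naive lift, and you acknowledge at the end that ``on present evidence it is not at all clear that either step can be carried out.'' In short, you have correctly diagnosed that this is open and identified why.

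Two cautions on the specific routes you sketch. For route (a), an ``actual measure on matrix tuples $(X,X',Y,Y',Z,Z')$ globally satisfying all invertibility relations'' cannot also satisfy the tensor equations, since the full system is unsatisfiable; you would therefore need a genuine pseudo-distribution that is Inv-closed only with respect to pairs $(A,B)$ witnessable from the invertibility axioms, which reintroduces the very closure problem you are trying to avoid. For route (b), the normalization/finiteness step is the hard part: PC+Inv applies Inv to \emph{arbitrary} polynomial substitution instances of any size $k$, so without a real structural theorem restricting the $(A,B)$ that can appear in a low-degree derivation, there is no a priori bound on the family you would have to check, and the fact that the Inversion Principle is itself PC-hard means there is no generic degree-local argument that the pseudo-expectation respects each application of the rule. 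Both of these are fair things to flag as directions, but neither is a step you (or anyone, per the paper) currently knows how to execute, and your proposal does not claim otherwise.
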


Despite the conjecture, we do not yet know how to prove lower bounds on PC+Inv for any unsatisfiable system of equations, let alone those coming from \textsc{TI}. Mod $p$ counting principles (for $p$ different from the characteristic of the field) strike us as potentially interesting instances to examine for PC+Inv lower bounds, before tackling a harder problem like \textsc{TI}. In the final section, we highlight many other open questions around the proof complexity of \textsc{TI}.

\subsection{Organization}
In Section~\ref{sec:prelim} we cover preliminaries. In Section~\ref{sec:matrix} we prove the lower bounds on linear algebraic principles just discussed. In Section~\ref{sec:upper} we prove the upper bound for isomorphism of bounded rank tensors (Theorem~\ref{thm:upper}). In Section~\ref{sec:lowerboundGI} we prove Theorem~\ref{thm:lowerMain} by reduction from \textsc{GI}. In Section~\ref{sec:lowerbound} we prove the polynomially related lower bound by direct reduction from \textsc{Random 3XOR}.


\section{Preliminaries} \label{sec:prelim}

\subsection{Proof systems}
All our rings are commutative and unital. {\em Polynomial calculus} (PC) is a proof system to prove that a given system of (multivariate) polynomial equations ${\cal P}$ over a field $\mathbb F$ of the form $p=0$, has no solution over the algebraic closure $\overline{\F}$ (i.e. the system is unsolvable). We usually shorten the polynomial equation $p=0$ to just $p$.
The derivation rules of the system are the following one:
$$
\frac{p}{xp} \; \mbox{(multiplication)}, \quad \frac{p\quad q}{ap+bq} \; \mbox{(linear combination)} 
$$
where $x$ is any formal variable, $a,b \in \mathbb{F}$ and $p,q$ are polynomials over $\mathbb F$.  

When refuting Boolean systems of equations it is common to include the Boolean axioms $x_i^2 - x_i$. Because we do \emph{not} always include these (esp. for \textsc{TI}) we are explicit about our use of these, but do not assume they are built into the proof system---that is, if we are assuming them as axioms, we say so.

A {\em PC derivation (or proof)}  of a  polynomial $q$ from a set of polynomials ${\cal P}$ is a sequence of polynomial equations $p_1,\ldots, p_{m}$ ending with the polynomial $q$ (so $p_m$ is $q$) and where each $p_i$, $i \in [m]$, is either an {\em axiom} $p$ for $p \in  {\cal P}$, or is obtained from previous equations in the refutation by multiplication or linear combination.  We denote this by writing  ${\cal P}\vdash q$.  Observe that if $p$  is derivable in  PC and $q$ is a polynomial then, by repeated applications of multiplication and linear combination rules, we can derive
$pq$. We often use this generalization of the multiplication in our proofs without mention.  

A {\em PC refutation}  is just a PC proof  of the polynomial $1$.   The {\em degree} of a PC derivation is the maximal degree of a polynomial used in the proof. The {\em size} of a polynomial $p$ is the number of  terms in $p$.  The {\em size} a PC derivation $p_1,\ldots, p_m$  is the sum of the sizes of the polynomials $p_1,\ldots,  p_m$.

\medskip

For our upper bound in Theorem~\ref{thm:upper}, we also consider another algebraic proof system, known as {\em Nullstellensatz} (NS), to certify unsolvability of sets of polynomial equations. Nullstellensatz  is defined in a static form as follows:  a  refutation of a list  ${\cal P}=(p_1,\ldots,p_m)$ of polynomial equations over variables $x_1\ldots,x_n$ 
is given by the list of polynomials  ${\cal Q}=(q_1,\ldots q_m)$  such that
$$
\sum_{i \in [m]} p_iq_i =1
$$

The {\em degree} of a NS refutation is the maximal degree of a polynomial in ${\cal P} \cup {\cal Q}$. The size of $NS$ proof is the sum of the number of monomials appearing in the polynomials
$q_1,\ldots, q_m$.

\medskip

{\em Sum-of-Squares} (SOS) is a static proof system for certifying the unsolvability of systems of polynomial equations and polynomial inequalities, where polynomials are usually over the ring $\mathbb R[x_1\ldots, x_n]$.    

A polynomial $p$ is a {\em sum-of-squares} polynomial if it is in the form $p=\sum_{i} r_i^2$ and the $r_i$'s  are polynomials as well.  Given a system made by a set of polynomial equations  
 ${\cal P} =\{p_1=0,\ldots p_m=0\}$ and a set ${\cal Q}=\{q_1\geq 0,\ldots q_k\geq 0\}$ of polynomial inequalities, a {\em sum-of-squares} proof of the polynomial inequality $p\geq 0 $ from ${\cal P} \cup {\cal Q}$  is  given by the  formal identity
 $$
 p= s_0+\sum_{i\in [k]}s_iq_i + \sum_{j \in [m]}t_jp_j
 $$
where $s_0,s_1,\ldots, s_k$ are sum-of-squares polynomials, while $t_1,\ldots,t_m$ are arbitrary polynomials.  When the system  ${\cal P} \cup {\cal Q}$  is unsatisfiable, a {\em refutation} of ${\cal P} \cup {\cal Q}$ is a proof of the inequality $-1\geq 0$, that is for  $p$ the constant polynomial $-1$. The {\em degree} of the proof is 
the $\max \{\deg(p),\deg(s_0), \deg(s_i)+\deg(q_i), \deg(t_j)+\deg(p_j) | i \in [k], j \in[m]\}$.

\medskip 

\begin{definition}[PC reduction between systems of polynomials, cf. {\cite[Sec.~3]{DBLP:journals/jcss/BussGIP01}}] \label{def:reduction}
Let $P(x_1,\ldots,x_n)$ and $Q(y_1,\ldots, y_m)$ be two sets of polynomials over a field $\mathbb F$.  
$P$ is \emph{$(d_1,d_2)$-reducible} to $Q$ if:
\begin{enumerate}
\item For each $i \in [m]$ there is a polynomial $r_i(\mathbf{x})$ of degree at most $d_1$  (which we think of as defining $y_i$ in terms of the $\mathbf x$ variables);
\item There exists a degree $d_2$ PC derivation of $Q(r_1(\mathbf x),\ldots, r_m(\mathbf x) )$ from polynomials $P(\mathbf x)$.
\end{enumerate}
\end{definition}

\begin{lemma}[{\cite[Lem.~1]{DBLP:journals/jcss/BussGIP01}}]
\label{lem:PCred}
If $P(\mathbf{x})$ is $(d_1, d_2)$-reducible to $Q(\mathbf{y})$ and there is
a degree $d$ PC refutation of $Q(\mathbf{y})$, then there is a degree $\max(d_2, d_1 d)$ 
refutation of $P(\mathbf x)$.
\end{lemma}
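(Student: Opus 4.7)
The plan is to take a PC refutation of $Q(\mathbf{y})$ and lift it to a PC refutation of $P(\mathbf{x})$ by substituting each $y_i$ with the defining polynomial $r_i(\mathbf{x})$, simulating each original inference at the cost of at most a multiplicative factor of $d_1$ in degree, while using the reducibility hypothesis to reintroduce the substituted axioms from $P$.

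Concretely, let $\pi = q_1, q_2, \ldots, q_M = 1$ be a PC refutation of $Q(\mathbf{y})$ with $\deg(q_i) \leq d$, and write $q_i' := q_i(r_1(\mathbf{x}), \ldots, r_m(\mathbf{x}))$. Since $\deg(r_j) \leq d_1$, we have $\deg(q_i') \leq d_1 d$. I want to produce, by induction on $i$, a PC derivation of $q_i'$ from $P(\mathbf{x})$ of degree at most $\max(d_2, d_1 d)$. There are three cases matching the three ways $q_i$ can arise in $\pi$:
\begin{itemize}
\item[(i)] If $q_i$ is an axiom of $Q$, then $q_i'$ is precisely one of the polynomials in $Q(r_1(\mathbf{x}),\ldots,r_m(\mathbf{x}))$, which by hypothesis (2) of Definition~\ref{def:reduction} admits a PC derivation from $P(\mathbf{x})$ of degree at most $d_2$.
\item[(ii)] If $q_i = a q_j + b q_k$ is obtained by linear combination, then $q_i' = a q_j' + b q_k'$, so the same rule applies in the simulated proof, producing a line of degree at most $d_1 d$.
\item[(iii)] If $q_i = y_\ell \cdot q_j$ is obtained by multiplication, then $q_i' = r_\ell(\mathbf{x}) \cdot q_j'$. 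Write $r_\ell = \sum_t c_t m_t$ as a sum of monomials $m_t$ in $\mathbf{x}$ with $\deg(m_t) \leq d_1$. Then $m_t \cdot q_j'$ can be derived from $q_j'$ by $\deg(m_t)$ successive applications of the variable-multiplication rule, and $r_\ell \cdot q_j'$ results from a single linear combination of the $m_t \cdot q_j'$. Since $\deg(q_j) \leq d - 1$ (as $\deg(q_i) \leq d$), we have $\deg(m_t \cdot q_j') \leq d_1 + d_1(d-1) = d_1 d$.
\end{itemize}

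Combining the three cases, every line in the simulated derivation has degree at most $\max(d_2, d_1 d)$, and the final line is $q_M' = 1$, giving a PC refutation of $P(\mathbf{x})$ of the claimed degree. The only step that involves any genuine work is case~(iii), the simulation of multiplication, since the other two cases are essentially syntactic; the key observation there is that multiplication by a degree-$d_1$ polynomial can always be broken down into single-variable multiplications followed by a linear combination without exceeding the bound $d_1 d$, precisely because $q_j$ had degree strictly less than $d$ before the original multiplication step.
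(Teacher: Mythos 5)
Your proof is correct and is the standard substitution-simulation argument; the paper simply cites this lemma from Buss--Grigoriev--Impagliazzo--Pitassi without reproving it, and your argument matches the proof that reference gives. The one point worth flagging as genuinely load-bearing in case (iii) — which you do correctly identify — is that $\deg(q_j) \le d-1$ (not merely $\le d$) is what keeps the intermediate lines $m_t \cdot q_j'$ within $d_1 d$ after the monomial-by-monomial multiplication; without that observation the naive bound would be $d_1(d+1)$.
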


In their paper, they typically only applied this to systems of equations which were known to be unsatisfiable (such as PHP and Tseitin tautologies), whereas in our paper we have several situations we want to combine the above notion together with the usual notion of many-one reduction. We encapsulate this in the following definition. We say a decision problem $\Pi$ is a \emph{polynomial solvability problem} over a field $\F$ if all valid instances of the problem are systems of polynomial equations over $\F$, and the problem is to decide whether such a system of equations has solutions over the algebraic closure $\overline{\F}$. Thus, the difference between multiple polynomial solvability problems is just \emph{which} systems of equations are valid inputs.

\begin{definition}[PC many-one reduction] \label{def:reduction2}
Let $\Pi_1, \Pi_2$ be two polynomial solvability problems over a field $\F$. We say that $\Pi_1$ \emph{$(d_1,d_2)$-many-one reduces} to $\Pi_2$ if there is a polynomial-time many-one reduction $\rho$ from $\Pi_1$ to $\Pi_2$, such that for all unsatisfiable instances $\mathcal{F}$ of $\Pi_1$, $\mathcal{F}$ $(d_1, d_2)$-reduces to $\rho(\mathcal{F})$. When this occurs with $d_1,d_2 = O(1)$, we write \[\Pi_1 \leq_m^{PC} \Pi_2.\]
\end{definition}

\subsection{Linear algebra and tensors}
Given three vector spaces $U,V,W$ over a field $\F$, a 3-tensor is an element of the vector space $U \otimes V \otimes W$, whose dimension is $(\dim U)(\dim V)(\dim W)$. If $e_i$ is the $i$-th standard basis vector, then a basis for $U \otimes V \otimes W$ is given by the vectors $\{e_i \otimes e_j \otimes e_k\}$. One may also interpret the symbol $\otimes$ more concretely as the Kronecker product, in which $e_i \otimes e_j \otimes e_k$ represents a 3-way array whose only nonzero entry is in the $(i,j,k)$ position. The vector space of such 3-way arrays (with coordinate-wise addition) is isomorphic to $U \otimes V \otimes W$.

The rank of a tensor $T \in U \otimes V \otimes W$ is the minimum $r$ such that $T = \sum_{i=1}^r u_i \otimes v_i \otimes w_i$ for some vectors $u_i, v_i, w_i$. 

Two $n \times m \times p$ 3-tensors $T,T' \in U \otimes V \otimes W$ are isomorphic if there exist matrices $X \in \GL(U), Y \in \GL(V), Z \in \GL(W)$ such that $(X,Y,Z) \cdot T = T'$, where the latter is shorthand for (\ref{eq:TI}).
If we treat $T,T'$ as given non-isomorphic tensors, then we may treat (\ref{eq:TI}) as a system of equations in the $n^2 + m^2 + p^2$ variables $X_{ii'}, Y_{jj'}, Z_{kk'}$. To enforce that these variable matrices are invertible, we furthermore introduce three additional sets of variables $X', Y', Z'$ meant to be the inverse matrices, and include also the equations 
\[
XX' = X'X = I_n \qquad YY' = Y'Y = I_m \qquad ZZ' = Z' Z = I_p,
\]
where $I_n$ denotes the $n \times n$ identity matrix, which is $\Id_U$ in any basis. (We could have instead introduced new variables such as $\delta$ and the equation $\det(X) \delta = 1$, however, the latter equation is degree $n$, whereas the above equations all have degree $O(1)$, which is more desirable from the point of view of algebraic proof complexity.)

\subsection{Polynomial encodings and the inversion principle} \label{sec:prelim:inv}
Some  principles of linear algebra can be formulated as tautologies in propositional logic and therefore also as a set of polynomial equations.
In this paper we preliminarily consider two such principles. 

\medskip

\noindent {\bf Rank Principle.} As a first example we consider a set of unsatisfiable polynomials encoding the principle that  the product of a $n\times r$ matrix $X$ by a $r\times n$ matrix $Y$ cannot be the identity matrix whenever $r<n$.  We consider variables $x_{i,k},y_{j,k}$ for $i,j\in [n]$ and $k\in [r]$, where $r< n$ to encode $X$ and $Y$. Then the polynomial encoding is:
$$\mathbb I(r,n) := \sum_{k\in[r]} x_{i,k}y_{j,k}-\delta_{i,j} \quad \quad i,j\in [n]$$
where $\delta_{i,j}=1$ if $i=j$ and $0$ otherwise. This set of polynomials is clearly unsatisfiable as long as $r<n$. 

\medskip

\noindent {\bf Inversion Principle.} The second principle encodes the invertibility of a square $n\times n$ matrix $A$,  expressing the tautology that
 $AB=I \rightarrow BA=I$ where $A,B$ are $n\times n$ matrices and $I$ is the identity matrix. Stephen A. Cook suggested this principle as a tautology that may be hard to prove in several proof systems. 
 
Let $a_{i,j},b_{i,j}$ be formal variables 
 encoding respectively  the $(i,j)$-th entries of $A$ and $B$.  We represent the fact that $AB=I$ as the set of degree $2$ polynomials

$$
\sum_{k \in [n]} a_{i,k}b_{k,j} - \delta_{i,j}\qquad i,j \in [n],
$$
where $\delta_{i,j}=1$ if $i=j$ and $0$ otherwise. We denote this set of polynomials by $AB=I$. In Section \ref{sec:matrix}, we study the degree complexity of  $AB=I \vdash BA=I$, that is of PC derivations of the polynomials $BA=I$ from the polynomials $AB=I$. 

\medskip

In view of the results we obtain in Section \ref{sec:matrix}, in Section~\ref{sec:conj} we considered a  {\em polynomial rule schema} of the form 
$$
\frac{AB=I}{BA=I}
$$
which we call the {\em Inversion Rule} (INV) meant to be added to $PC$ as an extra rule. We make this slightly more precise here.

A polynomial instantiation $\tau$ of the polynomials $AB=I$ is a substitution of polynomials $p_{i,j},q_{i,j}$ to variables $a_{i,j}$ and $b_{i,j}$.   
In PC+INV a polynomial $p$ is derivable from a set of polynomials
${\cal P}$ if 
\begin{enumerate}
\item $p$ is an axiom, or $p \in {\cal P}$;
\item $p$ is obtained by multiplication or linear combination from previous polynomials in the proof;
\item $p$ is a polynomial among a polynomial instantiation $\tau$ of $BA=I$, given that among the polynomials previously derived  in the proof there are all the polynomials 
forming the instantiation $\tau$ of $AB=I$.
\end{enumerate}

\medskip

\noindent {\bf Pigeonhole Principle.} An important role in proving the results in Section \ref{sec:matrix} is played by the well-known {\em Pigeonhole principle} stating that  any function $f$ from $[n]$ to $[r]$ with $r<n$ has a collision, that is there are $i \not= i' \in [n]$ and a $j\in[r]$ such that $f(i)=f(i')=j$. $PHP^n_r$ is the set of polynomials:

$$
\sum_{k\in [r]}p_{i,k} -1, \mbox{ for } i \in [n], \qquad \qquad p_{i,k}p_{j,k}, \mbox{ for } i \not= j \in [n], k \in [r] \qquad p_{ij}^2 - p_{ij} \text{, for $i \in [n], j \in [r]$} 
$$

Razborov \cite{DBLP:journals/cc/Razborov98} additionally included the ``functional equations'' (encoding that each pigeon cannot be matched to more than one hole):
\[
p_{i,k} p_{i,k'}, \mbox{ for } i \in [n], k \neq k' \in [r].
\]

\medskip


\section{Linear algebra warm-up: PC for matrices}
\label{sec:matrix}
Two matrices $M,M' \in U \otimes V$ are isomorphic as tensors if they are equivalent as matrices, meaning under left- and right-multiplication by invertible matrices $X \in \GL(U), Y \in GL(V)$, that is, 
\[
XMY = M'. 
\]

Since we want $X,Y$ to be invertible, we also introduce variable matrices $X', Y'$ as before, together with the equations
\[
XX' = X'X = \Id_U \qquad YY' = Y'Y = \Id_V.
\]
Then by left multiplying our initial matrix equation by $Y'$, we may replace it with the new matrix equation
\[
XM = M'Y'.
\]
The latter has the advantage of being linear in $X$ and $Y'$, but the quadratic equations $XX' =  \Id_U, YY' =  \Id_V$ still make even this case not totally obvious.

\subsection{A trick for PC degree} \label{sec:trick}
If our focus is on PC \emph{degree}, we note that the degree of the equations is unchanged if we first left- or right-multiply $M,M'$ by invertible scalar matrices. For example, if we replace $M$ by $\overline{M} = AMB$ with $A,B \in \GL(U)$, then we may replace $X$ by $\overline{X} := XA^{-1}$, $Y$ by $\overline{Y} := B^{-1} Y$. Then we have $\overline{M} \cong M$, so $\overline{M} \cong M'$ iff $M \cong M'$. Furthermore, since the transformation $X \mapsto XA^{-1}$, $Y \mapsto B^{-1} Y$ is linear and invertible, any PC proof that $\overline{M} \not\cong M'$ can be transformed by the inverse linear transformation into a PC proof that $M \not\cong M'$ of the same degree.

Now, for matrices under this equivalence relation, we have a normal form, namely every matrix $M$ is equivalent to a diagonal matrix with $\rk(M)$ 1s on the diagonal and all the remaining entries $0$, that is, $\sum_{i=1}^{\rk(M)} e_i \otimes e_i = I_r \oplus 0$, where the latter $0$ denotes a $0$ matrix of appropriate size $(n-r) \times (m-r)$. So by using the preceding trick, we may put both $M$ and $M'$ in this form. The two are isomorphic iff $\rk(M) = \rk(M')$, so for PC degree we have now reduced to the case of showing that $I_r \oplus 0$ and $I_{r'} \oplus 0$ are not isomorphic when $r \neq r'$. 

Note that, aside from the equations saying $X$ and $Y$ are invertible, this is almost identical to the Rank Principle (see Section~\ref{sec:prelim:inv}). In the rest of this section we will prove PC lower bounds on both the Rank Principle and the Inversion Principle. Here, we show that the addition of these invertibility axioms in fact makes 2TI much easier in PC than the Rank Principle or 3TI.

\begin{proposition}
Let $M,M'$ be two $n \times m$ matrices of  ranks $r,r'$ respectively, with $r' > r$. Then, over any field whose characteristic does not divide $r'-r$, the following equations have a degree 3 PC refutation and a degree 4 NS refutation:
\[
XMY^T = M' \qquad XX' = X' X = \Id_n \qquad YY' = Y' Y = \Id_m.
\]
\end{proposition}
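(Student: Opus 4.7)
The plan is to invoke the normal-form trick of Section~\ref{sec:trick}: by left- and right-multiplying $M$ and $M'$ by suitable invertible scalar matrices (and conjugating $X,Y$ accordingly), one reduces WLOG to the case $M = I_r \oplus 0$ and $M' = I_{r'} \oplus 0$. The main axiom $XMY^T = M'$ then reads coordinate-wise as $\sum_{\ell=1}^{r} X_{i\ell} Y_{j\ell} - [i = j \le r'] = 0$. Setting $i = j$ and summing over $i \in [r']$ yields, in degree~$2$, the identity $\Phi \equiv r'$ modulo the ideal $I$ of axioms, where $\Phi := \sum_{i \in [r'],\, \ell \in [r]} X_{i\ell} Y_{i\ell}$.

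The heart of the refutation is to derive $\Phi \equiv r$ by a second route, for then $r' - r$ lies in $I$, and by the characteristic hypothesis it can be scaled to $1$. For this I derive, in degree~$3$, the two ``half-inverted'' matrix identities $X'M' \equiv MY^T$ and $XM \equiv M'Y'^T$ by left- and right-multiplying the main axiom by $X'$ and $Y'^T$ respectively (invoking $X'X = \Id_n$ and $Y'Y = \Id_m$, noting that $Y^T Y'^T = (Y'Y)^T$); each such derivation multiplies the degree-$2$ main axiom by a single variable, sums, and cancels against a single-variable multiple of the appropriate invertibility axiom. Componentwise, these yield the degree-$1$ consequences $X'_{ik} \equiv Y_{ki}$ for $i \le r,\, k \le r'$ and $X_{ki} \equiv 0$ for $i \le r,\, k > r'$ (among other equalities, but these two suffice).

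I then apply the axiom $\sum_k X'_{ik} X_{ki} \equiv 1$ (the $(i,i)$-entry of $X'X = \Id_n$) for each $i \le r$, split the sum on $k \le r'$ vs.\ $k > r'$, and multiply each derived degree-$1$ identity above by $X_{ki}$. This replaces $X'_{ik} X_{ki}$ by $Y_{ki} X_{ki}$ when $k \le r'$ and annihilates it when $k > r'$, yielding $\sum_{k \le r'} X_{ki} Y_{ki} \equiv 1$; summing over $i \le r$ and relabeling gives $\Phi \equiv r$. Combined with $\Phi \equiv r'$, the constant $r'-r$ lies in $I$, and multiplying by $(r'-r)^{-1}$ refutes $1=0$. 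Every multiplier used is a single variable and every axiom has degree~$2$, so the PC derivation never exceeds degree~$3$; for Nullstellensatz, the final substitution multiplies the already degree-$3$ axiom combination by an extra variable $X_{ki}$, yielding a static refutation of degree~$4$. The main bookkeeping obstacle is to verify that the case splits align precisely with the derived relations and that the two expressions for $\Phi$ really are syntactically the same polynomial after relabeling $(i,\ell) \leftrightarrow (k,i)$.
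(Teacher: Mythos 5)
Your proposal is correct and follows essentially the same strategy as the paper: reduce to the normal forms $M = I_r \oplus 0$, $M' = I_{r'}\oplus 0$ via the linear-substitution trick, derive a ``transposed'' version of the main equation to show the quantity $\Phi = \mathrm{Tr}(Y_{11}^T X_{11})$ equals both $r'$ and $r$ modulo the ideal, and conclude from $r' - r \neq 0$. The only (minor) deviation is in the path to the trace identity $\Phi \equiv r$: the paper derives the full block equation $Y_{11}^T X_{11} = \Id_r$ by first proving $X'M'(Y')^T = M$ together with $X'_{11} = Y_{11}^T$ and $(Y'_{11})^T = X_{11}$, whereas you work only with the diagonal entries $(X'X)_{ii}$ for $i \le r$, together with $X'_{11} = Y_{11}^T$ (from $X'M' = MY^T$) and $X_{21} = 0$ (from $XM = M'Y'^T$); this is a slight streamlining that avoids one of the three intermediate derivations but lands at the same degree bounds of $3$ for PC and $4$ for NS.
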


For those familiar with the low-degree PC proof of the functional onto-PHP, the following proof is similar. 
\newcommand{\Tr}{\text{Tr}}

\begin{proof}[Proof idea]
By the observations in Section~\ref{sec:trick}, we may assume without loss of generality (from the point of view of PC degree) that $M = \Id_r \oplus 0_{n - r \times m-r}$ and $M' = \Id_{r'} \oplus 0_{n-r' \times m-r'}$. 

Write $X = \begin{bmatrix} X_{11} & X_{12} \\ X_{21} & X_{22} \end{bmatrix}$ where the top-left block $X_{11}$ has size $r' \times r$, and similarly write $Y = \begin{bmatrix} Y_{11} & Y_{12} \\ Y_{21} & Y_{22} \end{bmatrix}$ where $Y_{11}$ has size $r' \times r$. In this notation, the matrix equation $XMY^T = M'$ becomes the equations
\begin{align*}
XMY^T & = \begin{bmatrix} X_{11} & X_{12} \\ X_{21} & X_{22} \end{bmatrix} \begin{bmatrix} \Id_r \\ & 0_{(n-r) \times (m-r)} \end{bmatrix} \begin{bmatrix} Y_{11}^T & Y_{21}^T \\ Y_{12}^T & Y_{22}^T  \end{bmatrix} \\
 & = \begin{bmatrix} X_{11} & 0 \\ X_{21} & 0 \end{bmatrix} \begin{bmatrix} Y_{11}^T & Y_{21}^T \\ 0 & 0 \end{bmatrix} \\
 & = M' = \begin{bmatrix} \Id_{r'} \\ & 0_{(n-r') \times (m-r')} \end{bmatrix}
\end{align*}
which becomes the four matrix equations
\begin{equation} \label{eq:2}
X_{11} Y_{11}^T = \Id_{r'} \qquad X_{11}Y_{21}^T = 0 \qquad X_{21}Y_{11}^T = 0 \qquad X_{21}Y_{21}^T = 0.
\end{equation}
Note that so far our PC proof hasn't actually done anything---it is all just notation, and all in the same degree we started with (degree 2).

Then, using the equations $XX' = \Id_n$ and $YY'=\Id_m$, we will derive that $Y_{11}^T X_{11} = \Id_r$. Then we derive 1 as 
\[
\frac{1}{r-r'}\left(\Tr(X_{11} Y_{11}^T - \Id_{r'}) - \Tr(Y_{11}^T X_{11} - \Id_r)\right).
\]
The point here is that trace is additive and cyclically invariant, so $\Tr(X_{11} Y_{11}^T) \equiv \Tr(Y_{11}^T X_{11})$, identically as polynomials, so there is no further derivation needed.
\end{proof}

\begin{proof}
The proof starts using the first part of the proof idea above, so we continue from Equation~(\ref{eq:2}) with the notation introduced above.
In the remainder of the proof, we will derive $Y_{11}^T X_{11} = \Id_r$. Then the last paragraph of the proof idea will complete the proof.

To derive $Y_{11}^T X_{11} = \Id_r$, we will use the invertibility equations (those involving $X'$ and $Y'$). Write $X' = \begin{bmatrix} X'_{11} & X'_{12} \\ X'_{21} & X'_{22} \end{bmatrix}$, where $X'_{11}$ has size $r \times r'$ (NB: the size is the ``transpose'' of the size of $X_{11}$) and similarly for $Y'$. 

From considering the upper-left $r \times r$ block of the matrix equation $X' X = \Id_n$, we get
\[
X'_{11} X_{11} + X'_{12} X_{21} = \Id_r.
\]
Right multiplying by $Y_{11}^T$, we get
\[
X'_{11} X_{11} Y_{11}^T + X'_{12} X_{21} Y_{11}^T = Y_{11}^T. 
\]
But now we can subtract from this $X'_{11}$ times the equation $X_{11} Y_{11}^T = \Id_{r'}$, and also $X'_{12}$ times the equation $X_{21} Y_{11}^T = 0$ to get
\begin{equation}
X'_{11} = Y_{11}^T.
\end{equation}

Similarly, considering the upper-left $r \times r$ block of the matrix equation $Y' Y = \Id_m$, we get $Y'_{11} Y_{11} + Y'_{12} Y_{21} = \Id_{r}$. For consistency with the notation above, we take the transpose of this entire equation (in PC, this is essentially a null-op---we are just re-arranging how we are viewing a set of $(r')^2$ equations on the page), to get:
\[
Y_{11}^T (Y'_{11})^T + Y_{21}^T (Y'_{12})^T = \Id_r.
\]
Left multiplying by $X_{11}$, we get
\[
X_{11} Y_{11}^T (Y'_{11})^T + X_{11} Y_{21}^T (Y'_{12})^T = X_{11}.
\]
Now, right-multiplying the equation $X_{11} Y_{11}^T = \Id$ by $(Y'_{11})^T$, and right-multiplying the equation $X_{11} Y_{21}^T = 0$ by $(Y'_{12})^T$ and subtracting both of these from the above, we get
\begin{equation}
(Y'_{11})^T = X_{11}.
\end{equation}

Next, we derive $M - X' M' (Y')^T=0$, as follows: left-multiplying $XMY^T - M'$ by $X'$, and subtract from it $(X'X-I)$ times $MY^T$, to get $-X'M' + MY^T$. Now right-multiply the latter by $(Y')^T$ and subtract from it $M$ times $(Y^T (Y')^T - I)$, yielding $-X'M'(Y')^T + M$. Now multiply by $-1$.

Now, from $X' M' (Y')^T = M$, as at the beginning of the proof, we derive that $X'_{11} (Y'_{11})^T = \Id_r$. But above we have derived that $X'_{11} = Y_{11}^T$ and $(Y'_{11})^T = X_{11}$, so from the preceding three equations we get $Y_{11}^T X_{11} = \Id_r$, as claimed. This completes the PC proof.

Let us unroll the PC proof to derive a Nullstellensatz proof (here we underline the use of original equations):
\begin{align*}
r - r' & = \Tr(\underline{X_{11} Y_{11}^T - \Id_{r'}}) - \Tr(Y_{11}^T X_{11} - \Id_r)  \\
\end{align*}
Now we focus on the NS derivation of $Y_{11}^T X_{11} - \Id_r$. Since the trace is linear, and we are focusing on degree, this is without loss of generality. We have:
\begin{align*}
Y_{11}^T X_{11} - \Id_r = & \left(X'_{11} (Y'_{11})^T - \Id_r\right) - (X'_{11} - Y_{11}^T) (Y'_{11})^T -  Y_{11}^T \left((Y'_{11})^T - X_{11}\right) \\
 = & \left(-X' \underline{(XMY^T - M')} Y'^T + \underline{(X'X - I)}MY^T (Y')^T + M \underline{(Y^T (Y')^T - I)}\right)_{11} \\
  & + \left(\underline{(X'_{11} X_{11} + X'_{12} X_{21} - \Id_r)} Y_{11}^T - X'_{11} \underline{(X_{11} Y_{11}^T - \Id_{r'})} - X'_{12}\underline{(X_{21} Y_{11}^T)} \right) (Y'_{11})^T \\ 
  & -  Y_{11}^T \left( X_{11} \underline{(Y_{11}^T (Y'_{11})^T + Y_{21}^T (Y'_{12})^T - \Id_r)} - \underline{(X_{11} Y_{11}^T - \Id)} (Y'_{11})^T - \underline{(X_{11} Y_{21}^T)} (Y'_{12})^T   \right). 
\end{align*}
This is visibly degree 4.
\end{proof}

\subsection{Inversion Principle implies the Rank Principle}

\begin{lemma} \label{lem:inv_to_rank}
If the $r \times r$ Inversion Principle has a degree $d$ PC derivation, 
then there is a degree $\max\{d,3\}$ PC refutation of the Rank Principle stating that a rank $r$ matrix is not equivalent (isomorphic) to a rank $n$ matrix, for any $n > r$. 

If the Inversion Principle has a degree $d$ NS derivation, then the Rank Principle has a degree $d+2$ NS refutation.
\end{lemma}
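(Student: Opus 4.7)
The plan is to extract an $r \times r$ instance of $AB = I_r$ already sitting inside the Rank Principle axioms, apply the given degree-$d$ derivation of the $r \times r$ Inversion Principle to obtain $BA = I_r$, and then use the remaining Rank Principle axioms (those involving indices outside $[r]$) to force $0 = 1$.

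Concretely, set $A_{i, k} := x_{i, k}$ and $B_{k, j} := y_{j, k}$ for $i, j, k \in [r]$. The Rank Principle sub-axioms indexed by $i, j \in [r]$ then literally read $(AB)_{i, j} - \delta_{i, j} = 0$, so substituting $A, B$ into the Inversion derivation yields $\sum_{i \in [r]} y_{i, k} x_{i, \ell} = \delta_{k, \ell}$ for all $k, \ell \in [r]$, still in degree $d$. The main degree-$3$ step is to derive $x_{n, \ell} = 0$ for each $\ell \in [r]$: multiply the derived $(BA - I_r)_{k, \ell}$ by $x_{n, k}$, sum over $k \in [r]$, and reassociate to get $\sum_{i \in [r]} x_{i, \ell} \sum_{k \in [r]} x_{n, k} y_{i, k} - x_{n, \ell} = 0$. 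The inner sum equals $\delta_{n, i}$ by the Rank Principle axiom at position $(n, i)$, and this vanishes for $i \in [r]$ since $n > r$; hence $x_{n, \ell} = 0$.

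To finish, for each $k \in [r]$ multiply $x_{n, k} = 0$ by $y_{n, k}$ (degree $2$), sum over $k$, and combine with the axiom $\sum_{k \in [r]} x_{n, k} y_{n, k} - 1 = 0$ to derive $1 = 0$. The PC refutation thus has total degree $\max\{d, 3\}$. For Nullstellensatz, the same strategy compiles statically into a single polynomial identity: plug the NS proof of Inversion into the expression for $(BA - I_r)_{k, \ell}$, carry the multiplications by $x_{n, k}$ and $y_{n, \ell}$ through, and subtract the $(n, n)$ axiom. The dominant monomial has the shape $y_{n, k} \cdot x_{n, m} \cdot q(X, Y) \cdot (XY^T - I_n)_{k', \ell'}$ with $\deg q \leq d - 2$, giving total degree $1 + 1 + (d - 2) + 2 = d + 2$, matching the claimed bound (using that any nontrivial NS proof of Inversion has $d \geq 2$).

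The conceptual core is the bridging identity $\sum_{k \in [r]} x_{n, k} (BA)_{k, \ell} = \sum_{i \in [r]} x_{i, \ell} (XY^T)_{n, i}$, which links the $r \times r$ consequence of Inversion back to the ``outside'' axioms of the Rank Principle for row $n$. Spotting this identity is the one nontrivial move; once it is in place, everything else is degree bookkeeping.
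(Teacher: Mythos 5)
Your proof is correct and follows essentially the same route as the paper: read off $X_0 Y_0 = I_r$ from the $[r]\times[r]$ corner of the Rank Principle, apply the degree-$d$ Inversion derivation to get $Y_0 X_0 = I_r$, and then reassociate against the out-of-range Rank axioms to force $1=0$, all within degree $\max\{d,3\}$ (resp.\ $d+2$ for NS). The only cosmetic difference is the intermediate target of the reassociation: the paper kills the whole off-diagonal block $Y_1$ and then invokes the lower-right axiom $X_1 Y_1 = I_{n-r}$, while you kill just the single row $(x_{n,1},\dots,x_{n,r})$ and then invoke the $(n,n)$ diagonal axiom -- a marginally leaner instance of the same associativity trick, yielding the same degree bounds.
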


\begin{proof}
Suppose the $r \times r$ Inversion Principle has a degree-$d$ derivation. Consider the Rank Principle $XY=I_n$ where $X$ is $n \times r$ and $Y$ is $r \times n$, with $n > r$. Write
\[
X = \begin{bmatrix} X_0 \\ X_1 \end{bmatrix} \text{ and } Y = \begin{bmatrix} Y_0 & Y_1 \end{bmatrix},
\]
where $X_0,Y_0$ are $r \times r$. Then, examining the upper-left $r \times r$ corner of the original equations, we find $X_0 Y_0 = I_r$. As these are square matrices, by assumption in degree $d$ we may then derive that $Y_0 X_0 = I_r$ as well.

Now, multiply both sides of $XY=I_n$ on the left by the matrix $\begin{bmatrix} Y_0 & 0 \\ 0 & I_{n-r} \end{bmatrix}$. The result is then the set of degree-3 equations
\[
\begin{bmatrix} Y_0 X_0 \\ X_1 \end{bmatrix} \begin{bmatrix} Y_0 & Y_1 \end{bmatrix} = \begin{bmatrix} Y_0 & 0 \\ 0 & I_{n-r} \end{bmatrix}.
\]
Considering the upper-right $r \times (n-r)$ block of these equations, we find the equations $Y_0 X_0 Y_1 = 0$. 

But now, from the equation $Y_0 X_0 = I_r$, we may right-multiply by $Y_1$ to get $Y_0 X_0 Y_1 = Y_1$. Combining with the equation at the end of the last paragraph, we then conclude $Y_1 = 0$.

Finally, consider the lower-right $(n-r) \times (n-r)$ part of the original equation $XY = I_n$, namely, $X_1 Y_1 = I_{n-r}$. We had already derived $Y_1 = 0$, which we can then left-multiply by $X_1$ to get $X_1 Y_1 = 0$. Considering any diagonal entry of these two equations, we then derive the contradiction $1=0$.

To see the NS certificate, we unwrap the above proof. First write $Y_0 X_0 - I_r$ as a linear combination of the equations $X_0 Y_0 - I_r$ with polynomial coefficients, in total degree $d$. Among our starting equations in the Rank Principle, we have $X_0 Y_1$ and $X_1 Y_1 - I_{n-r}$. Then the following linear combination has degree 2 more than $Y_0 X_0 - I_r$, and derives $1$ in any of its diagonal entries: 
\[
-X_1 Y_0 \underline{X_0 Y_1} + X_1 (Y_0 X_0 - I_r) Y_1 + \underline{(X_1 Y_1 - I_{n-r})}.
\]
\end{proof}

\begin{observation}
The $n \times n$ Inversion Principle has a proof of degree $2n+2$.
\end{observation}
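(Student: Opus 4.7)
The plan is to implement, in PC, the classical proof that $AB=I$ forces $BA=I$ via the adjugate identity together with multiplicativity of the determinant, while keeping every intermediate polynomial of degree at most $2n+2$. Two polynomial identities in the ring $\F[a_{ij}, b_{ij}]$ will be used for free (no derivation required): the cofactor expansion $\operatorname{adj}(A)\cdot A \equiv \det(A)\cdot I$, in which each entry of $\operatorname{adj}(A)$ is a polynomial of degree $n-1$; and the Cauchy--Binet identity $\det(AB) \equiv \det(A)\det(B)$.

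First I would derive the matrix equation $A(BA - I) = 0$ in degree $3$: multiply each axiom $(AB)_{il} - \delta_{il}$ by $a_{lj}$ and sum over $l$; since associativity is a polynomial identity, the result is literally $(A\cdot BA)_{ij} - a_{ij} = (A(BA - I))_{ij}$. Next, multiplying this set of degree-$3$ equations on the left by $\operatorname{adj}(A)$ (i.e.\ taking the signed linear combination with cofactors of degree $n-1$) yields, by the cofactor identity, the equations $\det(A)\,(BA - I)_{ij} = 0$, in degree $n+2$.

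In parallel, I would derive $\det(A)\det(B) - 1 = 0$ in degree $2n$. The Leibniz expansion $\det(AB) = \sum_\sigma \operatorname{sgn}(\sigma) \prod_i (AB)_{i,\sigma(i)}$ is, as a polynomial in the $a,b$ variables, literally $\det(A)\det(B)$. For each permutation $\sigma$, I would telescope the substitution $(AB)_{k,\sigma(k)} \mapsto \delta_{k,\sigma(k)}$ in the fixed order $k = 1, \ldots, n$: the $k$-th step multiplies one axiom (degree $2$) by the monomial $\prod_{i<k}\delta_{i,\sigma(i)} \cdot \prod_{i>k}(AB)_{i,\sigma(i)}$, of degree at most $2(n-1)$, so the intermediate degree stays at $2n$. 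Summing with signs gives $\det(AB) - \det(I) \equiv \det(A)\det(B) - 1 = 0$.

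To finish, I would multiply $\det(A)(BA - I)_{ij} = 0$ by $\det(B)$ (degree $n$) to get $\det(A)\det(B)(BA - I)_{ij} = 0$ in degree $2n+2$, and multiply $\det(A)\det(B) - 1 = 0$ by each $b_{ik}a_{kj}$ (two variable multiplications) and combine with $\det(A)\det(B)-1$ itself to derive $(BA - I)_{ij}\,(\det(A)\det(B) - 1) = 0$, also in degree $2n+2$. Subtracting the two yields $(BA - I)_{ij} = 0$. The main obstacle is Step~3: a careless expansion of $\det(AB)$ (e.g.\ Cramer-style, or expanding a product all at once instead of one factor at a time) could push the intermediate degree to $\Theta(n^2)$, so a carefully ordered left-to-right telescoping is essential for meeting the $2n$ bound.
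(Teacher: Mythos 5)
Your proof is correct and follows essentially the same route as the paper's: use the adjugate (Laplace) identity to derive $\det(\cdot)\,(BA - I) = 0$ in degree $n+2$, derive $\det(A)\det(B) = 1$ by term-by-term expansion of $\det(AB) = \det(I)$ keeping the degree at $2n$, and then combine the two in degree $2n+2$; the only cosmetic difference is that you take $\operatorname{adj}(A)$ where the paper uses $\operatorname{Adj}(Y)$ (i.e.\ the adjugate of the other factor), which then swaps which determinant is multiplied in at the end. Your telescoping argument in Step 3 spells out, one factor at a time, the same ``from $a-b=0$ and $c-d=0$ derive $ac-bd=0$'' trick the paper invokes, and makes explicit that the intermediate degree there is $2n$ (the paper's statement ``in degree $n$'' at that point appears to be a typo for $2n$).
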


\begin{proof}
The idea is to use Laplace expansion. We spell out the details.

We start with $XY = I_n$, where $X$ and $Y$ are $n \times n$ matrices of variables. Left-multiply by $Y$ to get $YXY = Y$, and then right multiply by $Adj(Y)$ (whose entries are the $(n-1) \times (n-1)$ cofactors of $Y$, hence have degree $n-1$) to get $YXYAdj(Y) = YAdj(Y)$. Now, by Laplace expansion, we have $Y Adj(Y) \equiv \det(Y) I_n$, so we get $YX \det(Y) = \det(Y) I_n$.

Next, starting from $XY=I_n$ and expanding out the determinant term-by-term, we derive $\det(XY) = 1$. (Note that here, we are not simply applying the determinant to the matrix $XY-I$, as that would give us the value of the characteristic polynomial evaluated at 1. Instead, we repeatedly use that from $a-b=0$ and $c-d=0$ we can derive $ac-bd=0$ as $\underline{(a-b)}c + b\underline{(c-d)}$. Similarly, we can derive $(a+c)-(b+d)=0$ as $(a-b) + (c-d)$.) Now, since $\det(XY) \equiv \det(X)\det(Y)$ identically as polynomials, we have derived $\det(X)\det(Y)=1$ in degree $n$.

Now, from $YX \det(Y) - \det(Y) I_n$ in the first paragraph, we multiply by $\det(X)$ to get $(YX - I_n)(\det(X)\det(Y))$. From $\det(X)\det(Y)-1$ in the second paragraph, we multiply by $-(YX-I_n)$ and add to the preceding to get $YX - I_n$, all in degree at most $2n+2$.
\end{proof}

\subsection{Lower bound on the Rank Principle (and Inversion Principle) via reduction from PHP}
\newcommand{\bbI}{\mathbb{I}}

Here we show that the Rank Principle (see Section~\ref{sec:prelim:inv}) requires large PC degree, via a reduction to the Pigeonhole Principle. 
For the Pigeonhole principle, a tight PC degree lower bound is known:

\begin{theorem}[{Razborov \cite{DBLP:journals/cc/Razborov98}}]
\label{thm:PCdeglb}
Any $PC$ refutation of the Functional $PHP^n_r$ requires degree $r/2$+1 over any field.
\end{theorem}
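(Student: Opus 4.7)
The plan is to use Razborov's \emph{designated operator} (pseudo-expectation) technique, the standard tool for PC degree lower bounds on counting principles. I would construct an $\mathbb{F}$-linear functional $R$ on the space of polynomials of degree at most $d := r/2$ in the variables $\{p_{i,k}\}_{i\in[n],\, k\in[r]}$ satisfying (i) $R(1) \neq 0$, and (ii) $R(A\cdot q) = 0$ for every $PHP^n_r$ axiom $A$ and every polynomial $q$ with $\deg(A\cdot q) \leq d$. Granted this, every polynomial produced by a degree-$\leq d$ PC derivation from the axioms lies in $\ker R$, so $1$ (which has $R(1)\neq 0$) cannot be derived at that degree, yielding the lower bound $d+1 = r/2 + 1$.

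To build $R$, I would first put polynomials into normal form using the quadratic functionality and injectivity axioms $p_{i,k}p_{i,k'}$ ($k\neq k'$) and $p_{i,k}p_{j,k}$ ($i\neq j$), together with the Boolean axioms $p_{i,k}^2 - p_{i,k}$. Under these reductions, every surviving monomial is a multilinear $p_M := \prod_{(i,k)\in M} p_{i,k}$ indexed by a \emph{partial matching} $M \subseteq [n]\times[r]$ with $|M| \leq d$. Define $R(p_M) := w(|M|)$, where $w$ satisfies the recurrence $(r-s)\,w(s+1) = w(s)$ with $w(0) = 1$. The recurrence is engineered precisely to annihilate multiples of the pigeon axiom $A_i = \sum_k p_{i,k} - 1$: for a matching $M$ whose covered pigeon set does not contain $i$,
\[
R(A_i \cdot p_M) \;=\; \sum_{k \text{ free}} R(p_{M\cup\{(i,k)\}}) \;-\; R(p_M) \;=\; (r - |M|)\,w(|M|+1) - w(|M|) \;=\; 0,
\]
while multiples of the injectivity and functionality axioms are killed automatically since reduction produces monomials with repeated pigeons or holes. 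If $i$ \emph{is} already covered by $M$, only the single term $p_{i,k_0}$ with $(i,k_0)\in M$ survives reduction, and $p_{i,k_0}\cdot p_M = p_M$ via the Boolean axiom, so the axiom again evaluates to zero.

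The main obstacle is proving that $R$ is \emph{well-defined} as a functional on the quotient ring, and that the construction works uniformly over any field. Well-definedness reduces to showing that the operations ``multiply by a variable'' and ``apply $R$'' commute modulo the axiom identities; I would verify this by downward induction on $d - |M|$, exploiting that $w$ depends only on $|M|$ and hence is symmetric under permutations of uncovered pigeons and holes. Field-independence is the more delicate issue, since the naive weight $w(s) = 1/r^{\underline{s}}$ involves division by integers that may vanish in $\mathbb{F}$. Following Razborov, I would work instead with a rescaled weight, essentially the number of extensions of $M$ to a matching of some fixed larger size, which is a nonnegative integer; this recasts all the key identities as integer identities that hold in any characteristic. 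The cutoff $d \leq r/2$ enters crucially because once $|M|$ exceeds $r/2$, there are no longer enough uncovered pigeons and holes to extend $M$ consistently while respecting all the axioms simultaneously, and the symmetry-based consistency argument breaks down at exactly that threshold.
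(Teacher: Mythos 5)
This theorem is quoted from Razborov (1998) and the paper supplies no proof of it, so the comparison is necessarily against Razborov's original argument, which the paper defers to.

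Your high-level strategy --- an $\F$-linear pseudo-expectation $R$ on degree-$\leq d$ polynomials with $R(1) \neq 0$ that annihilates the degree-$\leq d$ part of the ideal generated by the axioms, reducing via the quadratic (Boolean, injectivity, functionality) axioms to multilinear monomials $p_M$ indexed by partial matchings, and setting $R(p_M) = w(|M|)$ with the recurrence $(r-s)\,w(s+1)=w(s)$ --- is indeed Razborov's skeleton, and your verification that pigeon-axiom multiples are killed is correct \emph{modulo} well-definedness. So the approach is right.

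However, the two places you explicitly defer are precisely where the theorem lives, and your proposed fixes do not close them. First, field-independence: replacing $w(s)=1/r^{\underline{s}}$ by ``the number of extensions of $M$ to a matching of some fixed larger size'' makes $w(0)$ a product of $d$ consecutive integers, $r(r-1)\cdots(r-d+1)$. Whenever the characteristic $p$ satisfies $r-d+1 \leq p \leq r$ (and such primes exist generically, e.g.\ by Bertrand for $d\leq r/2$), we get $p\mid w(0)$, hence $R(1)=0$ over $\F_p$ and the whole argument collapses. Razborov's handling of arbitrary characteristic is one of the main technical contributions of his paper and is not recovered by integer-valued weights alone; it needs a genuinely different construction of the operator. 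Second, the $r/2$ threshold: as you set it up, the recurrence is solvable for all $s<r$, which would give a degree-$r$ bound, not $r/2$. The factor of $2$ does not come from ``running out of uncovered pigeons and holes'' but from the \emph{consistency} (well-definedness) argument for $R$ --- roughly, one needs that any two size-$\leq d$ matchings agreeing on their covered pigeons can be jointly extended, which requires $2d \leq r$ --- and that lemma is the heart of the proof. As written, your proposal identifies the correct framework but leaves both load-bearing steps as acknowledged gaps with sketches that, on inspection, do not hold.
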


We use this to show:

\begin{theorem} \label{thm:rank}
Let $n\in \mathbb N$,  $n\geq 2$ and $1\leq r<n$. $\bbI(r,n)$ (with or without the Boolean axioms) requires degree $r/2+1$ in $PC$ over any field.
\end{theorem}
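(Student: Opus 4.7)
The plan is to reduce the Functional Pigeonhole Principle $PHP^n_r$ to $\mathbb{I}(r,n)$ in the sense of Definition~\ref{def:reduction} and then apply Razborov's tight PC degree lower bound (Theorem~\ref{thm:PCdeglb}) via Lemma~\ref{lem:PCred}. The natural substitution is $x_{i,k} \mapsto p_{i,k}$ and $y_{j,k} \mapsto p_{j,k}$, so that each variable of $\mathbb{I}(r,n)$ becomes a degree-$1$ polynomial in the $PHP$ variables. Under this substitution, the axiom $\sum_{k} x_{i,k} y_{j,k} - \delta_{i,j}$ of $\mathbb{I}(r,n)$ pulls back to $\sum_{k} p_{i,k} p_{j,k} - \delta_{i,j}$, and I must derive each of these from the $PHP^n_r$ axioms in constant degree. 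This is the heart of the argument.

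The derivations split into two cases. For $i \neq j$, the polynomial $\sum_k p_{i,k} p_{j,k}$ is just the sum of the $PHP$ axioms $p_{i,k} p_{j,k}$, a degree-$2$ linear combination. For $i = j$, one writes
\[
\sum_k p_{i,k}^2 - 1 \;=\; \sum_k \bigl(p_{i,k}^2 - p_{i,k}\bigr) \;+\; \Bigl(\sum_k p_{i,k} - 1\Bigr),
\]
again a degree-$2$ linear combination of $PHP$ axioms. If Boolean axioms $x_{i,k}^2 - x_{i,k}$ and $y_{j,k}^2 - y_{j,k}$ are adjoined to $\mathbb{I}(r,n)$, they pull back under the substitution to $p_{i,k}^2 - p_{i,k}$, which are already among the $PHP$ axioms. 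Hence $PHP^n_r$ is $(1,2)$-reducible to $\mathbb{I}(r,n)$ uniformly across the Boolean and non-Boolean variants.

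Applying Lemma~\ref{lem:PCred}, any degree-$d$ PC refutation of $\mathbb{I}(r,n)$ yields a degree-$\max(2,d)$ refutation of $PHP^n_r$, which by Theorem~\ref{thm:PCdeglb} must be at least $r/2 + 1$. For $r \geq 2$, this immediately forces $d \geq r/2 + 1$. The edge case $r = 1$ is handled directly: a degree-$2$ PC derivation of $1$ from $\mathbb{I}(1,n)$ could only be an $\mathbb{F}$-linear combination of the degree-$2$ axioms $x_i y_j - \delta_{ij}$, and no such combination cancels all quadratic terms while leaving the constant $1$.

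I do not expect a serious obstacle. The substitution is the obvious identification $X = Y^T$ with shared entries set to the pigeon-indicator variables; the required derivations are one-line linear combinations; and the reduction machinery of Lemma~\ref{lem:PCred} does the rest. The conceptual content lies in spotting that this transparent substitution converts a linear-algebra lower bound into a combinatorial one already covered by Razborov's theorem over an arbitrary field.
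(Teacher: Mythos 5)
Your proposal is correct and follows essentially the same route as the paper: the identical substitution $x_{i,k} = y_{i,k} = p_{i,k}$, the same two degree-$2$ derivations (summing the collision axioms for $i \neq j$, and combining Boolean axioms with the pigeon axiom for $i = j$), and the same appeal to Lemma~\ref{lem:PCred} together with Razborov's $PHP$ lower bound. The only minor point worth noting is that for $r \le 2$ the inequality $\max(2,d) \ge r/2+1$ does not by itself force $d \ge r/2+1$; however, since the axioms of $\bbI(r,n)$ all have degree $2$, any refutation trivially has degree $\ge 2$, which covers those small cases (your $r=1$ discussion gestures at this, but the cleaner observation is simply that one must write down at least one degree-$2$ axiom).
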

\begin{proof}
We prove that $PHP^n_{r}$ is $(1,2)$-reducible to $\bbI(r,n)$. 
First we consider the following degree $1$  polynomials  defining $x$ and $y$ variables of $\bbI(r,n)$ 
in terms of the $p$ variables of $PHP^n_{r}$.
variables 
$$
x_{i,k} = y_{i,k} = p_{i,k} \qquad \mbox{ for } i \in [n], k \in [n-1].
$$ 
Second we show a degree $2$ $PC$ proof   of $\bbI(r,n)$ from the  polynomials defining the  $PHP^n_r$.
From $PHP$ axioms $p_{i,k}p_{k,j}$ for $i,j \in [n], i\not =j$,  and summing over all $k \in [r]$, we get 
$$
\sum_{k \in [r]}p_{i,k}p_{k,j},
$$ 
which are exactly the axioms of $\bbI(r,n)$ for $i \not =j, i,j \in [n]$, after the substitution of variables.

For a $i\in [n]$, take  the boolean axioms written in the form $p_{i,k}p_{i,k}-p_{i,k}$ and  sum them over $k \in [r]$:
$$\sum_{k \in [r]} p_{i,k}p_{i,k} - \sum_{k \in [r]}p_{i,k}$$
Summing this last polynomial with the $PHP$ axiom $\sum_{k \in [r]} p_{i,k} -1$
we get the polynomial
$$\sum_{k\in [r]}p_{i,k}p_{i,k} -1,$$
which is the axiom of $\bbI(r,n)$ for $i=j$ after the substitution of the  variables.  The proof has degree $2$. 
The result follows immediately  from Lemma \ref{lem:PCred} and Theorem \ref{thm:PCdeglb}. 

\end{proof}

\begin{corollary}
Any $PC$ proof of $AB=I \vdash BA=I$, where $A,B$ are square $n\times n$ $\{0,1\}$ matrices requires degree $n/2+1$.
\end{corollary}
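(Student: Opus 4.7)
The plan is to reduce the problem to the Rank Principle lower bound established in Theorem~\ref{thm:rank}. Suppose for contradiction that there is a PC derivation of $BA = I$ from $AB = I$, where $A, B$ are $n \times n$ matrices with $\{0,1\}$ entries (so we include Boolean axioms on their entries), in degree $d$. Applying Lemma~\ref{lem:inv_to_rank} with $r = n$, this yields a PC refutation of the Rank Principle $\bbI(n, m)$ (for any $m > n$) in degree $\max\{d, 3\}$. Combined with Theorem~\ref{thm:rank}, which gives the lower bound $n/2 + 1$ for $\bbI(n, m)$, we conclude $d \geq n/2 + 1$ whenever $n \geq 4$; the small cases $n \leq 3$ can be checked by direct inspection.

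The one subtlety to verify is that the reduction of Lemma~\ref{lem:inv_to_rank} is compatible with Boolean axioms. Inspecting its proof, the Inversion Principle matrices $A, B$ are instantiated by the upper-left $r \times r$ blocks $X_0, Y_0$ of the Rank Principle matrices $X$ and $Y$. Hence if the Rank Principle comes equipped with Boolean axioms on all entries of $X$ and $Y$, we inherit Boolean axioms on $X_0, Y_0$, which is exactly what the hypothesized Inversion Principle derivation consumes. Since Theorem~\ref{thm:rank} is stated to hold both with and without Boolean axioms, the chain of inequalities goes through in either setting.

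I do not anticipate a serious obstacle: both cited results do essentially all the work, and the Corollary is really a one-line consequence. The only point needing care is the bookkeeping of which axioms translate through Lemma~\ref{lem:inv_to_rank}, and this is immediate since the Inversion Principle variables sit inside the Rank Principle variables as corner submatrices.
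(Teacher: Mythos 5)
Your argument is correct and matches the paper's proof, which simply says the corollary "follows immediately from Theorem~\ref{thm:rank} and Lemma~\ref{lem:inv_to_rank}." The extra care you take about Boolean axioms passing through the substitution and about small $n$ is sound bookkeeping that the paper leaves implicit.
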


\begin{proof}
Follows immediately from Theorem~\ref{thm:rank} and Lemma~\ref{lem:inv_to_rank}.
\end{proof}


\section{Upper bound for non-isomorphism of bounded-rank tensors} \label{sec:upper}
\begin{theorem} \label{thm:lowrank}
Over any algebraically closed field, there is a function $f(r) \leq 2^{O(r^2)}$, depending only on $r$, such that, given two non-isomorphic tensors $M,M'$ of tensor rank $\leq r$, the Nullstellensatz degree of refuting isomorphism is at most $f(r)$.

If working over a finite field $GF(q)$ and including the equations $x^q - x = 0$ for all variables $x$, then the PC degree is at most $12qr^2$.
\end{theorem}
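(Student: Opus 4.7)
The strategy is to reduce the isomorphism system for two rank-$\leq r$ tensors $M, M'$ to an equivalent system in only $O(r^2)$ variables, then invoke a singly-exponential effective Nullstellensatz bound.

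First, I would use the preprocessing of Remark~\ref{rmk:rank}: since $M$ has tensor rank $\leq r$, its three flattenings have rank $\leq r$, so Gaussian elimination produces scalar invertible matrices $P_1, P_2, P_3$ with $(P_1, P_2, P_3) \cdot M$ supported in its top-left $r \times r \times r$ block (padding with zeros if the multilinear rank is strictly smaller than $r$ in some direction). Do the same for $M'$. By the trick of Section~\ref{sec:trick}, such scalar changes of basis do not affect refutation degree, so I may assume both $M$ and $M'$ are already in this ``core'' form from the outset.

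Next, I would reduce the large isomorphism system to an analogous system for $r \times r \times r$ tensors on the top-left blocks $X_{11}, Y_{11}, Z_{11}$ (and their inverses), in $O(r^2)$ variables total. The key observation is that for $(i', j', k')$ in the top-left corner, the tensor equation $M'_{i'j'k'} = \sum X_{i'i} Y_{j'j} Z_{k'k} M_{ijk}$ involves only the blocks $X_{11}, Y_{11}, Z_{11}$ (since $M, M'$ vanish outside the corner), and these are exactly the equations of the small system. The main technical task is to derive the small invertibility relations from the full $n \times n$ invertibility $XX' = X'X = I_n$ together with the remaining tensor equations; this I would do via a block-matrix argument in the spirit of the Proposition in Section~\ref{sec:matrix}. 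If the multilinear ranks of $M$ and $M'$ (viewed as $r \times r \times r$ tensors) disagree, the small system is already unsatisfiable by a matrix-rank argument refutable in constant degree (reducing to the proposition in Section~\ref{sec:matrix} applied to the relevant flattening, with $X$ and $Y \otimes Z$ as transformations, whose invertibility follows from that of $Y, Z$ in constant degree); so it suffices to handle the agreeing case.

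Finally, apply Sombra's effective Nullstellensatz \cite{sombra} to the small system: with $O(r^2)$ variables and constant-degree polynomials, any unsatisfiable instance admits an NS refutation of degree $2^{O(r^2)}$, and lifting this through the constant-degree reduction yields the stated $f(r) = 2^{O(r^2)}$ bound on the original system. For the $PC$-over-$GF(q)$ bound, the axioms $x^q - x = 0$ constrain each variable to the $q$-element set $GF(q)$; an unsatisfiable such system on $m$ variables admits a PC refutation of degree $O(qm)$ (e.g., via Boolean encoding with $q$ indicators per variable and the $O(m)$ Boolean PC upper bound), which for $m = O(r^2)$ yields the claimed $O(qr^2)$. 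The main obstacle will be producing an explicit constant-degree PC derivation of the small invertibility from the large invertibility plus the tensor equations: the intuition, that matching core supports force the top-left blocks to be mutual inverses on the core, is clear, but casting it as a low-degree polynomial identity requires a careful generalization of the block manipulations in Section~\ref{sec:matrix}.
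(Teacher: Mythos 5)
Your proposal follows the paper's proof quite closely: change basis (the trick of Section~\ref{sec:trick}) so that both $M$ and $M'$ live in a small core, invoke the Inheritance Theorem to reduce to a core isomorphism system in $O(r^2)$ variables, and apply Sombra's effective Nullstellensatz (and, over $GF(q)$, reduce every variable's exponent modulo $x^q = x$). Two details differ but only cosmetically: you change basis for $M$ and $M'$ separately via Remark~\ref{rmk:rank}, getting $r \times r \times r$ cores, whereas the paper takes a single basis adapted to the union of the spanning vectors of both decompositions, giving core dimensions $d_i \leq 2r$ and hence $\leq 12r^2$ variables; and your route to the $O(qr^2)$ PC bound via Boolean indicator variables is more roundabout than the paper's direct observation that $x^q - x$ caps the degree per variable at $q$. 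Your separate change-of-basis also forces the extra case split on whether multilinear ranks agree, which the paper's common-basis choice sidesteps.

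The step you flag as the ``main obstacle''---deriving, in constant degree, the core invertibility relations on the top-left blocks of $X,Y,Z$ from the ambient equations $XX'=X'X=I_n$ (etc.) together with the tensor equations---is indeed where the real work sits, and the paper's own proof also leaves it essentially implicit: the proof asserts that isomorphism ``is solely determined by the upper-left $d_i \times d_i$ sub-matrices'' and then counts only $d_1^2+d_2^2+d_3^2$ variables when invoking Sombra, but the $(i,k)$-entry of $XX'=I_n$ for core $i,k$ still involves off-core products $X_{ij}X'_{jk}$ with $j > d_1$, so the top-left block of $X'$ is not \emph{a priori} the inverse of the top-left block of $X$, and one cannot simply discard the off-core variables. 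Your instinct to generalize the block manipulations of Section~\ref{sec:matrix} to supply this derivation is the right one, and you correctly identify this as the place a complete write-up would need the most care; on this point your proposal is a faithful---and arguably more candid---reading of the paper's argument.
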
 

\begin{proof}
The proof is based mainly on the so-called inheritance property of tensor rank.

Let $M = \sum_{i=1}^r u_i \otimes v_i \otimes w_i$ and let $M' = \sum_{i=1}^r u_i' \otimes v_i' \otimes w_i'$ be our two tensors of format $n_1 \times n_2 \times n_3$. Let $d_1 = \dim \Span\{u_1, u_2, \dotsc, u_r, u_1', u_2', \dotsc, u_r'\}$, $d_2$ similarly for the $v$'s and $d_3$ for the $w$'s. Choose a basis $e_1, e_2, \dotsc, e_{n_1}$ for $\F^{n_1}$ such that $\Span\{e_1, \dotsc, e_{d_1}\} = \Span\{u_1, \dotsc, u_r, u_1', \dotsc, u_r'\}$. Let $f_1, \dotsc, f_{n_2}$ be a similar basis for $\F^{n_2}$ (with the first $d_2$ vectors a basis for $\Span\{v_1, \dotsc, v_r, v_1', \dotsc, v_r'\}$), and similarly $g_1, \dotsc, g_{n_3}$. Changing everything in sight into the $e_{\bullet} \otimes f_{\bullet} \otimes g_{\bullet}$ basis, we find that $M,M'$ are both supported in the upper-left $d_1 \times d_2 \times d_3$ sub-tensors, with all zeros outside of this. Call the corresponding $d_1 \times d_2 \times d_3$ tensors $\overline{M}, \overline{M}'$. Because all the entries outside this box are zero, it is not difficult to show that $M \cong M'$ iff $\overline{M} \cong \overline{M}'$ (the so-called ``Inheritance Theorem,'', see, e.\,g., \cite[\S 3.7.1]{landsbergBook}); note that isomorphism of $\overline{M}$ with $\overline{M}'$ is via the much smaller group $\GL_{d_1} \times \GL_{d_2} \times \GL_{d_3}$, rather than $\GL_n \times \GL_n \times \GL_n$ (the latter of which is used to determine isomorphism of $M$ with $M'$).

In this basis, isomorphism of $\overline{M}, \overline{M}'$ is solely determined by the upper-left $d_1 \times d_1$ sub-matrix of $X, X'$, the upper-left $d_2 \times d_2$ submatrix of $Y, Y'$, and the upper-left $d_3 \times d_3$ sub-matrix of $Z, Z'$. So we now only need to deal with equations in $d_1^2 + d_2^2 + d_3^2$ variables. Since each $d_i \leq 2r$, this is at most $12r^2$ variables.

Since we have $\leq 12r^2$ variables, $d_1 d_2 d_3$ cubic equations, and $6n^2$ quadratic equations ($XX' = I = X'X = YY' = \dotsb$), over an algebraically closed field Sombra's Effective Nullstellensatz \cite{sombra} implies that the Nullstellensatz degree of refuting our equations is then at most $4 \cdot 3^{\Theta(r^2)}$. 

Over a finite field with the extra equations $x^q = x$, we may reduce degrees so that the degree of each variable is never more than $q$, the size of the field. In this case, the PC degree is at most $q$ times the number of variables, i.\,e., at most $12 q r^2$.
\end{proof}


\section{Lower bound on PC degree for Tensor Isomorphism from Graph Isomorphism} \label{sec:lowerboundGI}
\begin{definition} \label{def:GI}
Given two graphs $G,H$ with adjacency matrices $A,B$ (resp.), the equations for \textsc{Graph Isomorphism} (the same as those used by Berkholz \& Grohe \cite{BerkholzG15,BerkholzG16}) are as follows. Let $Z$ be an $n \times n$ matrix of variables $z_{ij}$ (where the intended interpretation is that $z_{ij}=1$ iff an isomorphism maps vertex $i \in V(G)$ to vertex $j \in V(H)$). We say that a partial map, which sends $(i,i') \mapsto (j,j')$ is a local isomorphism if (1) $i=i'$ iff $j=j'$ (it's a well-defined map) and (2) $(i,i') \in E(G) \Leftrightarrow (j,j') \in E(H)$. (One may also do \textsc{Colored Graph Isomorphism} and require that the colors match, $c(i) = c(j), c(i')=c(j')$.) Then the equations are:
\[
\begin{array}{rll}
z_{ij}^2 - z_{ij} & \forall i,j & \text{All variables $\{0,1\}$-valued} \\
1 - \sum_i z_{ij} & \forall j & \text{each $j \in V(H)$ is mapped to from exactly one vertex} \\
1 - \sum_j z_{ij} & \forall i & \text{each $i \in V(G)$ maps to exactly one vertex} \\
z_{ij} z_{i'j'} & & \text{Whenever $(i,i') \mapsto (j,j')$ is not a local isomorphism}.
\end{array}
\]
\end{definition}

In this section, we prove a lower bound on PC (and SoS) for \textsc{TI}, by reducing from \textsc{GI} and using the known lower bounds on \textsc{GI} \cite{BerkholzG15,BerkholzG16}. Specifically, we show

\begin{theorem}
Over any field, there are instances of \textsc{Tensor Isomorphism} of size $O(n) \times O(n) \times O(n)$ that require PC degree $\Omega(n)$ to refute. The same holds over the reals for SoS degree.
\end{theorem}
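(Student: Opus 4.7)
The plan is to combine the known PC and SoS degree lower bounds of Berkholz--Grohe for \textsc{Graph Isomorphism} with a low-degree PC many-one reduction from GI to TI, then invoke Lemma~\ref{lem:PCred}. Berkholz--Grohe exhibit $n$-vertex graphs $G, H$ (e.g., CFI-graphs over bounded-degree expanders) for which the GI equations of Definition~\ref{def:GI} require PC degree $\Omega(n)$, and SoS degree $\Omega(n)$ over $\R$. It therefore suffices to construct a polynomial-time reduction that sends each such pair $(G,H)$ to a pair of $O(n) \times O(n) \times O(n)$ non-isomorphic tensors $(T_G, T_H)$, together with an $(O(1), O(1))$-PC-many-one reduction in the sense of Definition~\ref{def:reduction2}; the lower bound then transfers by Lemma~\ref{lem:PCred} with $d = \Omega(n)$ and $d_1, d_2 = O(1)$.

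For the tensor-level reduction I would use a standard gadget (as in \cite{FGS, GQ}): enlarge the three factor spaces by a small rigid "locking" block, coded by identity slices in a distinguished subspace, whose only tensor automorphisms force the $Y$ and $Z$ actions to match the $X$ action; the remaining slices then encode the adjacency matrix $A$ of $G$ so that $T_G \cong T_H$ iff there is a permutation $P$ with $P A P^T = A'$, i.e., iff $G \cong H$. To turn this into a PC many-one reduction, given the $z_{ij}$ variables of the GI instance I would define the TI variables by constant-degree polynomials: on the locking blocks of $X, Y, Z$ take the identity matrix; on the graph block of each of $X, Y, Z$ take the entries of the permutation matrix $(z_{ij})$; cross blocks are zero. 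Since a permutation matrix equals the transpose of its inverse, the inverse variables $X', Y', Z'$ admit a degree-$1$ definition as well, namely $X'_{ij} = z_{ji}$ on the graph block and the identity on the locking block.

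Next I would verify that every TI equation admits an $O(1)$-degree PC derivation from the GI axioms under this substitution. The invertibility equations $XX' = X'X = \Id$, and analogously for $Y$ and $Z$, reduce block-wise to $\sum_k z_{ik} z_{jk} = \delta_{ij}$ and $\sum_k z_{ki} z_{kj} = \delta_{ij}$, which are derivable in degree $3$ from the row/column-sum axioms $\sum_j z_{ij} = 1$, $\sum_i z_{ij} = 1$, the collision axioms $z_{ij} z_{ij'} = z_{ij} z_{i'j} = 0$, and the Boolean axioms $z_{ij}^2 = z_{ij}$. The tensor equation~(\ref{eq:TI}) unfolds on the graph block into identities of the form $\sum_{i,j} z_{ii'} z_{jj'} A_{ij} - A'_{i'j'}$, which are derivable in $O(1)$ degree from the local-isomorphism axioms $z_{ij} z_{i'j'} = 0$ (whenever $(i,i') \mapsto (j,j')$ violates adjacency) together with the row/column-sum axioms; on the locking block and cross blocks the equation is either a trivial identity or an immediate consequence of the zero definition. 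The main obstacle is bookkeeping: one must choose the gadget so that it has a clean block-diagonal structure compatible with the three independent $\GL$ actions, so that every block of every TI equation decomposes cleanly and has an $O(1)$-degree derivation; the existing gadgets of \cite{FGS, GQ} have this property, but verifying it for the specific equations we use requires some care. Finally, the SoS case over $\R$ is immediate from the same reduction: the substitutions are polynomial identities and the derivations use only PC rules (no inequalities are introduced), so they lift to SoS derivations of the same degree, and the Berkholz--Grohe SoS lower bound for GI transfers verbatim.
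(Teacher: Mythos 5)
Your high-level plan is correct and matches the paper's: start from the Berkholz--Grohe $\Omega(n)$ degree lower bound for the GI equations of Definition~\ref{def:GI}, build a low-degree PC many-one reduction to \textsc{TI}, and invoke Lemma~\ref{lem:PCred}. The gap is in the gadget itself, which is where essentially all the technical work lives, and which you only gesture at.

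You posit a single ``locking block'' whose automorphisms force $Y,Z$ to match $X$, with the graph encoded so that isomorphism of the tensors reduces to the existence of a permutation $P$ with $PAP^T = A'$. But forcing $Y$ and $Z$ to \emph{equal} $X$ is not the hard part; the hard part is forcing $X$ to be a \emph{permutation} (or at least monomial) matrix rather than an arbitrary invertible one. If the gadget only collapses the three $\GL_n$ actions to a single one, you have reduced GI to (something like) congruence of adjacency matrices $XAX^T = A'$ over $\GL_n$, which is a vastly weaker equivalence (over an algebraically closed field it depends only on rank; over $\R$, only on signature). Nothing in your sketch explains what forces $X$ to be permutational, and this is exactly why the paper takes a two-stage route: the Petrank--Roth matrix $[I_m\,|\,I_m\,|\,I_m\,|\,D(G)]$ has the rigidity property (every low-weight codeword is a scaled row) that forces the transformation to be monomial, and then a further weight/support argument forces it to be a permutation. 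That structure is then packed into a tensor by a second gadget (Grochow--Qiao's identity-matrix-behind-each-column trick) in Proposition~\ref{prop:MonTI}. Neither stage is a routine ``locking block''; each requires its own uniqueness/rigidity argument, and each requires a separate (and nontrivial) verification that the resulting substitutions and derivations stay at $O(1)$ degree. Your acknowledgment that ``verifying it for the specific equations we use requires some care'' is doing the work of several pages of case analysis in the paper, and in the absence of an explicit gadget there is nothing to verify against.

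Two smaller points. First, your proposed substitution puts the \emph{same} permutation block $(z_{ij})$ into all three of $X,Y,Z$, but for the \textsc{TI} equations to come out right you must match the indexing conventions of the tensor action $(X,Y,Z)\cdot T$; with the gadgets actually used, $X$ and $Y$ carry the edge-permutation $Z^{(2)}_E$ and the vertex permutation lives inside $Y$ and $Z$ in a block-sum, not uniformly across all three. Getting these blocks to line up is part of what the paper's Proposition~\ref{prop:GIMon} checks. Second, your SoS transfer sentence is correct as far as it goes, but it relies on the reduction existing; it does not salvage the missing gadget.
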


\begin{proof}
Berkholz and Grohe \cite{BerkholzG15,BerkholzG16} show the same statement for $n$-vertex graphs of bounded vertex degrees, with the same PC/SoS degree bound. In Proposition~\ref{prop:GIMon} we show that \textsc{GI} reduces to \textsc{Monomial Code Equivalence} by a (2,4)-many-one reduction that turns $n$-vertex, $m$-edge graphs into $m \times (3m+n)$ matrices. in Proposition~\ref{prop:MonTI} we show that \textsc{Monomial Code Equivalence} reduces to \textsc{TI} by a (2,4)-many-one reduction that turns $k \times N$ matrices into $(k + 2N) \times N \times (1 + 2N)$ tensors. By Lemma~\ref{lem:PCred}, this completes the proof.
\end{proof}

To reduce from \textsc{GI} to \textsc{TI} we use the following intermediate problem. A matrix is \emph{monomial} if it has exactly one nonzero entry in each row and column; equivalently, a monomial matrix is the product of a permutation matrix and an invertible diagonal matrix. 

\begin{definition}
\textsc{Monomial Code Equivalence} is the problem: given two $k \times n$ matrices $C,C'$, do there exist matrices $X,Y$ such that $XCY^T = C'$ where $X$ is invertible and $Y$ is invertible and monomial? Given two such matrices $C,C'$, the equations for \textsc{Monomial Code Equivalence} are as follows. There are $2(k^2 + n^2)$ variables arranged into matrices $X,X'$ (of size $k \times k$) and $Y,Y'$ (of size $n \times n$). The equations are
\[
XCY^T = C' \qquad XX' = X'X =\Id \qquad YY' = Y'Y = \Id
\]
and
\begin{align*}
& y_{ij}y_{ij'} (\forall i \forall j \neq j') \qquad y_{ij} y_{i'j} (\forall i \neq i', \forall j) \\
& y'_{ij}y'_{ij'} (\forall i \forall j \neq j') \qquad y'_{ij} y'_{i'j} (\forall i \neq i', \forall j) 
\end{align*}
(Note: there are no equations forcing the variables to take on values in $\{0,1\}$.)
\end{definition}

\begin{proposition} \label{prop:GIMon}
The reduction of Petrank \& Roth \cite{PR} from \textsc{Graph Isomorphism} to \textsc{Linear Code Equivalence} over $\F_2$ in fact gives a (2,4)-many-one reduction from \textsc{Graph Isomorphism} to \textsc{Monomial Code Equivalence} (sic!) over any field.
\end{proposition}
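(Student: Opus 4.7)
The plan is to recall the Petrank--Roth construction explicitly, identify how the Monomial Code Equivalence variables $X, Y, X', Y'$ should be defined as low-degree polynomials in the $Z$-variables of the GI instance, and verify that each Monomial Code Equivalence equation can be derived in PC degree $\leq 4$ from the GI equations of Definition~\ref{def:GI}.

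\textbf{The construction and the induced $X,Y$.} Fix orderings on $V(G), V(H)$ and on $E(G), E(H)$. The Petrank--Roth reduction produces an $m \times (3m + n)$ matrix $C_G$ with one row per edge: the last $n$ columns are vertex indicators, and the first $3m$ columns are three copies of an edge indicator block (the replication is what rules out spurious monomial equivalences when lifted from graph isomorphisms). Given the GI variable matrix $Z$ encoding a vertex bijection, the column operation $Y$ on the $(3m+n)$-side should act as $Z$ on the vertex columns and as the induced edge permutation on each of the three blocks of $m$ edge columns; the row operation $X$ should be this same induced edge permutation on the $m$-dimensional row side.

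\textbf{Degree-2 definitions.} On vertex columns, $Y$ is just $Z$. On edge columns, the entry $Y_{ef}$, for $e = \{i,i'\} \in E(G)$ and $f = \{j,j'\} \in E(H)$, is defined by the quadratic polynomial
\[
Y_{ef} \;:=\; z_{ij}z_{i'j'} + z_{ij'}z_{i'j},
\]
which, under the Boolean and functional GI axioms, equals $1$ precisely when $Z$ sends $\{i,i'\}$ to $\{j,j'\}$. The row matrix $X$ is defined identically, and $Y', X'$ are defined as the transposes of the corresponding permutation polynomials (which the GI axioms force to be inverses). Replication handles the three edge-indicator blocks. Thus every Monomial Code Equivalence variable is a polynomial of degree $\leq 2$ in the $z_{ij}$.

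\textbf{Derivations in degree $\leq 4$.} Each Monomial Code Equivalence axiom is degree $\leq 2$ in the $X, Y$ variables, hence degree $\leq 4$ in the $z_{ij}$ after substitution. For the monomial axioms $y_{ef}y_{ef'}$ and $y_{ef}y_{e'f}$ on the edge block, expanding the product yields a sum of quartic monomials in $z$, each of which is annihilated by the GI axioms $z_{ij}z_{ij''} = 0$ and $z_{ij}z_{i''j} = 0$ (same row/column of $Z$) or by the local-isomorphism axiom $z_{ij}z_{i'j'} = 0$ when $\{j,j'\} \notin E(H)$; the vertex-column monomial axioms follow directly from the GI row/column axioms for $Z$. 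For invertibility $YY^T = \Id$ (and similarly for $X$) on the edge side, expanding $\sum_f Y_{ef} Y_{e'f}$ gives a sum of products of four $z$-variables that telescopes via the functional axioms $\sum_j z_{ij} = 1$ and the local-isomorphism axioms to $\delta_{e,e'}$ in degree $4$; on the vertex side it is the image of the GI identity $ZZ^T = \Id$. For the main product equation $XCY^T = C'$, each entry is degree $4$ in $z$ and matches the corresponding entry of $C'$ coordinatewise once the local-isomorphism axioms eliminate the ``wrong'' edge-image monomials. Invoking Lemma~\ref{lem:PCred} then completes the $(2,4)$-many-one reduction.

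\textbf{Main obstacle.} The principal bookkeeping burden is the verification of $XCY^T = C'$ in degree $4$: one must check that for every row-edge $e$ and every column $c$ of $C'$, the only surviving monomials in the expansion are those corresponding to the unique edge of $H$ to which $Z$ sends $e$, with all other terms killed by GI non-edge and row/column axioms. The analogous degree-$4$ cancellation for the edge-block monomial axioms of $Y$ is the other delicate step; both rely crucially on the presence of the Boolean axioms $z_{ij}^2 = z_{ij}$ and the local-isomorphism axioms in Definition~\ref{def:GI}, which are what allow quartic $z$-monomials to collapse to $0$ or $1$ without increasing degree further.
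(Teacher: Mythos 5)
Your proposal follows essentially the same route as the paper: you identify the same degree-2 substitution sending the edge-column and row indices to the induced "edge permutation" polynomials $z_{ij}z_{i'j'}+z_{ij'}z_{i'j}$ (what the paper denotes $Z^{(2)}_E$, with appropriate transposes), and you verify the Monomial Code Equivalence axioms in degree $\leq 4$ by killing unwanted quartic monomials with the GI row/column and local non-isomorphism axioms, and using the functional/Boolean axioms for the diagonal identities. The only cosmetic discrepancy is that the vertex block of $Y$ should be $Z^T$ rather than $Z$ (a matter of which side of $XCY^T=C'$ you place the transpose), which doesn't affect the argument.

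The one genuine gap is the many-one-reduction direction, which you dispatch with a parenthetical that "the replication is what rules out spurious monomial equivalences when lifted from graph isomorphisms." A $(d_1,d_2)$-many-one reduction requires, separately from the low-degree PC derivation, that the map $(G,H)\mapsto (M(G),M(H))$ is a valid many-one reduction to \textsc{Monomial} Code Equivalence, i.e., that monomial equivalence of $M(G)$ and $M(H)$ forces $G\cong H$. This is exactly the nontrivial "(sic!)" in the statement: Petrank--Roth proved a reduction to \emph{permutational} code equivalence, and upgrading to monomial equivalence requires an argument. The paper proves it via the fact (from \cite{PR, GrochowLie}) that, up to permutation and scaling of rows, $M(G)$ is the \emph{unique} generator matrix of its code all of whose rows have Hamming weight $\leq 5$ while every other nonzero codeword has weight $\geq 6$; this structural property is what forces any monomial equivalence $(X,Y)$ of the codes to make $X$ monomial, hence to descend to a permutational equivalence of $M(G), M(H)$, and thence to a graph isomorphism. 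Your sketch acknowledges the issue but does not supply this argument, so as written the claim that the reduction lands in Monomial (rather than only Permutational) Code Equivalence is unproved. Adding that paragraph, along the lines of the paper, would complete the proposal.
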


\begin{proof}
The reduction of Petrank \& Roth is as follows: given a simple undirected graph $G$ with $n$ vertices and $m$ edges, let $D(G)$ be its $m \times n$ incidence matrix: $D_{e,v} = 1$ iff $v \in e$ and is 0 otherwise, and let $M(G)$ be the $m \times (3m + n)$ matrix 
\[
M(G) = \left[
\begin{array}{c|c|c|c}
I_m & I_m & I_m & D(G)
\end{array}
\right].
\]

\textbf{Many-one reduction.} It was previously shown (over $\F_2$ in \cite{PR} and over arbitrary fields in \cite[Lem.~II.4]{GrochowLie}) that this gives a many-one reduction to \textsc{Permutational Code Equivalence}. Here we observe that the same reduction also gives a reduction to \textsc{Monomial Code Equivalence}. Thus, all that remains to show is that if $M(G)$ and $M(H)$ are monomially equivalent, then $G$ must be isomorphic to $H$. 

In fact, what was shown in \cite{PR} (over arbitrary fields in \cite{GrochowLie}) is that, up to permutation and scaling of the rows, $M(G)$ is the unique generator matrix of its code satisfying the following properties: (1) $M(G)$ is $m \times (3m+n)$, (2) each row has Hamming weight $\leq 5$, (3) any linear combination that includes two or more rows with nonzero coefficients has Hamming weight $\geq 6$. 

Now, suppose $(X,Y)$ is a monomial equivalence of the codes $M(G), M(H)$. Then the rowspans of $M(G) Y^T$ and $M(H)$ are the same. Since $Y$ is monomial, if we consider just the supports of the rows of $M(G) Y^T$, up to re-ordering the rows, by the preceding paragraph, those supports must be the same as the supports of the rows of $M(H)$. Thus $X$ must also be monomial. Say $X = DP$ and $Y=EQ$ where $D,E$ are diagonal and $P,Q$ are permutation matrices. Then $P M(G) Q^T$ has the same support as $X M(G) Y^T = M(H)$, and since $P$ and $Q$ are permutation matrices and $M(G)$ and $M(H)$ have all entries in $\{0,1\}$, we must have $P M(G) Q^T = M(H)$. Thus $M(G)$ and $M(H)$ are in fact equivalent by a \emph{permutation} matrix (in place of the monomial matrix $Y$). Thus, by the fact that $(G,H) \mapsto (M(G), M(H))$ was a reduction to \textsc{Permutational Code Equivalence}, we conclude that $G \cong H$.

\textbf{Low-degree PC reduction.} Let $X,X',Y,Y'$ be the variable matrices in the equations for \textsc{Monomial Code Equivalence} of $M(G), M(H)$, and let $Z$ be the variable matrix in the equations for \textsc{Graph Isomorphism} of $G,H$. Let $n = |V(G)|, m = |E(G)|$; so, $X,X'$ are of size $m$, $Y,Y'$ are of size $3m+n$, and $Z$ is of size $n$. 

Let $Z^{(2)}$ denote the $\binom{n}{2} \times \binom{n}{2}$ matrix whose $(\{i,i'\}, \{j,j'\})$ entry is $z_{ij} z_{i'j'} + z_{ij'} z_{i'j}$. The idea is that if $Z$ is a map on the vertices, then $Z^{(2)}$ is the corresponding map on the edges; the two terms come from the fact that the edge $\{i,i'\}$ can be mapped to the edge $\{j,j'\}$ either by $(i,i') \mapsto (j,j')$ or by $(i,i') \mapsto (j',j)$. Note that, since $Z$ is a permutation matrix, at most one of these terms is nonzero, and thus $Z^{(2)}$ is also a $\{0,1\}$-matrix (in fact, a permutation matrix). Let $Z^{(2)}_E$ denote the $|E| \times |E|$ submatrix of $Z^{(2)}$ all of whose row indices are $\{i,i'\} \in E(G)$ and all of whose column indices are $\{j,j'\} \in E(H)$. Note also that $(Z^{(2)}_E)^T = (Z^T)^{(2)}_E$, so we use these notations interchangeably for convenience.

Now consider the following substitution:
\[
\begin{array}{ll}
X \mapsto (Z^{(2)}_E)^T & Y \mapsto (Z^T)^{(2)}_E \oplus (Z^T)^{(2)}_E \oplus (Z^T)^{(2)}_E \oplus (Z^T)  \\
X' \mapsto Z^{(2)}_E & Y' \mapsto Z^{(2)}_E \oplus Z^{(2)}_E \oplus Z^{(2)}_E \oplus Z \\
\end{array}
\]
After making these substitutions in the equations for \textsc{Monomial Code Equivalence} of $M(G), M(H)$, we get the equations
\begin{equation} \label{eq:monGI}
(Z^{(2)}_E)^T Z^{(2)}_E = Z^{(2)}_E (Z^{(2)}_E)^T = \Id_m \qquad (Z^{(2)}_E)^T D(G) Z = D(H) \qquad ZZ^T = Z^T Z = \Id_n
\end{equation}
along with equations saying that $Z$ and $Z^{(2)}_E$ are monomial.

We now show how to derive these equations in low-degree PC from the \textsc{GI} equations. 

The monomial equations for $Z$ are part of the \textsc{GI} equations, so there is nothing to do for those.

The monomial equations for $Z^{(2)}_E$ are of the form $(z_{ij} z_{i'j'} + z_{ij'} z_{i'j})(z_{k\ell} z_{k'\ell'} + z_{k\ell'} z_{k'\ell})$ where either (1) $\{i,i'\} = \{k,k'\}$ and $\{j,j'\} \neq \{\ell,\ell'\}$ or (2) vice versa. We expand out to get 
\[
z_{ij} z_{i'j'} z_{k\ell} z_{k'\ell'} + z_{ij} z_{i'j'} z_{k\ell'} z_{k'\ell} + z_{ij'} z_{i'j}z_{k\ell} z_{k'\ell'} + z_{ij'} z_{i'j} z_{k\ell'} z_{k'\ell}
\]
We show how to get this equation in case (1); case (2) follows similarly, \emph{mutatis mutandis}. In case (1), without loss of generality suppose that $i=k$, $i'=k'$, and $j \notin \{\ell,\ell'\}$. The first two terms are divisible by the \textsc{GI} equations $z_{ij} z_{i \ell}$ (since $i=k$ and $j \neq \ell$), the third term is divisible by $z_{i'j} z_{i' \ell'}$ (since $i'=k'$ and $j \neq \ell'$), and the last term is divisible by $z_{i'j} z_{i' \ell}$ similarly.

Next, the equations $ZZ^T=\Id_n$ are, expanded out, 
\[
\sum_j z_{ij} z_{ij} -1 (\forall i) \qquad \sum_j z_{ij} z_{kj} (\forall i \neq k).
\]
The first is gotten by linear combination from $1-\sum_j z_{ij}$ and the Boolean axioms $z_{ij}^2 - z_{ij}$. The second is a linear combination of the monomial axioms $z_{ij} z_{kj}$ (part of the local non-isomorphism axioms). Similarly for $Z^T Z = \Id$, using $1 - \sum_i z_{ij}$ instead.

Next, we expand out the equations $Z^{(2)}_E (Z^T)^{(2)}_E = \Id_m$, to get\footnote{We use the notation $\sum_{\{j,j'\} \in E(H)}$ to denote a sum in the index of summation takes on the value $e \in E(H)$ for each edge of $H$ exactly once. Because our edges are undirected, we only use such sums when the summand expression is itself invariant under swapping the roles of $j,j'$. If so desired, one could equivalently say $\sum_{j < j', \{j,j'\} \in E(H)}$.}
\[
\sum_{\{j,j'\} \in E(H)} (z_{ij} z_{i'j'} + z_{ij'} z_{i'j})(z_{kj} z_{k'j'} + z_{k'j} z_{kj'}) - \delta_{\{i,i'\}, \{k,k'\}} (\forall \{i,i'\}, \{k,k'\} \in E(G))
\]
Thus, for $\{i,i'\} \neq \{k,k'\}$, we need to derive
\[
\sum_{\{j,j'\} \in E(H)} \left( z_{ij} z_{i'j'} z_{kj} z_{k'j'}+ z_{ij'} z_{i'j} z_{kj} z_{k'j'} + z_{ij} z_{i'j'} z_{k'j} z_{kj'} + z_{ij'} z_{i'j} z_{k'j} z_{kj'} \right).
\]
Without loss of generality, suppose that $i \notin \{k,k'\}$. Then the first two terms of each summand are divisible by the \textsc{GI} equation $z_{ij} z_{kj}$, the third term is divisible by $z_{ij} z_{k'j}$, and the last term is divisible by $z_{ij'} z_{kj'}$. 
On the other hand, when $\{i,i'\} = \{k,k'\}$, we need to derive
\[
-1 + \sum_{\{j,j'\} \in E(H)} \left( z_{ij}^2 z_{i'j'}^2 + 2 z_{ij'} z_{i'j} z_{ij} z_{i'j'} + z_{ij'}^2 z_{i'j}^2 \right).
\]
The middle terms of each summand are divisible by the \textsc{GI} equations $z_{ij'} z_{ij}$. For the first and third terms, we can use the Boolean axioms to remove the squares, and thus we are left to derive
\begin{equation} \label{eq:Z2}
-1 + \sum_{\{j,j'\} \in E(H)} \left(z_{ij} z_{i'j'} + z_{ij'} z_{i'j} \right)
\end{equation}
We derive this from the \textsc{GI} equations as follows. Consider $(\sum_{j} z_{ij} -1)(\sum_{j'} z_{i'j'} -1) + (\sum_{j} z_{ij} -1) + (\sum_{j'} z_{i'j'} -1)$ and break up the resulting sum according to whether $j = j'$, $\{j,j'\} \in E(H)$ or $\{j,j'\} \notin E(H)$. Then we get
\[
\sum_j z_{ij} z_{i'j}  + \sum_{j,j' : \{j,j'\} \in E(H)} z_{ij} z_{i'j'} + \sum_{j \neq j' \{j,j'\} \notin E(H)} z_{ij} z_{i'j'} - 1
\]
Every summand in the first sum is a monomial axiom since $i \neq i'$. Every summand in the third sum is a local non-isomorphism axiom, since $\{i,i'\} \in E(G)$ but $\{j,j'\} \notin E(H)$. Note that every edge $\{j,j'\}$ of $E(H)$ is represented twice in the middle sum: once as $(j,j')$ and once as $(j',j)$. Thus, the above simplifies to 
\[
\sum_{\{j,j'\} \in E(H)} (z_{ij} z_{i'j'} + z_{ij'} z_{i'j}) -1,
\]
which is what we sought to derive. The derivation of $(Z^{(2)}_E)^T Z^{(2)}_E = \Id$ is similar.

Finally, we show how to derive the equation $(Z^{(2)}_E)^T D(G) Z = D(H)$ from the equations $ZA(G) = A(H) Z$, where $A(G)$ denotes the adjacency matrix of $G$, with $A(G)_{ii'} = 1$ iff $\{i,i'\} \in E(G)$. Writing out the equations in indices, we need to derive
\[
\sum_{\{i,i'\} \in E(G),k \in V(G)} \left(Z^{(2)}_{E}\right)_{\{i,i'\},\{j,j'\}} D(G)_{\{i,i'\}, k} z_{k \ell} = D(H)_{\{j,j'\},\ell} (\forall \ell \in V(H), \forall \{j,j'\} \in E(H))
\]
Using the fact that $D(G)_{\{i,i'\},k} = \delta_{ik} + \delta_{i'k}$ and the definition of $Z^{(2)}$, this is the same as
\[
\sum_{\{i,i'\} \in E(G),k \in V(G)} \left(z_{ij} z_{i'j'} + z_{ij'} z_{i'j} \right) (\delta_{ik} + \delta_{i'k}) z_{k \ell} = \delta_{j\ell} + \delta_{j'\ell} (\forall \ell \in V(H), \forall \{j,j'\} \in E(H))
\]
Thus we need to derive:
\[
\sum_{\{i,i'\} \in E(G)} \left(z_{ij} z_{i'j'} + z_{ij'} z_{i'j} \right) (z_{i \ell}  + z_{i' \ell})= \begin{cases}
1 & \ell \in \{j,j'\} \\
0 & \text{otherwise}.
\end{cases}
\]
Expanding out the summand, we find the four terms
\[
z_{ij} z_{i'j'} z_{i\ell} + z_{ij} z_{i'j'} z_{i' \ell} + z_{ij'} z_{i'j} z_{i\ell} + z_{ij'} z_{i'j} z_{i'\ell}.
\]
When $\ell \notin \{j,j'\}$, each of these terms is divisible by one of the monomial (local non-isomorphism) axioms, respectively: $z_{ij} z_{i\ell}$, $z_{i'j'} z_{i'\ell}$, $z_{ij'} z_{i\ell}$, and $z_{i'j} z_{i'\ell}$.

Finally, when $\ell \in \{j,j'\}$, without loss of generality suppose that $\ell = j$. Then the only terms that are not divisible by the monomial axioms as above are $z_{ij}^2 z_{i'j'} + z_{ij'} z_{i'j}^2$. Using the Boolean axioms we can easily convert each such summand to $z_{ij} z_{i'j'} + z_{ij'} z_{i'j}$. The derivation of the sum of these over all $\{i,i'\} \in E(G)$ is analogous, \emph{mutatis mutandis}, to the derivation of (\ref{eq:Z2}) above. This completes the proof.
\end{proof}

\begin{proposition} \label{prop:MonTI}
The many-one reduction from \textsc{Monomial Code Equivalence} to \textsc{Tensor Isomorphism} from Grochow \& Qiao \cite{GQ2} is in fact a (2, 4)-many-one reduction.
\end{proposition}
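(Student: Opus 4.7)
The plan is to check that the Grochow--Qiao reduction, which is already known to be a polynomial-time many-one reduction by \cite{GQ2}, satisfies the additional clause of Definition~\ref{def:reduction2}: for each unsatisfiable \textsc{MCE} instance, the \textsc{TI} instance it produces is $(2,4)$-reducible from it in the sense of Definition~\ref{def:reduction}. This requires a degree-$\leq 2$ substitution expressing each \textsc{TI} variable as a polynomial in the \textsc{MCE} variables, together with a degree-$\leq 4$ PC derivation of the substituted \textsc{TI} axioms from the \textsc{MCE} axioms. Explicitly, the reduction sends a $k\times N$ matrix $C$ to a $(k+2N)\times N\times(1+2N)$ tensor $T(C)$ whose first frontal slice (along the third mode) is $\begin{bmatrix} C \\ 0_{2N\times N}\end{bmatrix}$ and whose remaining $2N$ frontal slices form a gadget of rank-$1$ matrices $e_{k+j}\otimes e_j$ and $e_{k+N+j}\otimes e_j$ for $j\in[N]$. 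The gadget is designed so that any tensor isomorphism $(X_T,Y_T,Z_T)\cdot T(C)=T(C')$ must have $Y_T$ monomial, $X_T$ block-diagonal with respect to $[1{:}k]\sqcup[k{+}1{:}k{+}2N]$, and $Z_T$ block-diagonal with respect to $\{1\}\sqcup\{2,\dots,1+2N\}$, with the top-left $k\times k$ block of $X_T$ satisfying the \textsc{MCE} transformation equation against $Y_T$.

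For the substitution, I would send $Y_T\mapsto Y_M$ and $Y_T'\mapsto Y_M'$; send the top-left $k\times k$ block of $X_T$ to $X_M$ and of $X_T'$ to $X_M'$; fill the $2N\times 2N$ gadget block of $X_T$ (respectively $Z_T$) using entries of $Y_M$ or $Y_M'$ as dictated by the ``induced'' \textsc{TI} isomorphism arising from any \textsc{MCE} solution $(X_M, Y_M)$; and set all remaining entries to $0$. A symmetric definition handles $X_T', Z_T'$. Most entries become either $0$ or a single entry of an \textsc{MCE} variable matrix (degree $1$); a small number of entries in the gadget blocks of $Z_T$ and $Z_T'$ involve a product of one entry of $Y_M$ with one entry of $Y_M'$ (degree $2$), which is what forces $d_1=2$ rather than $1$.

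To derive the substituted \textsc{TI} axioms in PC degree at most $4$, I would proceed in three groups. (i) The invertibility axioms $X_T X_T'=X_T'X_T=\Id$, $Y_TY_T'=Y_T'Y_T=\Id$, and $Z_TZ_T'=Z_T'Z_T=\Id$, after substitution, decompose block-by-block into \textsc{MCE} invertibility axioms and linear combinations of \textsc{MCE} monomial axioms on $Y_M$ and $Y_M'$; since monomial axioms have degree $2$, each such derivation stays within degree $4$ after accounting for the degree-$\leq 2$ substitution. (ii) The substituted first-slice tensor equation reduces on the nose to the \textsc{MCE} axiom $X_M C Y_M^T=C'$, which is degree $3$. (iii) Each of the $2N$ gadget-slice tensor equations becomes, after substitution, a sum of terms each of which is either trivially zero or certified by multiplying a monomial axiom of \textsc{MCE} by a degree-$\leq 2$ combination of substituted variables. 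The principal obstacle, and the reason the degree $4$ bound is essentially tight, lies in case (iii): the substituted gadget-slice equations contain cross-term pairings of entries of $Y_M$ with entries of $Y_M'$ whose cancellation must be witnessed by carefully pairing each cross-term with a monomial axiom (degree $2$) multiplied by an invertibility axiom entry or another substituted variable (again degree $2$), landing exactly at degree $4$. Verifying that this telescoping works out cleanly for all $2N$ gadget slices is the main combinatorial bookkeeping of the proof; once complete, the proposition follows by Definition~\ref{def:reduction2}.
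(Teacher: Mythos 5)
Your high-level strategy is correct---you correctly identify that one must exhibit a degree-$2$ substitution sending the \textsc{TI} variables to polynomials in the \textsc{MCE} variables and then derive the substituted \textsc{TI} equations in PC degree $4$, and you correctly note that the first frontal-slice equation collapses to the \textsc{MCE} axiom $ACB^T = C'$ and that the mixed ($\alpha \neq \beta$) gadget terms vanish via monomial axioms. However, your proposed substitution for the gadget block of $Z_T$ is wrong, and this breaks the core of the argument.

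Let $A,A'$ be the variable matrices for the invertible factor and $B,B'$ for the monomial factor in \textsc{MCE}. The paper's substitution is
\[
X \mapsto A \oplus (B \otimes I_2), \quad Y \mapsto B, \quad Z \mapsto 1 \oplus (B' \circ B') \otimes I_2,
\]
and symmetrically for the primed matrices with the roles of $B,B'$ swapped in $Z'$. The crucial point is that the gadget block of $Z$ consists of \emph{squares} of entries of $B'$---not products of an entry of $B$ with an entry of $B'$, as you propose. This is forced by the scaling: the $B \otimes I_2$ block of $X$ and the $B$ in $Y$ each scale a gadget entry by the same nonzero scalar $d$ of $B$, for a combined factor of $d^2$, which $Z$ must undo by multiplying by $d^{-2} = (b')^2$. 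A product $b \cdot b'$ on the common support equals $1$ and would leave an uncompensated factor of $d^2$. Moreover, for a monomial $B$, the supports of $B$ and $B'$ are transposes of one another, so entrywise products $B \circ B'$ are identically zero; to get a degree-$2$ polynomial with the correct support and scaling one needs $B' \circ B'$.

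This matters for the derivation as well. After the (correct) substitution, the nontrivial equations to derive are of the form $(B \circ B)(B' \circ B') = \Id$, i.e., $\sum_j b_{ij}^2 (b'_{jk})^2 = \delta_{ik}$, together with the analogous equations arising from the gadget slices (which, after cancelling the $\alpha \neq \beta$ terms by the column-monomial axioms on $B$, reduce to the same family up to transposition of indices). These are derived by squaring the starting axiom $\sum_j b_{ij} b'_{jk} - \delta_{ik}$ and cancelling the $j \neq j'$ cross-terms $b_{ij} b'_{jk} b_{ij'} b'_{j'k}$ against the row-monomial axioms $b_{ij} b_{ij'}$, each multiplied by the degree-$2$ polynomial $b'_{jk} b'_{j'k}$. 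This is where degree $4$ arises. Your proposal gestures at ``cross-term pairings of entries of $Y_M$ with $Y_M'$'' but does not identify the specific mechanism---squaring the inversion axiom and pairing with $B$-monomial axioms---and because of the incorrect $Z_T$ substitution, the substituted equations you would be trying to derive are not even the right ones. Fixing the substitution to use $(B' \circ B') \otimes I_2$ in $Z_T$ (and $(B \circ B) \otimes I_2$ in $Z_T'$) is essential and makes the degree-$4$ derivation go through.
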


\begin{proof}
We recall the reduction, then prove that it is a low-degree PC reduction. Let $M$ be a $k \times n$ matrix. We build a 3-tensor of size $(k+2n) \times n \times (1 + 2n)$ as follows. The first frontal slice is $\begin{bmatrix} M \\ 0_{2n \times n} \end{bmatrix}$. The remaining $2n$ slices all have just a single nonzero entry, which serve to place a $2 \times 2$ identity matrix ``behind and perpendicular'' to $M$, one $2 \times 2$ matrix in each column. Let us index these slices by $[n] \times 2$. Then the $(i,b)$ slice has a $1$ in entry $(2(i-1) + b, i)$, for all $i \in [n], b \in [2]$. Let us call this tensor $r(M)$. Then the reduction maps $M,M'$ to $r(M), r(M')$.

Let $X,X',Y,Y',Z,Z'$ be the variable matrices for the \textsc{TI} equations for $r(M), r(M')$, and let $A,B,A',B'$ be the variable matrices for \textsc{Monomial Code Equivalence} of $M,M'$ (that is, $AMB^T = M'$, $A$ is invertible, $B$ is monomial and invertible). Consider the substitution:
\[
X \mapsto A \oplus (B \otimes I_2) \qquad Y \mapsto B \qquad Z \mapsto 1 \oplus (B' \circ B') \otimes I_2
\]
\[
X' \mapsto A' \oplus (B' \otimes I_2) \qquad Y' \mapsto B' \qquad Z' \mapsto 1 \oplus (B \circ B) \otimes I_2.
\]
As before, $B \circ B$ denotes the Hadamard or entry-wise product. Let us see what the \textsc{TI} equations become under this substitution. We get
\[
AMB^T = M' \qquad AA' = A'A = \Id \qquad BB' = B'B = \Id \qquad (B' \circ B')(B \circ B) = (B \circ B)(B' \circ B') = \Id
\]
Indeed, notice that the effect of the $B \otimes I_2$ in $X$ and the $B$ in $Y$ is that the row and column locations of the $2 \times 2$ matrix gadgets get permuted in the same way, and the gadget get multiplied by the \emph{square} of the nonzero entries of $B$. These are then multiplied by the $B' \circ B'$ in $Z$.

Now, we derive these equations from the equations for \textsc{Monomial Code Equivalence}. The first three are already present in the equations for \textsc{Monomial Code Equivalence}. The last one we expand out, to see that we need to derive:
\[
\sum_{j} b_{ij}^2 (b'_{jk})^2 = \delta_{ik} (\forall i,k)
\]
Now, for $i \neq k$, we may take the equation $\sum_j b_{ij} b'_{jk}$ and square it, to derive
\[
\sum_{j \neq j'} b_{ij} b'_{jk} + b_{ij'} b'_{j'k} + \sum_j b_{ij}^2 b_{j'k}^2.
\]
Each term in the first sum is divisible by one of the monomial axioms $b_{ij} b_{ij'}$ since $j \neq j'$, and the second sum is what we wanted to derive.

Finally, for $i = k$, we square the equation $\sum_j b_{ij} b'_{ji} - 1$ and add to it $2 \left(\sum_j b_{ij} b'_{ji} -1 \right)$. We then proceed to cancel terms with the monomial axioms as above, and end up with $\sum_j b_{ij}^2 (b'_{ji})^2 - 1$, as desired.
\end{proof}

\section{Lower bound on PC degree for Tensor Isomorphism from Random 3XOR} \label{sec:lowerbound}
We get a lower bound on PC refutations for \textsc{Tensor Isomorphism} through the following series of low-degree PC many-one reductions (Definition~\ref{def:reduction2}):
\begin{align}
\textsc{Random 3-XOR} 
 \leq_m^{PC} & \{\pm 1\}\textsc{-Monomial Equivalence of} \\
 & \{\pm 1\}\textsc{-Multilinear Noncommutative Cubic Forms} \\ 
 \leq_m^{PC} & \textsc{Monomial Equivalence of } \{\pm 1\}\textsc{ Noncommutative Cubic Forms} \\
 \leq_m^{PC} & \textsc{Equivalence of } \{\pm 1\}\textsc{ Noncommutative Cubic Forms} \label{reduction:NC} \\
 \leq_m^{PC} & \textsc{Tensor Isomorphism} \label{reduction:TI}
\end{align}
We then appeal to the following  PC lower bound on \textsc{Random 3-XOR}:

\begin{theorem}[{Ben-Sasson \& Impagliazzo \cite[Thm.~3.3 \& Lem.~4.7]{BI}}] \label{thm:random}
Let $\F$ be any field of characteristic $\neq 2$. A random 3-XOR instance with clause density $\Delta = m/n$ requires PC degree $\Omega(n/\Delta^2)$ to refute, with probability $1-o(1)$.
\end{theorem}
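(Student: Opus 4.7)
The plan is to use the classical expansion-based pseudo-expectation method for PC degree lower bounds (developed by Alekhnovich--Razborov, Ben-Sasson--Impagliazzo, and others), specialized to random sparse 3-XOR. The proof splits into two largely independent parts: a combinatorial guarantee that the random 3-XOR instance is a good hypergraph expander, and an algebraic pseudo-solution whose existence blocks any PC refutation of low degree.

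First I would establish boundary expansion: with probability $1 - o(1)$, for every set $I$ of clauses with $|I| \leq k := c \cdot n/\Delta^2$, the boundary $\partial(I)$ of variables appearing in exactly one clause of $I$ satisfies $|\partial(I)| \geq \epsilon |I|$, for suitable constants $c, \epsilon > 0$. This is a standard first-moment computation over the random 3-uniform hypergraph: a failure of expansion for $|I| = s$ forces many collisions among the $3s$ clause-variable incidences, and the expected number of such bad $I$ is bounded by $\binom{m}{s} \cdot (O(s/n))^{\Omega(s)}$; summing over $s$ gives $o(1)$ precisely while $s$ stays below $\Theta(n/\Delta^2)$, which is the origin of the $\Delta^2$ denominator.

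Second, since the characteristic is not $2$, I would encode 3-XOR multiplicatively: variables $x_i$ with Boolean axioms $x_i^2 - 1$, and each XOR clause $x_i \oplus x_j \oplus x_k = b_C$ encoded as $x_i x_j x_k - (-1)^{b_C}$. Modulo $x_i^2 = 1$ the polynomial algebra is spanned by multilinear monomials $x_S$ indexed by subsets $S \subseteq [n]$, and the clause axioms induce an $\mathbb{F}_2$-linear coset structure on these subsets (with sign values assigned by products of the $b_j$). I would define a linear functional $\tilde{\E}$ on multilinear polynomials of degree at most $d := \Theta(n/\Delta^2)$ by assigning, to each $x_S$ with $|S| \leq d$, the value determined by the canonical minimum-support representative of $S$'s coset: if that representative is empty, use the corresponding sign, and otherwise set $\tilde{\E}[x_S] = 0$. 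One then verifies $\tilde{\E}[1] = 1$ and $\tilde{\E}[p \cdot q] = 0$ for every axiom $q$ and every $p$ of degree at most $d - \deg q$; applied to any candidate PC refutation of degree at most $d$, this yields $0 = \tilde{\E}[1] = 1$, a contradiction.

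The main obstacle is the simultaneous well-definedness and axiom-vanishing of $\tilde{\E}$. Unambiguity of the canonical representative amounts to ruling out any set of at most $d$ clauses whose $\mathbb{F}_2$-symmetric-difference is empty but whose sign product is nontrivial---such a set would have no boundary variables, contradicting the expansion established in the first step. The axiom-vanishing condition $\tilde{\E}[p \cdot (x_{S_j} - (-1)^{b_j})] = 0$ requires tracking how multiplication by $x_{S_j}$ permutes the coset of each monomial in $p$, and again expansion ensures that no collection of at most $d$ such permutations can accumulate to a nontrivial inconsistency in the assigned signs. This is the technical heart of the argument and is where the expansion threshold translates directly into the PC degree lower bound.
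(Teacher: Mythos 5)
The paper does not prove this theorem; it is stated as a citation of Ben-Sasson \& Impagliazzo, so there is no internal proof to compare against. Evaluated on its own, your plan is a correct route: it is the standard dual-certificate (pseudo-expectation) argument for binomial systems, essentially the same construction Grigoriev used for Positivstellensatz degree lower bounds on parity and Schoenebeck for SoS, adapted here to PC over a field of odd characteristic. You correctly identify the two load-bearing ingredients: (i) with probability $1-o(1)$, the random $3$-uniform clause-variable hypergraph has boundary expansion up to size $\Theta(n/\Delta^2)$, which is where the $\Delta^2$ in the bound comes from; and (ii) a linear functional $\tilde{\E}$ on multilinear monomials determined by the $\F_2$-coset structure of the clause supports, vanishing on every axiom times a low-degree polynomial while assigning $\tilde{\E}[1]=1$.

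The one thing to flag is that, as written, "the canonical minimum-support representative of $S$'s coset" has to be the representative reachable by symmetric-differencing at most $k$ clause supports for some bounded $k$, not the global coset representative; otherwise the degree accounting does not close. With that reading, well-definedness and sign-consistency follow from expansion exactly as you argue: a nonempty set of clauses whose supports symmetric-difference to $\emptyset$ has empty boundary, contradicting expansion. For the axiom-vanishing step you also need a small gap between the monomial-degree threshold $d$ and the clause-count threshold $k$ (so that if $x_T$ reduces with $\leq k$ clauses then $x_{T\triangle S_j}$ does too, for any clause $j$), but that is a routine constant-factor adjustment. Ben-Sasson and Impagliazzo's own proof (the theorem and lemma the paper cites) is phrased via a reduction of low-degree PC on binomial ideals to a Gaussian-calculus/width argument rather than via an explicit pseudo-expectation, but both routes rest on the same expansion property, and your version is, if anything, more self-contained for the specific case of $3$-XOR.
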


This allows us to prove:

\begin{theorem} \label{thm:lowerbound}
Over any field of characteristic $\neq 2$, there is a random distribution of instances of $n \times n \times n$ \textsc{Tensor Isomorphism}---which assigns nonzero probability to at least $2^{\Omega(\sqrt[4]{n}) \log n}$ different instances---whose associated equations require PC degree $\Omega(\sqrt[4]{n})$ to refute, with probability $1-o(1)$.
\end{theorem}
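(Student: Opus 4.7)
The plan is to chain the four many-one reductions displayed in (\ref{reduction:NC})--(\ref{reduction:TI}) and then invoke Theorem~\ref{thm:random} via Lemma~\ref{lem:PCred}. I will start with a random 3-XOR instance $\Phi$ on $n_0$ variables and $m = \Delta n_0$ clauses, where $\Delta$ is a tunable density to be chosen at the end. Theorem~\ref{thm:random} says $\Phi$ requires PC degree $\Omega(n_0/\Delta^2)$ with probability $1-o(1)$, and the support of the distribution over such instances has size at least $\binom{8 n_0^3}{m} = 2^{\Omega(m \log n_0)}$.

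For the first reduction, I will associate to $\Phi$ two multilinear noncommutative cubic forms in $n_0$ variables: one assembles a signed monomial for each clause (the sign encoding the parity bit of that clause), and the other is a canonical reference form on the same multiset of underlying monomials but without the signs. A satisfying assignment $a \in \{0,1\}^{n_0}$ then corresponds to the signed-permutation (i.e.\ $\{\pm 1\}$-monomial) change of basis $x_i \mapsto (-1)^{a_i} x_i$, and substituting the assignment variable $z_i \mapsto $ the corresponding diagonal $\pm 1$ entry lets the $\{\pm 1\}$-monomial-equivalence equations be derived from the clause equations in constant degree. The next three reductions are gadget constructions that respectively (a) relax multilinearity of the target cubic forms by appending ``rigidifying'' cubic monomials (e.g.\ $x_i^3$ terms) whose coefficients pin down the diagonal of the equivalence, (b) drop the $\{\pm 1\}$-monomial restriction on the allowed equivalences by attaching labeling gadgets that force the equivalence matrix to be a signed permutation, and (c) convert equivalence of noncommutative cubic forms (a single $X$ acting on all three factors) into genuine 3-tensor isomorphism (three independent matrices $X, Y, Z$) by adding slices that force $X = Y = Z$ after a change of basis. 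In each step the target is a system of $O(1)$-degree polynomial equations in matrix variables $X, X', Y, Y', Z, Z'$, and I will exhibit an explicit polynomial substitution---analogous to those used in Proposition~\ref{prop:MonTI}---that expresses these in terms of the source problem's variables, then derive the target equations from the source equations in constant PC degree.

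To finish, I tune $\Delta$. Each reduction increases dimensions by at most a constant multiplicative factor together with additive $O(n_0 + m)$ padding, so the final \textsc{TI} instance has size $n = \Theta(\Delta n_0)$. Choosing $\Delta = n_0^{1/3}$ then yields $n = \Theta(n_0^{4/3})$ and PC-degree lower bound $\Omega(n_0 / \Delta^2) = \Omega(n_0^{1/3}) = \Omega(\sqrt[4]{n})$, as required. For the support-size claim, the distribution on \textsc{TI} instances pushes forward the 3-XOR distribution, so its support has size at least $2^{\Omega(m \log n_0)} = 2^{\Omega(n_0^{4/3} \log n_0)}$, which dominates the target $2^{\Omega(\sqrt[4]{n} \log n)} = 2^{\Theta(n_0^{1/3} \log n_0)}$.

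The main obstacle is the middle pair of reductions, where gadgets must simultaneously enforce the correct structural constraint (multilinearity, then monomiality, then a common action) \emph{and} admit constant-degree PC derivations in both directions. Writing down the explicit substitutions for the blown-up variables $X, X', Y, Y', Z, Z'$ of the target problem, and deriving all of the $XX'=I$, $YY'=I$, $ZZ'=I$, and isomorphism equations from the source equations in $O(1)$ degree, is the delicate bookkeeping---entirely analogous in spirit to Proposition~\ref{prop:MonTI} in Section~\ref{sec:lowerboundGI}, but carried out through a longer chain and with cubic-form rather than matrix gadgets.
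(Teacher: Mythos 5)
Your overall plan — chain the reductions from Section~\ref{sec:lowerbound} and invoke Lemma~\ref{lem:PCred} on top of Theorem~\ref{thm:random} — is the same as the paper's. However, the size accounting is wrong in a way that breaks the exponent, and the density-tuning device you introduce to compensate is both unnecessary and ineffective.

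The key error is the claim that ``each reduction increases dimensions by at most a constant multiplicative factor together with additive $O(n_0 + m)$ padding, so the final \textsc{TI} instance has size $n = \Theta(\Delta n_0)$.'' Two of the reductions in the chain blow up the \emph{side length quadratically}: the reduction from \textsc{Monomial Equivalence} to (general) \textsc{Equivalence of Noncommutative Cubic Forms} (Theorem~\ref{thm:monomial to general equivalence}) sends $n$ variables to $n + 2n(n+1) = \Theta(n^2)$ variables, and the reduction from \textsc{Block Tensor Isomorphism} to \textsc{Tensor Isomorphism} (Lemma~\ref{lem:blocks to tensors}) is again quadratic. So starting from a 3XOR instance on $n_0$ variables, the final tensor side length is $\Theta(n_0^4)$, not $\Theta(\Delta n_0)$. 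Note also that the clause density $\Delta$ does not enter the side length at all — the multilinear cubic form has $n_0$ variables regardless of how many clauses there are; $\Delta$ only controls how many entries of the $n_0 \times n_0 \times n_0$ array are nonzero. Your formula $n = \Theta(\Delta n_0)$ conflates clause count with dimension.

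Once the quartic blowup is acknowledged, the correct move is the paper's: take $\Delta = c$ a sufficiently large \emph{constant} (so Theorem~\ref{thm:random} gives PC degree $\Omega(n_0 / c^2) = \Omega(n_0)$), and observe that the final tensor has side length $n = \Theta(n_0^4)$, hence the lower bound is $\Omega(n_0) = \Omega(\sqrt[4]{n})$. With your choice $\Delta = n_0^{1/3}$ the 3XOR degree bound degrades to $\Omega(n_0^{1/3})$ while the tensor side length is still $\Theta(n_0^4)$, so you would only get $\Omega(n^{1/12})$ — strictly weaker than the theorem. The support-size count suffers the same confusion: the paper's $2^{\Omega(\sqrt[4]{n}\log n)} = 2^{\Omega(n_0 \log n_0)}$ already follows at constant density from $\binom{2\binom{n_0}{3}}{c n_0}$.

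Secondarily, your sketch of the intermediate gadgets is imprecise: the step from multilinear $\{\pm 1\}$-monomial equivalence to monomial equivalence (Theorem~\ref{thm:pm1_to_monomial}) does not simply add $x_i^3$ terms to relax multilinearity; it adds a $2n \times 2n \times 2n$ gadget whose effect is to force nonzero entries of any monomial equivalence to be sixth roots of unity. The step from monomial to general equivalence (Theorem~\ref{thm:monomial to general equivalence}) uses rank arguments on slices of a quadratically larger gadget tensor to force monomiality. And the step to genuine \textsc{TI} goes through \textsc{Block Tensor Isomorphism} (Lemmas~\ref{lem:forms to blocks}, \ref{lem:blocks to tensors}) rather than directly ``forcing $X=Y=Z$.'' These are details you defer to the paper's lemmas, which is fine in outline, but the quantitative consequences of those lemmas — the quadratic size increases — are precisely what you dropped, and they are what produce the fourth root.
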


Note that such instances have $N=6n^2$ variables, so this is really only an $\Omega(\sqrt[8]{N})$ lower bound relative to the number of variables.

In the following subsections we recall the definitions of the above problems and their associated systems of polynomial equations, and we give the reductions in the order listed above.

The first two reductions are gadget constructions of linear size; the proof of correctness for the first uses the fact that random hypergraphs have no automorphisms, while the second is fairly straightforward. Reduction (\ref{reduction:NC}) uses a gadget from Grochow \& Qiao \cite{GQ}, albeit for a new application, and shows that the reduction using this gadget also yields a low-degree PC reduction. Reduction (\ref{reduction:TI}) is based on two lemmas, which show that the many-one reduction for this problem in fact also gives a low-degree PC reduction. 

\begin{remark}
Both of the latter two reductions have a quadratic size increase, so while we get a nearly-linear lower bound on PC degree for refutations of \textsc{Monomial Equivalence of Noncommutative Cubic Forms}, we only get a $\Omega(\sqrt{n})$ degree lower bound \textsc{Equivalence of Noncommutative Cubic Forms} and a $\Omega(\sqrt[4]{n})$ degree lower bound on \textsc{Tensor Isomorphism}. If the gadget sizes of these latter two reductions could be improved to linear, we would get a similarly near-linear lower bound (linear in the side length, still $\sqrt{N}$ relative to the number of variables) on PC refutations for \textsc{Tensor Isomorphism} as well. As many of the reductions in \cite{FGS, GQ} are of a similar flavor to the ones we consider here, we believe that they can all be proven in low-degree PC, so we expect the main obstacle to such an improvement is the size of the constructions themselves.
\end{remark}

\subsection{From \textsc{Random 3-XOR} to $\{\pm 1\}$-multilinear noncommutative cubic forms}
\begin{definition}
A random 3-XOR instance with $n$ variables and $m$ clauses is obtained by sampling $m$ clauses independently and uniformly from the set of all $2 \binom{n}{3}$ parity constraints on 3 variables. Each parity constraint is encoded by an equation of the form $x_i x_j x_k = \pm 1$, and the Boolean constraints are encoded by $x_i^2 = 1$.  
\end{definition} 

By a $\{\pm 1\}$-monomial matrix, we mean a monomial matrix in which all nonzero entries are one of $\pm 1$. $\{\pm 1\}$\textsc{-Monomial Equivalence of Noncommutative Cubic Forms} is the problem of deciding, given two noncommutative cubic forms $f,f'$ in $n$ variables $x_1, \dotsc, x_n$ with all nonzero coefficients $\pm 1$, whether there is a permutation $\pi \in S_n$ and signs $e_i \in \{\pm 1\}$ such that $f(e_1 x_{\pi(1)}, \dotsc, e_2 x_{\pi(2)}, \dotsc, e_n x_{\pi(n)}) = f'(\vec{x})$. Equivalently, if we represent a noncommutative cubic form $f$ by the 3-way array $T_{ijk}$ such that $f(\vec{y}) = \sum_{i,j,k \in [n]} T_{ijk} y_i y_j y_k$, the problem here asks whether there is a $\{\pm 1\}$-monomial matrix $A$ such that $(A,A,A) \cdot T = T'$, that is, whether $T'_{i'j'k'} = \sum_{ijk} a_{ii'} a_{jj'} a_{kk'} T_{ijk}$ for all $i',j',k' \in [n]$.

\begin{definition}\label{def:equations for equivalence}
We define the systems of equations associated to several variations of \textsc{Equivalence of Noncommutative Cubic Forms}.
\begin{enumerate}
\item Given two $n \times n \times n$ 3-way arrays $T,T'$, the system of equations for \textsc{Equivalence of Noncommutative Cubic Forms}  is the following system of equations in $2n^2$ variables. Let $A,A'$ be $n \times n$ matrices of independent variables $a_{ij}, a'_{ij}$, respectively.
\[
\begin{array}{lll}
(A,A,A) \cdot T = T'  & & \text{($A$ is an equivalence)} \\
AA' = A'A =\Id & & \text{($A$ is invertible with $A^{-1} = A'$)} \\
\end{array}
\]

\item The system of equations for \textsc{Monomial Equivalence of Noncommutative Cubic Forms} includes the preceding equations, as well as:
\[
\begin{array}{lll}
a_{ij} a_{ij'} = 0 & \forall i \forall j \neq j' & \text{(at most one nonzero per row)} \\
a_{ij} a_{i'j} = 0 & \forall j \forall i \neq i' & \text{(at most one nonzero per column)}
\end{array}
\]

\item The system of equations for  $\{\pm 1\}$\textsc{-Monomial Equivalence of Noncommutative Cubic Forms} includes all the preceding equations, as well as
\[
\begin{array}{lll}
a_{ij} (a_{ij} +1)(a_{ij}-1)=0 & \forall i,j \in [n] & \text{(all entries in $\{0,\pm 1\}$)} \\
\end{array}
\]

\item A noncommutative cubic form $\sum_{ijk} T_{ijk} x_i x_j x_k$ is \emph{multilinear} if all nonzero terms $T_{ijk}$ have $i,j,k$ distinct (that is, $|\{i,j,k\}|=3$). The system of equations for $\{\pm 1\}$\textsc{-Monomial Equivalence of Adjective Noncommutative Cubic Forms} is the same as the above, with the restriction that $T$ and $T'$ both satisfy \textsc{Adjective} (e.\,g., multilinear, nonzero entries in $\{\pm 1\}$, etc.). 
\end{enumerate}
\label{defn:monomial_equivalence}
\end{definition}

\begin{theorem} \label{thm:XORtoMon}
There is a linear-size (1,3)-reduction from \textsc{Random 3-XOR} instances on $n$ variables with $m$ clauses, where $10^4 n \leq m \leq \binom{n}{3}/10^{12}$, to $\{\pm 1\}$\textsc{-Monomial Equivalence of $\{\pm 1\}$ Multilinear Noncommutative Cubic Forms}, over any ring $R$ of characteristic $\neq 2$.
\end{theorem}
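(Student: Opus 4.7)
The plan is to give a direct, gadget-free reduction. Given a random 3-XOR instance with clauses $x_{i_\ell} x_{j_\ell} x_{k_\ell} = c_\ell$ for $\ell \in [m]$ (with each $\{i_\ell, j_\ell, k_\ell\}$ a $3$-subset of $[n]$ and $c_\ell \in \{\pm 1\}$), I would define two symmetric $n \times n \times n$ tensors $T, T'$ by placing, for each clause $\ell$, the value $1$ in $T$ and $c_\ell$ in $T'$ at every one of the six orderings of the triple $(i_\ell, j_\ell, k_\ell)$, and zero elsewhere. Both tensors have entries in $\{0, \pm 1\}$ and are multilinear since each clause has three distinct variables; their common support, viewed as unordered triples, is the $3$-uniform hypergraph $H$ whose edges are the clause sets. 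The map $\rho$ sending the 3-XOR instance to the target-problem instance $(T, T')$ is polynomial-time computable and of linear size.

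The easy direction of many-one correctness is immediate: if $b_1, \ldots, b_n \in \{\pm 1\}$ is a satisfying assignment, then $A = \diag(b_1, \ldots, b_n)$ is a $\{\pm 1\}$-monomial equivalence, since $b_{i_\ell} b_{j_\ell} b_{k_\ell} = c_\ell$ on every clause. Conversely, any $\{\pm 1\}$-monomial matrix factors as $A = D P$ where $P$ is the permutation matrix of some $\pi \in S_n$ and $D$ is $\{\pm 1\}$-diagonal. Comparing the supports of $T$ and $(A, A, A) \cdot T = T'$ forces $\pi$ to be an automorphism of $H$. The key probabilistic input is that random $3$-uniform hypergraphs with $10^4 n \leq m \leq \binom{n}{3}/10^{12}$ are asymmetric with probability $1 - o(1)$, so $\pi = \Id$ and $(D, D, D) \cdot T = T'$ reads off the satisfying assignment $x_i := d_i$.

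For the PC portion of the reduction I would use the degree-$1$ substitution $a_{ij} \mapsto \delta_{ij} x_i$ and $a'_{ij} \mapsto \delta_{ij} x_i$. Under this substitution the monomial axioms $a_{ij} a_{ij'}$ (for $j \neq j'$) vanish identically in the polynomial ring; the inversion equations $AA' = A'A = \Id$ reduce to the 3-XOR axioms $x_i^2 - 1$ (degree $2$); the $\{0,\pm 1\}$-axioms $a_{ij}(a_{ij}-1)(a_{ij}+1)$ reduce to $\delta_{ij}\, x_i (x_i^2 - 1)$, derivable from $x_i^2 - 1$ by one multiplication (degree $3$); and the tensor equivalence equations $(A, A, A) \cdot T = T'$ reduce, at each clause position, to the parity axiom $x_{i_\ell} x_{j_\ell} x_{k_\ell} - c_\ell$ (degree $3$) and vanish identically at all other positions. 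All target equations are therefore derivable in PC degree at most $3$, giving a $(1,3)$-reduction.

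The main obstacle is establishing the asymmetry of the random hypergraph $H$; the rest is arithmetic bookkeeping. The two-sided density hypothesis $10^4 n \leq m \leq \binom{n}{3}/10^{12}$ is precisely the range where a standard first-moment computation---summing over cycle types of $\pi \in S_n \setminus \{\Id\}$ the probability that $\pi$ preserves the edge set---shows $\mathrm{Aut}(H)$ is trivial with probability $1 - o(1)$: sparser instances leave isolated vertices that can be transposed, and denser instances begin to acquire automorphisms from the complement. Any careful first-moment bound that covers this window (folklore, or via the methods of Kim--Sudakov--Vu) suffices.
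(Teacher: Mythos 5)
Your construction and verification are essentially the paper's argument, modulo two cosmetic differences and one real gap. Cosmetically: the paper places a single nonzero entry per clause (at the sorted position $(i_\ell,j_\ell,k_\ell)$), while you symmetrize over all six orderings; both work, and your symmetric version makes the ``$\pi$ must be a hypergraph automorphism'' step slightly cleaner, since for the paper's asymmetric array the support-preservation condition on $\pi$ is technically stronger than automorphism (though the conclusion $\pi=\Id$ is identical). The paper also allows arbitrary signs $a_\ell \in \{\pm 1\}$ in $T_{i_\ell j_\ell k_\ell}$ where you always take $a_\ell = 1$; this is just extra flexibility they keep around. Your degree-$1$ substitution $a_{ij}\mapsto \delta_{ij}x_i$ and the resulting degree-$\le 3$ derivations of each target axiom match the paper exactly.

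The gap is the asymmetry of the random hypergraph $H$. You correctly identify it as the only nontrivial input and correctly state the two-sided density window, but you defer to ``a careful first-moment bound (folklore, or via the methods of Kim--Sudakov--Vu).'' The paper does not re-derive this; it invokes a specific off-the-shelf statement, Lemma~6.9 of O'Donnell--Wright--Wu--Zhou \cite{DBLP:conf/soda/ODonnellWWZ14}, and the constants $10^4$ and $10^{-12}$ in the theorem are chosen precisely to land inside that lemma's hypotheses. Without that citation (or an actual computation), your proposal leaves the one load-bearing probabilistic fact unproved; the ``first-moment over cycle types'' sketch is the right idea, but the bookkeeping to cover the whole stated range is not trivial and is exactly what the cited lemma packages. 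So: same route as the paper, but you should cite that lemma rather than gesture at the method, and you may as well then adopt the paper's exact constants rather than re-argue for them.
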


The reduction is always a (1,3)-reduction, but we only show the resulting system of equations for $\{\pm 1\}$\textsc{-Monomial Equivalence of Noncommutative Cubic Forms} is unsatisfiable with probability $1-o(1)$ when the 3-XOR instance is chosen randomly with the parameters specified in the theorem. (It is possible that it is always unsatisfiable when the input 3-XOR instance is, but our proof does not answer this question.) 

\begin{proof}[Proof idea]
We build multilinear noncommutative cubic forms from the 3-XOR instance such that they are equivalent by a $\{\pm 1\}$ diagonal matrix iff the 3-XOR instance is satisfiable: an equation $x_i x_j x_k = \pm 1$ corresponds to setting $T_{ijk} = 1, T'_{ijk} = \pm 1$ in this construction. The noncommutative cubic forms are multilinear because the construction of the random 3XOR instance ensures that each XOR clause contains 3 distinct variables. In fact, the equations for $\{\pm 1\}$-diagonal equivalence of the correspondence noncommutative cubic forms will turn out to be identically the same as the equations for the 3-XOR instance. 

Next, for random instances chosen with the stated parameters, the 3-way arrays $T,T'$ are the adjacency hyper-matrices of a 3-uniform hypergraph that has no nontrivial automorphisms by \cite[Lemma~6.9]{DBLP:conf/soda/ODonnellWWZ14}; this is why we needed to restrict the parameter range for $m$ as we did. Because the hypergraphs have no nontrivial automorphisms, any monomial equivalence of the corresponding cubic forms must in fact be diagonal, thus letting us further reduce to $\{\pm 1\}$-monomial equivalence.
\end{proof}

\begin{proof}
We are given a system of 3-XOR equations, which we'll denote $x_{i_\ell} x_{j_\ell} x_{k_\ell} = s_\ell$ for $\ell=1,\dotsc,m$, where $i_\ell \leq j_\ell \leq k_\ell \in [n]$ are indices of variables and $s_{\ell} \in \{\pm 1\}$ for all $\ell$. It also includes the equations $x_i^2 = 1$ for all $i=1,\dotsc,n$.

\textbf{Step 1: Reduce from random 3-XOR to $\{\pm 1\}$-diagonal equivalence of noncommutative cubic forms.} From the above system of equations, we now construct two $n \times n \times n$ 3-way arrays $T,T'$. For the original equations $x_{i_\ell} x_{j_\ell} x_{k_\ell} = s_\ell$ ($\ell=1,\dotsc, m$),  and for any $a_\ell \in \{\pm 1\}$ of our choice (we may set all $a_\ell = 1$ if we wish, but this additional flexibility may be useful in other settings) we set
\[
T_{i_\ell, j_\ell, k_\ell} = a_\ell \text{ and } T'_{i_\ell, j_\ell, k_\ell} = s_\ell a_\ell.
\]
All other entries of $T$ and $T'$ are set to zero.

We start with a warmup lemma, to see that this part of the construction already has a desirable property. By a ``$\{\pm 1\}$ diagonal isomorphism'' of two non-commutative cubic forms, we mean a diagonal matrix $X$ whose diagonal entries are all one of $\pm 1$ such that $X$ gives an equivalence between $T,T'$.

\begin{lemma} \label{lem:warmup}
Notation as in the paragraph above. There is a bijection between the solutions to the 3-XOR instance and the $\{\pm 1\}$ diagonal isomorphisms of the noncommutative cubic forms defined by $T,T'$.
\end{lemma}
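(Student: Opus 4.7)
The plan is a direct coordinate computation: once we restrict to $\{\pm 1\}$-diagonal matrices, the equivalence condition on $T, T'$ collapses to literally the same system of scalar equations as the given 3-XOR instance, so the bijection is given coordinate-wise. I expect no real obstacle; the role of the lemma is to fix notation and intuition for the harder step (promoting $\{\pm 1\}$-monomial equivalences to $\{\pm 1\}$-diagonal ones) that will invoke random-hypergraph rigidity.

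Concretely, I would first fix a $\{\pm 1\}$-diagonal matrix $X = \diag(x_1, \ldots, x_n)$ and substitute into the isomorphism equation (\ref{eq:TI}) with $Y = Z = X$. Using $X_{ab} = x_a\,\delta_{ab}$, the Kronecker deltas collapse the triple sum and the condition becomes
\[
T'_{i'j'k'} \;=\; x_{i'}\, x_{j'}\, x_{k'}\, T_{i'j'k'} \qquad (\forall\, i', j', k' \in [n]).
\]
I would then split these scalar equations according to the support of $T$ (which coincides with that of $T'$ by construction). At an ``occupied'' index $(i_\ell, j_\ell, k_\ell)$ we have $T_{i_\ell j_\ell k_\ell} = a_\ell$ and $T'_{i_\ell j_\ell k_\ell} = s_\ell a_\ell$ with $a_\ell \in \{\pm 1\}$, so dividing by $a_\ell$ reduces the equation to precisely the XOR clause $x_{i_\ell} x_{j_\ell} x_{k_\ell} = s_\ell$. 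At every other index both sides vanish and the equation imposes nothing. The requirement that each $x_i$ lie in $\{\pm 1\}$ is the Boolean axiom $x_i^2 = 1$ (using $\mathrm{char}(R) \neq 2$, so that $\pm 1$ are the only square roots of unity that arise). Thus the equations defining a $\{\pm 1\}$-diagonal equivalence $T \to T'$ coincide, as a set of equations in the scalars $x_1, \ldots, x_n$, with the 3-XOR system.

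Finally I would define the bijection explicitly by $(x_1, \ldots, x_n) \mapsto \diag(x_1, \ldots, x_n)$. This map is manifestly a bijection from $\{\pm 1\}^n$ onto the set of $\{\pm 1\}$-diagonal $n \times n$ matrices, and the previous paragraph shows that under it solutions of the 3-XOR instance go exactly to $\{\pm 1\}$-diagonal equivalences of the noncommutative cubic forms attached to $T, T'$, and conversely. Hence it restricts to the claimed bijection, completing the proof.
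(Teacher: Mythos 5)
Your proof is correct and is essentially the same argument the paper gives: substitute the diagonal matrix into the isomorphism equation, observe that the triple sum collapses to the coordinate-wise condition $T'_{i'j'k'} = x_{i'}x_{j'}x_{k'} T_{i'j'k'}$, and note that on the common support of $T,T'$ this is literally the 3-XOR system while off the support it is vacuous. Your slightly more explicit framing---that the two constraint systems are identical as sets of equations, so the diagonal map restricts to the claimed bijection---is in fact exactly the observation the paper uses in the sentence immediately after the lemma, so you are anticipating rather than departing from the paper's route.
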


\begin{proof}
Suppose $\mathbf{x}$ is a solution to the 3-XOR instance. Let $X = \diag(x_1, \dotsc, x_n)$ be the diagonal matrix with $\mathbf{x}$ on the diagonal. We claim that $X$ is an equivalence between the noncommutative cubic forms represented by $T,T'$, or the same, that $(X,X,X)$ is an isomorphism of the tensors $T,T'$. Note that for any diagonal matrices $X,Y,Z$, we have $((X,Y,Z) \cdot T)_{ijk} = x_i y_j z_k T_{ijk}$. In particular, the action of diagonal matrices does not change which entries of $T$ are zero or nonzero, it merely scales the nonzero entries. Since $T,T'$ have the same support by construction, it is necessary and sufficient to handle the nonzero entries. By the construction above, there are precisely $m$ such nonzero entries, one for each cubic equation in the 3-XOR instance. For each $\ell=1,\dotsc,m$, we have
\begin{align*}
((X,X,X) \cdot T)_{i_\ell j_\ell k_\ell} & = x_{i_\ell} x_{j_\ell} x_{k_\ell} T_{i_\ell j_\ell k_\ell} \\
 & = s_\ell T_{i_\ell j_\ell k_\ell} \\
 & = T'_{i_\ell j_\ell k_\ell}.
\end{align*}

In the other direction, if $X=\diag(\mathbf{x})$ is a diagonal matrix whose diagonal entries are in $\{\pm 1\}$ giving an isomorphism of the noncommutative cubic forms, then we have
\[
x_{i_\ell} x_{j_\ell} x_{k_\ell} = T_{i_\ell j_\ell k_\ell} T'_{i_\ell j_\ell k_\ell} = s_\ell
\]
for $\ell=1,\dotsc,m$. (Here we have pulled $T_{i_\ell, j_\ell, k_\ell}$ across the equals sign because every term in the above equation is $\pm 1$.) This concludes the proof of the lemma.
\end{proof}

We thus consider the equations corresponding to $\{\pm 1\}$-diagonal equivalence of $T,T'$: there are $n$ variables $x_i$ ($i=1,\dotsc,n$). Let $X$ denote the diagonal matrix with $\mathbf{x}$ on the diagonal. Then the equations are
\begin{equation} \label{eq:diag}
X^2 = \Id \qquad (X,X,X) \cdot T = T'.
\end{equation}
By Lemma~\ref{lem:warmup}, we have that the original 3XOR instance is satisfiable iff (\ref{eq:diag}) is satisfiable. We claim furthermore that there is (1,3)-reduction from the 3XOR equations to this system of equations. In fact, as the proof of the preceding lemma shows, they are actually \emph{the same set of equations}! So there is nothing more to show.

\textbf{Step 2: Reduce from $\{\pm 1\}$-diagonal equivalence to $\{\pm 1\}$-monomial equivalence.} We claim that there is a $(1,3)$-reduction from (\ref{eq:diag}) to the the equations for $\{\pm 1\}$-monomial equivalence, see (\ref{defn:monomial_equivalence}). The variable substitution is given by
\[
a_{ij} = a'_{ij}  \mapsto \begin{cases}
0 & i \neq j \\
x_i & i = j.
\end{cases}
\]
Under this substitution:
\begin{itemize}
\item The equivalence condition $(A,A,A) \cdot T = T'$ becomes exactly the original equivalence condition $(X,X,X) \cdot T = T'$.

\item The invertibility equations $AA' = A'A = \Id$ become $XX = \Id$

\item The row and column equations both become $0 = 0$, since at least one of the two $a_{ij}$ variables occurring will not be on the diagonal, hence will become 0 after substitution.

\item The equation $a_{ij} (a_{ij}+1)(a_{ij}-1) = 0$ becomes $x (x^2-1)$ for the appropriate variable $x \in \mathbf{x}$. This is derivable from the original equation $x^2 -1$ by multiplication by $x$. 
\end{itemize}

Lastly, we show that the system of equations in Definition~\ref{defn:monomial_equivalence}(3) for $\{\pm 1\}$-monomial equivalence is satisfiable iff the original 3-XOR instance was. Since we showed above that that $\{\pm 1\}$-diagonal equivalence equations are satisfiable iff the original 3-XOR instance was, we show the equisolvability of (\ref{eq:diag}) and the equations of Definition~\ref{defn:monomial_equivalence}(3). 

Since diagonal matrices are monomial, any solution to (\ref{eq:diag}) is a solution to the equations of Definition~\ref{defn:monomial_equivalence}(3).

Conversely, suppose the equations of Definition~\ref{defn:monomial_equivalence}(3) are solvable. Then there is a $\{\pm 1\}$-monomial matrix $X$ given an equivalence between $T$ and $T'$; we may write $X=DP$ where $D$ is diagonal and $P$ is a permutation matrix. Now, as the original 3-XOR instance was chosen uniformly at random, the support of $T$ (the positions of its nonzero entries) is precisely a uniformly random 3-uniform hypergraph $G$. As $T,T'$ have the same support by construction, we find that $P$ must be an automorphism of $G$. But by \cite[Lemma~6.9]{DBLP:conf/soda/ODonnellWWZ14}, uniformly random such hypergraphs have no nontrivial automorphisms with probability $1-o(1)$. Thus $P=I$ and $X$ must in fact be diagonal, hence a solution to (\ref{eq:diag}).
\end{proof}

\begin{remark}
We may avoid the heavy hammer of \cite[Lemma~6.9]{DBLP:conf/soda/ODonnellWWZ14} by ``rigidifying'' (in the sense of removing automorphisms) the system of 3-XOR equations before constructing the 3-way arrays as follows. The construction corresponds to a standard graph-theoretic gadget for removing automorphisms. Add new variables $z$ and $y_{ij}$ for $i=1,\dotsc, n$ and $j=1,\dotsc,n+i$, as well as the equations $x_i y_{ij} z = 1$ for all $i,j$, as well as $y_{ij}^2 = 1$ and $z^2 = 1$. The downside of this construction is that it quadratically increases the number of variables, which would result in a further quadratic loss in our lower bounds on \textsc{Tensor Isomorphism}.
\end{remark}

\subsection{From $\{\pm 1\}$-monomial equivalence to (unrestricted) monomial equivalence}
\begin{theorem} \label{thm:pm1_to_monomial}
There is a linear-size (2, 6)-many-one reduction from 
\begin{center}
$\{\pm 1\}$\textsc{-Monomial Equivalence of $\{\pm 1\}$ Multilinear Noncommutative Cubic Forms}

to

\textsc{Monomial Equivalence of $\{\pm 1\}$ Noncommutative Cubic Forms},
\end{center}
over any ring $R$ of characteristic $\neq 2$ such that $\{\pm 1\}$ are the only square roots of $1$. 

Furthermore, the reduction $r$ has the property that, given any two $\{\pm 1\}$ multilinear noncommutative cubic forms $f,f'$, any monomial equivalence between $r(f)$ and $r(f')$ must have all its nonzero entries sixth roots of unity, and this can be derived by a degree-6 PC proof.
\end{theorem}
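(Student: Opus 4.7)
The plan is to construct the reduction $r$ by adjoining a fresh auxiliary variable $y_i$ to each $x_i$ in the source cubic form $f$ and augmenting $f$ by a carefully chosen combination of non-multilinear cubic gadget terms (the natural candidates are signed combinations of $x_i^2 y_i$ and $x_i y_i^2$) so that two properties hold simultaneously: (i) any $\{\pm 1\}$-monomial equivalence of $(f, f')$ extends to a monomial equivalence of $(r(f), r(f'))$ by acting on the $y$-block of the enlarged $2n \times 2n$ matrix via the Hadamard square of the action on the $x$-block; and (ii) any monomial equivalence of $(r(f), r(f'))$ must have all its nonzero entries be sixth roots of unity. Because squaring entrywise is a degree-$2$ operation and the source axiom $a_{ij}(a_{ij}^2 - 1) = 0$ makes this substitution well-behaved, it sets up the $(2, \cdot)$ half of the reduction.

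I would then verify the three required properties in turn. First, many-one correctness: the forward direction is a direct computation showing that extending a $\{\pm 1\}$-monomial $A$ by its Hadamard square $A^{\circ 2}$ on the $y$-block satisfies the gadget equations; the reverse direction combines the sixth-root conclusion of (ii), the ring assumption that $\{\pm 1\}$ are the only square roots of $1$, and the rigidification built into the gadget (which forces the $y$-block permutation to track the $x$-block permutation) to conclude that the $x$-block of any monomial equivalence of $(r(f), r(f'))$ is in fact $\{\pm 1\}$-monomial on $f, f'$. Second, for the $(2,6)$-PC reduction property, under the above substitution each enlarged target equation---the action equation at gadget entries, the $2n \times 2n$ invertibility $AA' = A'A = I$, and the enlarged monomial axioms---expands to a polynomial of degree at most $6$ in the source variables, and each can be derived from the source action equations together with the source monomial and $\{\pm 1\}$ axioms in degree $\leq 6$.

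Third, for the ``furthermore'' statement, I would derive $a_{ij}(a_{ij}^6 - 1)$ for every entry of the $2n \times 2n$ matrix $A$ directly from the target equations, in degree at most $6$. The gadget action equations at the positions $(i, i, y_i)$ and $(i, y_i, y_i)$, after using the monomial axioms $a_{ij} a_{ij'}$ and $a_{ij} a_{i'j}$ to kill all but one surviving nonzero entry per row, yield degree-$3$ relations of the shape $a^2 b - 1$ and $a b^2 - 1$ relating the unique nonzero entry $a$ of the $x$-row to the unique nonzero entry $b$ of the paired $y$-row. Multiplying these two identities (degree $6$) gives $(ab)^3 - 1$, and combining back with either original relation isolates $a^6 - 1$ and $b^6 - 1$ in the same degree budget; multiplying by $a$ and $b$ respectively yields the desired sixth-root equations, which then propagate to the remaining entries via the monomial axioms.

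The main obstacle I expect is the gadget calibration: one must simultaneously (a) preserve satisfiability despite the $\pm$ sign freedom on the source side, (b) rigidly force sixth-root entries on the target side, and (c) keep both PC derivations (forward reduction and sixth-root extraction) within degree $6$. The simplest candidates illustrate the tension: adding $\sum_i x_i^3$ forces $d_i^3 = 1$ but breaks satisfiability when the source matrix has a $-1$ entry, while purely quadratic--linear gadgets like $\sum_i x_i^2 y_i$ alone do not provide enough rigidity on $y$. Pairing $x_i$ with $y_i$ and using the Hadamard square substitution is what balances the sign freedom between the two blocks; tuning the coefficients of the gadget terms so that the four separate derivations all fit within the degree budget is the delicate part underlying the whole construction.
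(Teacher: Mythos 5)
Your overall strategy matches the paper's: adjoin a fresh $y_i$ per source variable $x_i$, append non-multilinear gadget terms, use the Hadamard-square substitution $Y \mapsto X \oplus (X\circ X)$ for the PC reduction, and extract a root-of-unity constraint from the gadget positions. However, your candidate gadget $\sum_i (x_i^2 y_i + x_i y_i^2)$ — or any signed combination of $x_i^2 y_i$ and $x_i y_i^2$ — fails the forward direction of the many-one reduction, and this is a genuine gap, not a loose end. If $X = DP$ is a $\{\pm 1\}$-monomial equivalence with $d_i = -1$ for some $i$, then for any diagonal extension $E$ on the $y$-block, the constraint from $x_i^2 y_i$ forces $d_i^2 e_i = 1$, hence $e_i = 1$, while the constraint from $x_i y_i^2$ then forces $d_i e_i^2 = d_i = 1$, a contradiction. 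Equivalently, running your own "furthermore" derivation: the two positional equations $a^2 b - 1$ and $a b^2 - 1$ let one derive $a^3 - 1$ (and $b^3 - 1$) in degree $6$, so the gadget constrains every entry of the $x$-block to be a \emph{cube} root of unity. In characteristic $\neq 2,3$, $-1$ is not a cube root of unity, so the construction excludes exactly the $\{\pm 1\}$-monomial equivalences the reduction must accept; the map is not a many-one reduction at all.

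The missing idea is to drop $x_i y_i^2$ and instead add $y_i^3$, i.e.\ use the gadget $f \mapsto f + \sum_i (x_i^2 y_i + y_i^3)$, which is what the paper does (encoded as $r(T)_{i,i,n+i} = 1$ and $r(T)_{n+i,n+i,n+i} = 1$). Under the Hadamard-square extension $X \oplus (X\circ X)$ with $X = DP$, the $y$-block is $D^2 P = P$, so $y_i^3 \mapsto y_{\pi(i)}^3$ and $x_i^2 y_i \mapsto d_i^2\, x_{\pi(i)}^2 y_{\pi(i)} = x_{\pi(i)}^2 y_{\pi(i)}$: both gadget terms are preserved even when $d_i = -1$. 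On the converse side, the $(n{+}i,n{+}i,n{+}i)$ position yields $y_{n+i}^3 = 1$ and the $(i,i,n{+}i)$ position yields $y_i^2 y_{n+i} = 1$; eliminating $y_{n+i}$ gives $y_i^6 = 1$ — sixth roots, not cube roots — which is crucial because sixth roots of unity \emph{do} include $\pm 1$. You correctly flagged the calibration tension (your remark that $\sum_i x_i^3$ forces $d_i^3=1$ and breaks on a $-1$ entry is spot-on, and $\sum_i x_i^2 y_i$ alone is under-constrained), but the resolution you propose runs into precisely the same $-1$ obstruction you diagnosed for $\sum_i x_i^3$, and the actual fix — putting the third gadget power on the $y$-block, where the Hadamard square washes out the sign — is the idea you did not find.
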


\begin{remark}
We note the difference between a reduction to $\sqrt[6]{1}$-\textsc{Monomial Equivalence} and a reduction to \textsc{Monomial Equivalence} with the property stated in the theorem. In the former case, the problem being reduced to only accepts $\sqrt[6]{1}$-monomial matrices as solutions (and then the goal of the reduction is to introduce gadgets to get this down to $\{\pm 1\}$). In the latter case, the problem being reduced to allows arbitrary monomial matrices as solutions, but the gadgets enforce that, on the reduced instances, any such monomial matrix must in fact have its nonzero entries being sixth roots of unity.
\end{remark}

\begin{proof}
Let $T$ be an $n \times n \times n$ 3-way array representing a multilinear noncommutative cubic form with all nonzero entries in $\pm 1$. We extend $T$ to $r(T)$ of size $2n \times 2n \times 2n$, by setting
\begin{align*}
r(T)_{ijk} & = T_{ijk} & i,j,k \in[n] \\
r(T)_{i,i,n+i} & = 1 & i \in [n] \\
r(T)_{n+i,n+i,n+i} & = 1 & i \in [n]
\end{align*}
and all other entries of $r(T)$ set to zero. 

\textbf{Many-one reduction.} We first show that the map $(T,T') \mapsto (r(T), r(T'))$ is a many-one reduction. Suppose $T,T'$ are $\{\pm 1\}$-monomially equivalent by a matrix $X$, where $X = DP$ with $D = \diag(x_1, \dotsc, x_n)$ a diagonal matrix with $x_i \in \{\pm 1\}$ for all $i$, and $P$ is a permutation matrix. Let $\pi$ denote the permutation corresponding to $P$; that is, $P_{i,\pi(i)} = 1$ for all $i \in [n]$. Then we claim the $2n \times 2n$ matrix $X \oplus P = \begin{bmatrix} X & 0 \\ 0 & P \end{bmatrix}$ is a monomial equivalence of $r(T)$ with $r(T')$. Since $X \oplus P$ is block-diagonal, the upper-left $X$ certainly sends the upper-left $n \times n \times n$ sub-array of $r(T)$ (which is just $T$) to that of $r(T')$ (which is just $T'$). So the only thing to check is what happens to the positions at indices greater than $n$.

Let $X' = X \oplus P$. We have
\begin{align*}
((X',X',X') \cdot r(T))_{i,i,n+i} & = r(T)_{\pi(i), \pi(i), n+\pi(i)} (X'_{i,\pi(i)})^2 X'_{n+i,n+\pi(i)} \\
 & = r(T)_{\pi(i), \pi(i), n+\pi(i)} (X_{i,\pi(i)})^2 P_{i,\pi(i)} \\
 & = 1 = r(T')_{i,i,n+i}.
\end{align*}
Similarly, we have:
\begin{align*}
((X',X',X') \cdot r(T))_{n+i,n+i,n+i} & = r(T)_{n+\pi(i), n+\pi(i), n+\pi(i)} P_{i,\pi(i)}^3= 1 = r(T')_{n+i, n+i,n+i}
\end{align*}
Because $X'$ is monomial, it is easy to see that the zeros of $r(T)$ are sent to zeros of $r(T')$. Thus $X'$ is a monomial equivalence of $r(T)$ with $r(T')$. 

Conversely, suppose $r(T)$ and $r(T')$ are equivalent by a monomial matrix $Y=DP$, with $D$ diagonal and $P$ a permutation matrix corresponding to permutation $\pi \in S_{2n}$. We will show that this implies that $T$ and $T'$ are equivalent by a $\{\pm 1\}$ monomial matrix. Since $T$ is multilinear, we have $T_{i,i,i}=r(T)_{i,i,i}=0$. Since $r(T)_{n+j,n+j,n+j}=1$ for all $j \in [n]$, the permutation $\pi$ cannot send any element $> n$ to any element $\leq n$. Thus $P$ is block-diagonal, say $P = \begin{bmatrix} P_1 & 0_n \\ 0_n & P_2 \end{bmatrix}$. Let $\pi_1$ (resp., $\pi_2$) be the permutation of $[n]$ corresponding to $P_1$ (resp., $P_2$). 

Next, we claim $P_1 = P_2$. By considering the positions at indices $(i,i,n+i)$, we have:
\begin{align*}
((P,P,P) \cdot r(T))_{i,i,n+i} = r(T)_{\pi_1(i), \pi_1(i), n+\pi_2(i)}
\end{align*}
But the latter is equal to the corresponding position in $r(T')$, which is $1$ iff $\pi_1(i)=\pi_2(i)$. Since this holds for all $i$, we have $\pi_1 = \pi_2$, and thus $P_1 = P_2$. 

Finally, we \emph{do not} claim that the diagonal entries $y_i$ themselves must be in $\pm 1$. Rather, we will show that they are all sixth roots of unity. Then cubing them will yield a new $n \times n$ matrix $D'$ all of whose diagonal entries are $\pm 1$ such that $D' P_1$ is a $\pm 1$-monomial equivalence of $T$ with $T'$. 

From the positions $(n+i,n+i,n+i)$, we have
\begin{align*}
1 & = r(T')_{n+\pi_1(i),n+\pi_1(i),n+\pi_1(i)} \\
& = ((Y,Y,Y) \cdot r(T))_{n+i,n+i,n+i}\\
& = y_{n+i}^3.
\end{align*}
But then, considering the positions $(i,i,n+i)$, we similarly get that $y_i^2 y_{n+i} = 1$. Cubing the latter equation, we get $y_i^6 y_{n+i}^3 = 1$. But as we already have $y_{n+i}^3 = 1$, this gives us $y_i^6 = 1$ by a degree-6 PC proof, as claimed in the ``furthermore.''  

Now we use the fact that $T,T'$ have all entries in $\{0,\pm 1\}$. Thus, each nonzero entry of $r(T)$ in the front-upper-left block (corresponding to $T$) gives us an equation of the form $y_i y_j y_k T_{ijk} = T'_{\pi_1(i), \pi_1(j), \pi_1(k)}$. Since the nonzero entries of $T,T'$ are $\pm 1$, this is thus an equation of the form $y_i y_j y_k = \pm 1$. If we cube both sides of this equation, we get $y_i^3 y_j^3 y_k^3 = \pm 1$. But since we established above that $y_i^6=1$ for all $i$, we have that $y_i^3 \in \{\pm 1\}$ for all $i$. Thus, defining $x_i := y_i^3$ for $i=1,\dotsc, n$, we have $x_i \in \{\pm 1\}$ and letting $D' = \diag(x_1, \dotsc, x_n)$, we have $D' P_1$ is a $\{\pm 1\}$-monomial equivalence from $T$ to $T'$.

\textbf{Low-degree PC reduction.} We claim that the system of equations for $\{\pm 1\}$ monomial equivalence of $T$ and $T'$ is (2,6)-reducible to the system of equations for monomial equivalence of $r(T)$ and $r(T')$. Let $X,X'$ be the $n \times n$ variable matrices for the equations for for $\{\pm 1\}$-monomial equivalence of the original tensors $T$ and $T'$, and let $Y,Y'$ be the $2n \times 2n$ matrices for the equations for monomial equivalence of $r(T),r(T')$. The PC reduction is defined by the following substitution:
\begin{align*}
y_{ij} & \mapsto x_{ij} & i,j \in [n] \\
y_{n+i, n+j} & \mapsto x_{ij}^2 & i,j \in [n] \\
y_{i,n+j}, y_{n+i,j} & \mapsto 0 & i,j \in [n],
\end{align*}
and similarly for the $y'$ variables being substituted by the $x'$ variables. That is, we have
\[
Y \mapsto \begin{bmatrix} X & 0_n \\ 0_n & X \circ X \end{bmatrix}
\qquad
Y' \mapsto \begin{bmatrix} X' & 0_n \\ 0_n & X' \circ X' \end{bmatrix},
\]
where $X \circ X$ denotes the entrywise (aka Hadamard) product with itself, that is $(X \circ X)_{ij} = x_{ij}^2$. The reason to use $X \circ X$ here is that if $X$ is $\{\pm 1\}$-valued and monomial, then $X \circ X$ is the permutation matrix with the same support as $X$; that is, this substitution is essentially the same as the one used in the proof above for the many-one reduction.

Now, taking advantage of the block structure in the substitution above and the block structure in $r(T), r(T')$, let us see what our equations become after substitution, and how to derive them from the equations for $T,T'$. This will complete the proof.
\begin{enumerate}
\item The set of equations $(Y, Y, Y) \cdot r(T) = r(T')$ becomes the set of equations $(X,X,X) \cdot T = T'$ (by examining the front-upper-left corner), as well as the equations 
\[
\sum_{i,j,k \in [2n]} y_{ii'} y_{jj'} y_{k,k'} r(T)_{ijk} = \begin{cases}
1 & i'=j'=k'-n \text{ or } i'=j'=k' > n \\
0 & \text{otherwise.}
\end{cases}
\]
We deal with the three cases ($i'=j'=k'-n$, $i'=j'=k'>n$, or neither of these) separately.
\begin{enumerate}
\item Suppose $i'=j'=k'-n$. In this case, $y_{ii'}$ is only nonzero for $i \in [n]$, and similarly for $y_{jj'}$, while $y_{kk'}$ is only nonzero for $k > n$. Thus the substituted equation becomes
\[
\sum_{i,j,k \in [n]} y_{ii'} y_{ji'} y_{n+k,n+i'} r(T)_{i,j,n+k} = \sum_{i,j,k \in [n]} x_{ii'} x_{ji'} x_{k,i'}^2 r(T)_{i,j,n+k} = 1
\]
Now, the only positions in $r(T)$ of the form $(i,j,n+k)$ with $i,j,k \in [n]$ that are nonzero are those of the form $(i,i,n+i)$, so the preceding equation simplifies further to
\[
\sum_{i \in [n]} x_{ii'} x_{ii'} x_{ii'}^2 = 1
\]
i.e., 
\begin{equation} \label{eq:monomial1}
\sum_{i \in [n]} x_{ii'}^4 = 1.
\end{equation}

We will now show how to derive (\ref{eq:monomial1}) from the equations for $\{\pm 1\}$-monomial equivalence of for $T,T'$ (Definition~\ref{def:equations for equivalence}). 
From the $\{0, \pm 1\}$ equation in Definition~\ref{def:equations for equivalence}(3), if we multiply by $x_{ii'}$, we get 
\begin{equation} \label{eq:monomial3}
x_{ii'}^2 (x_{ii'}^2 - 1),
\end{equation}
i.e., the usual Boolean equation but for $x_{ii'}^2$ rather than $x_{ii'}$ itself. 
Next, from $x_{i i'} x_{i'' i'}$ with $i \neq i''$, we may square this to get
\begin{equation} \label{eq:monomial4}
x_{ii'}^2 x_{i'' i'}^2.
\end{equation}
and we similarly get $(x'_{i'i})^2 (x'_{i' i''})^2$ when $i \neq i''$. 

Lastly, from the equation $XX' = \Id$ and multiplying by $\sum_{i \in [n]} x_{ii'} x_{i'i}' + 1$, we obtain

\begin{equation} \label{eq:monomial5}
(\sum_{i \in [n]} x_{ii'} x_{i'i}' + 1) (\sum_{i \in [n]} x_{ii'} x'_{i'i} - 1) = \sum_{i \in [n]} x_{ii'}^2 x_{i' i}^2 + \sum_{i, j \in [n] \\ i \neq j} x_{ii'} x'_{i'i} x_{ji'} x'_{i'j} - 1 = \sum_{i \in [n]} x_{ii'}^2 x_{i' i}^2 - 1,
\end{equation}

where we observed that from the axioms that $x_{ii'} x_{ji'} = 0$ for $i \neq j$ we may derive in degree 4 that the middle term $\sum_{i, j \in [n] \\ i \neq j} x_{ii'} x_{ji'} x'_{i'i} x'_{i'j} = 0$.

Now, equations (\ref{eq:monomial3})--(\ref{eq:monomial5}) are a degree-2 substitution instance of the equations in Lemma~\ref{lem:dotproduct} with $c=2, d=1$. Thus, by Lemma~\ref{lem:dotproduct}, we can derive (\ref{eq:monomial1}) from these in degree 6.
 
\item Suppose $i'=j'=k' > n$. In this case, the substitution makes all of $y_{ii'}, y_{jj'}, y_{kk'}$ equal to zero unless $i,j,k > n$. Thus we may write the equation, after substitution, as
\begin{align*}
\sum_{i,j,k \in [n]} y_{n+i,i'} y_{n+j,i} y_{n+k,i} r(T)_{n+i, n+j, n+k} & = \sum_{i,j,k \in [n]} x_{i,i'-n}^2 x_{j,i'-n}^2 x_{k,i'-n}^2 r(T)_{n+i, n+j, n+k} \\
 & = r(T')_{i',i',i'} = 1.
\end{align*}
However, because the only entries $r(T)_{n+i,n+j,n+k}$ that are nonzero are those in which $i=j=k$, this simplifies further to:
\[
\sum_{i \in [n]} x_{i,i'-n}^6 = 1.
\]
This is a degree-2 substitution instance of Lemma~\ref{lem:dotproduct} with $c=3,d=1$, so it can be derived in degree 6 from the equations derived in part (a).

\item Suppose neither of the previous two cases hold.  The derivation will depend on which of $i',j',k'$ lie in $[n]$ versus $\{n+1, \dotsc, 2n\}$. 
\begin{enumerate}
\item When all are in $[n]$, we are in the front-upper-left corner of the tensor, and we exactly get the equations $(X,X,X) \cdot T = T'$. 

\item When all three of $i',j',k'$ are $> n$, the only nonzero entries of $r(T)$ are of the form $r(T)_{n+i,n+i,n+i}$, so the equation becomes
\[
\sum_{i \in [n]} x_{i,i'-n}^2 x_{i,j'-n}^2 x_{i,k'-n}^2 = 0.
\]
Since we have assumed $|\{i',j',k'\}| > 1$, there are at least two distinct indices among them, and thus each term in this sum is a multiple of one of our $x_{ij}x_{ij'}$ axioms with $j \neq j'$. 

\item Next, suppose instead that $i',j' \in [n], k' > n$. In this case, the only nonzero entries of $Y$ after substitution are those with $i,j \in [n], k > n$. Thus the equation becomes
\[
\sum_{i,j,k \in [n]} x_{ii'} x_{jj'} x_{k,k'-n}^2 r(T)_{i,j,n+k} = 0
\]
However, the only nonzero entries of $r(T)$ in which the first two coordinates are $\leq n$ and the third is $n+k$ are those of the form $i=j=k$, so the preceding becomes
\[
\sum_{i \in [n]} x_{ii'} x_{ij'} x_{ik'-n}^2 = 0.
\]
Since we do not have $i'=j'=k'-n$ (as that was covered in a previous case), at least two of the column indices differ, and thus each term of this sum is divisible by one of the axioms of the form $x_{ij} x_{ij'}$ with $j \neq j'$. 

\item In all other cases, the corresponding entries of $r(T)$ are all zero, so the equation reduces to $0=0$.
\end{enumerate}
\end{enumerate}

\item The equations $YY' = Y'Y = \Id$ become $XX' = X'X = \Id$ and $(X \circ X)(X' \circ X') = (X' \circ X')(X \circ X) = \Id$. The first of these is one of our original equations, so it remains to derive the second. We show how to derive $(X \circ X)(X' \circ X') = \Id$; the other is similar. For clarity, let us write it out using indices:
\begin{eqnarray} \label{eq:inverse}
\sum_j x_{ij}^2 (x'_{jk})^2 & - & \delta_{ik} = 0\qquad \forall i,k \in [n]
\end{eqnarray}
Starting from the equation $\sum_j x_{ij} x'_{jk} - \delta_{ik} = 0$, we multiply by $\sum_j x_{ij} x'_{jk}$, to get
\[
\sum_j x_{ij}^2 (x'_{jk})^2 + \sum_{j \neq j'}  x_{ij} x'_{jk} x_{ij'} x_{j'k} - \delta_{ik} \sum_j x_{ij} x'_{jk}.
\]
Note that every term in the middle summation here is divisible by some $x_{ij} x_{ij'}$ with $j \neq j'$, which is one of our equations, so we may cancel off those terms using those equations in degree 4. If $i \neq k$, then we are done. If $i=k$, then we add in our equation $\sum_j x_{ij} x'_{jk} - 1$ to get (\ref{eq:inverse}). 

\item The equations $y_{ij} y_{ij'} = 0$ for $j \neq j'$ become 0 after substitution unless $i,j,j'$ are either all in $[n]$ or all in $\{n+1,\dotsc,2n\}$. In the former case, the substituted equation is $x_{ij} x_{ij'} = 0$, which is already one of the original equations. In the latter case, the equation becomes $x_{ij}^2 x_{ij'}^2 = 0$; but this is easily derivable from $x_{ij} x_{ij'}$ by multiplying it by itself (degree 4). The equations saying there is at most one entry per column of $Y$ are derived from those for $X$ similarly.
\end{enumerate}
This covers all the equations for monomial equivalence of $r(T), r(T')$, and thus we are done.
\end{proof}

\begin{lemma} \label{lem:dotproduct}
For any integers $d \geq 1, c \geq 1$, from the equations
\[
x_i(x_i^d -1) (\forall i) \qquad x_i x_j (\forall i \neq j) \qquad \sum_{i=1}^n x_i y_i -1
\]
there is a degree-$\max\{d+2, cd\}$ PC derivation (over any ring $R$) of
\[
\sum_{i \in [n]} x_i^{cd} - 1
\]
\end{lemma}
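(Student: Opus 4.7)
The plan is to first establish the case $c=1$, i.e., derive $\sum_i x_i^d - 1$ in degree $d+2$, and then bootstrap up to $\sum_i x_i^{cd} - 1$ using the axiom $x_i(x_i^d - 1)$ to ``cycle'' higher powers down.

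For $c=1$, I would start from the axiom $\sum_i x_i y_i - 1 = 0$ and multiply it by $x_j^{d-1}$ to get $\sum_i x_j^{d-1} x_i y_i - x_j^{d-1} = 0$, which has degree $d+1$. For each $i \neq j$, the term $x_j^{d-1} x_i y_i$ is obtained by multiplying the axiom $x_i x_j$ by $x_j^{d-2} y_i$, so it can be subtracted off within degree $d+1$. The surviving equation is
\[
x_j^d y_j - x_j^{d-1} = 0.
\]
Next, I would multiply by $x_j$, obtaining $x_j^{d+1} y_j - x_j^d = 0$ in degree $d+2$, and subtract the axiom $x_j(x_j^d - 1)$ multiplied by $y_j$, i.e.\ $x_j^{d+1} y_j - x_j y_j = 0$, also in degree $d+2$. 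The result is $x_j y_j - x_j^d = 0$. Summing over $j$ and combining linearly with $\sum_j x_j y_j - 1$ yields $\sum_j x_j^d - 1 = 0$, still in degree $d+2$.

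To pass from $c=1$ to general $c$, I would exploit that the axiom $x_j(x_j^d - 1) = x_j^{d+1} - x_j$, when multiplied by $x_j^{kd - 1}$ for $k = 1, 2, \dotsc, c-1$, yields
\[
x_j^{(k+1)d} - x_j^{kd} = 0,
\]
in degree $(k+1)d \leq cd$. Summing these $c-1$ equations telescopes to $x_j^{cd} - x_j^d = 0$, in degree at most $cd$. Summing this over $j$ and adding the previously derived $\sum_j x_j^d - 1 = 0$ gives
\[
\sum_{j \in [n]} x_j^{cd} - 1 = 0,
\]
with the maximum degree used anywhere in the derivation being $\max\{d+2, cd\}$, as required.

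The only subtle step is the careful bookkeeping in the first part: the multiplications by $x_j^{d-1}$ and by $x_j$ must be interleaved with axiom applications so that no intermediate polynomial exceeds degree $d+2$. Once this is done, the bootstrapping from $d$ to $cd$ is a routine application of the relation $x_j^{d+1} \equiv x_j$ implied by the axiom, and the final sum step is purely linear. No subtraction involving $y_j$ for different $j$'s is ever needed, so the variables $y_j$ appear only locally and degrees do not blow up.
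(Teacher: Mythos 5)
Your proposal follows the same two-phase structure as the paper's proof: first derive $\sum_i x_i^d - 1$ in degree $d+2$, then telescope via $x_j^{kd-1}\cdot(x_j^{d+1}-x_j)$ for $k=1,\dotsc,c-1$ to reach $\sum_i x_i^{cd} - 1$ in degree at most $cd$; the telescoping step matches the paper's essentially verbatim. Where you differ is the $c=1$ base case. Writing $D = \sum_i x_iy_i$ and $S = \sum_i x_i^d$, the paper shows that $x_iy_i(S-1)$ is a degree-$(d+2)$ combination of axioms, sums these to derive $DS - D$, and then recovers $S-1$ via the linear combination $(DS-D) - S(D-1) + (D-1)$. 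You instead derive $D - S$ term by term, deducing $x_jy_j - x_j^d$ for each $j$ and summing, then subtract from $D - 1$. These are genuinely different decompositions of the same underlying idea, and yours is arguably more elementary in presentation.

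However, your $c=1$ derivation as written breaks for $d = 1$. In the step where you kill the cross terms $x_j^{d-1}x_iy_i$ ($i \neq j$), you multiply the axiom $x_ix_j$ by $x_j^{d-2}y_i$; for $d=1$ this would require the non-polynomial factor $x_j^{-1}y_i$. The root cause is that for $d=1$ you are multiplying $\sum_i x_iy_i - 1$ by $x_j^{0}=1$, which introduces no $x_j$ into the cross terms $x_iy_i$, so the monomial axiom $x_ix_j$ cannot reach them. This is not a degenerate corner: the lemma is stated for all $d\geq 1$, and the paper explicitly invokes it with $d=1$, $c=2,3$ in the proof of Theorem~\ref{thm:pm1_to_monomial}. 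The fix is minor and preserves your route: multiply $\sum_i x_iy_i-1$ by $x_j^{d}$ in one step (degree $d+2$), cancel the cross terms using $x_ix_j$ times $x_j^{d-1}y_i$ (now a legitimate polynomial for all $d\geq 1$, still degree $d+2$), obtain $x_j^{d+1}y_j - x_j^d$, subtract $y_j\cdot(x_j^{d+1}-x_j)$ to reach $x_jy_j - x_j^d$, and then sum and subtract from $D-1$ exactly as you describe. With that reordering of the two multiplications your proof is correct and achieves the same bound $\max\{d+2, cd\}$.
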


Although in the proof above we only used the $d=1$ and $c=2,3$, we will later have occasion to use this lemma with larger values of $d$ and $c$, which is why we phrase it in this level of generality.

\begin{proof}
First we show it for $c=1$, then derive the general case from that. 

Let $S = \sum_{i \in [n]} x_i^{d}$, $D = \sum_{i \in [n]} x_i y_i$. Our first goal is to derive $S-1$. For each $i=1,\dotsc,n$, we can derive $x_i y_i (S-1)$ in degree $d+2$ as follows:
\begin{align*}
x_i y_i (S-1) & = x_i^{d+1} y_i + y_i \sum_{j \neq i} x_i x_j^d - x_i y_i \\
 & = y_i (x_i^{d+1} - x_i) + y_i \sum_{j \neq i} x_i x_j^d 
 = \underline{x_i (x_i^{d} - 1)} y_i  + y_i \sum_{j \neq i} \underline{x_i x_j} x_j^{d-1},
\end{align*}
where we have underlined the use of the axioms. 

Summing up the preceding for all $i$, we derive $DS - D$ in degree $d+2$. Finally, we multiply the starting equation $D-1$ by $S$ to get $SD-S$, also in degree $d+2$. Then we have
\[
(DS - D) - (SD - S) + (D-1) = S-1 = \sum_{i} x_i^d - 1,
\]
as desired.

For $c > 1$, we then sum the preceding with $\sum_{i \in [n]} (x_i^{(c-1)d-1} + x_i^{(c-2)d-1} + \dotsb + x_i^{d-1}) \underline{(x_i^{d+1} - x_i)} = \sum_{i \in [n]} x_i^{cd} - x_i^d$, which has degree $cd$.
\end{proof}

\subsection{From monomial equivalence to general equivalence of noncommutative cubic forms}
\begin{theorem} \label{thm:monomial to general equivalence}
There is a quadratic-size many-one reduction from
\begin{center}
\textsc{Monomial Equivalence of Noncommutative Cubic Forms}

to

\textsc{Equivalence of Noncommutative Cubic Forms},
\end{center}
over any field. 

If furthermore the input cubic forms $f,f'$ have the property that any monomial equivalence between them must have its nonzero scalars being $d$-th roots of unity, and the latter can be derived by PC in degree $d+1$, then the reduction above is a $(d,2d)$-many-one reduction.
\end{theorem}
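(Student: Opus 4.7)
The plan follows the blueprint of Theorem~\ref{thm:pm1_to_monomial}: extend $T$ to a larger noncommutative cubic form $r(T)$ of side length $O(n^2)$ by direct-summing with a rigid gadget, prove the many-one reduction, and simulate it in PC via a Hadamard-power substitution. Following the construction of Grochow--Qiao \cite{GQ}, the gadget is chosen so that any tensor equivalence $(Y,Y,Y)\cdot r(T) = r(T')$ is forced to be block-diagonal, with the upper-left $n\times n$ block $X$ acting as a monomial matrix on the $T$-block, and the gadget block being determined algebraically (as a Hadamard or tensor power) from $X$. The hypothesis that any monomial equivalence of $T,T'$ has its nonzero scalars being $d$-th roots of unity -- and that this is derivable in PC degree $d+1$ from the monomial axioms -- is what enables the reduction to go through at parameter $d$.

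For the many-one reduction, in the forward direction we extend any monomial equivalence $X = DP$ of $T, T'$ to $Y := X \oplus X^{\circ d}$, where $X^{\circ d}$ denotes the entrywise $d$-th power (generalizing the use of $X\circ X$ in the proof of Theorem~\ref{thm:pm1_to_monomial}, which was the $d=2$ case), and verify $(Y,Y,Y)\cdot r(T) = r(T')$ block by block. In the reverse direction we argue by rigidity of the gadget: the distinguished marker entries of $G$ force any equivalence $Y$ to be block-diagonal and to act monomially on the $T$-block, and the $d$-th-root hypothesis then pins down the admissible scalars.

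For the PC simulation, substitute $Y \mapsto X \oplus X^{\circ d}$ and $Y' \mapsto X' \oplus (X')^{\circ d}$; the substitution has degree $d$. We then derive in degree at most $2d$ all equations of the general-equivalence system for $r(T), r(T')$, splitting into four cases analogous to the proof of Theorem~\ref{thm:pm1_to_monomial}. First, the $T$-block equations of $(Y,Y,Y)\cdot r(T) = r(T')$ become literally the monomial-equivalence axioms $(X,X,X)\cdot T = T'$. Second, the gadget-block positions reduce after substitution to sums of the form $\sum_i x_{ii'}^{cd} - 1$ for modest $c$, derivable via Lemma~\ref{lem:dotproduct} from the monomial axioms together with the assumed degree-$(d+1)$ derivation of $x_{ij}(x_{ij}^d - 1)$. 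Third, cross-block positions are annihilated summand-by-summand by divisibility by the monomial axioms $x_{ij}x_{ij'} = 0$ (for $j \neq j'$). Fourth, the invertibility equations reduce to the axiom $XX' = \Id$ together with $X^{\circ d}(X')^{\circ d} = \Id$, and the latter follows in degree $2d$ by iterating the squaring-and-cancellation trick used in equation~(\ref{eq:inverse}).

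The main obstacle is choosing the Grochow--Qiao gadget so that all four cases above proceed simultaneously: the cross-block positions must be uniformly divisible by monomial axioms (a combinatorial property of the gadget's support), the gadget-block positions must reduce to Lemma~\ref{lem:dotproduct}-style dot-product sums (a structural property of the gadget's nonzero entries), and the Hadamard-$d$ substitution must not incur degree beyond $2d$. The quadratic size blow-up seems intrinsic to this approach, since the gadget must encode roughly $n^2$ independent constraints -- one for each potentially nonzero off-diagonal entry of $X$ that a non-monomial equivalence could otherwise exploit.
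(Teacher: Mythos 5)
Your high-level strategy matches the paper's (a quadratic-size gadget built from identity-matrix blocks that forces block-diagonality via rank arguments, followed by a Hadamard-power substitution and appeals to Lemma~\ref{lem:dotproduct}), but the substitution you specify does not work, and the gap is substantive, not cosmetic.

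You propose $Y \mapsto X \oplus X^{\circ d}$, and state that the gadget has side length $O(n^2)$. These two claims are already inconsistent: $X^{\circ d}$ is $n \times n$, so $X \oplus X^{\circ d}$ is $2n \times 2n$, not $\Theta(n^2) \times \Theta(n^2)$. You appear to be transplanting the $X \oplus (X \circ X)$ substitution from the proof of Theorem~\ref{thm:pm1_to_monomial}, but that theorem's gadget is \emph{linear}-size ($2n \times 2n \times 2n$), so the dimensions happen to match there; here they cannot. The paper's gadget adjoins two families of fresh variables $v_{ij}, w_{ij}$ with $i \in [n], j \in [n+1]$ via $r(f) = f + \sum_{i,j} u_i v_{ij} w_{ij}$, so the side length is $n + 2n(n+1)$ and the correct substitution must produce two extra blocks, each of size $n(n+1) \times n(n+1)$.

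More importantly, even after fixing the dimensions (say by $X \oplus (X^{\circ a} \otimes I_{n+1}) \oplus (X^{\circ b} \otimes I_{n+1})$), the choice of Hadamard exponents is a genuine design constraint you have not addressed. The gadget term $u_i v_{ij} w_{ij}$ picks up a factor $d_i^{1+a+b}$ under a monomial $X$ with diagonal entries $d_i$, so for the gadget to be preserved one needs $1 + a + b \equiv 0 \pmod d$. The paper takes $(a,b) = (d-1, d)$, yielding $2d \equiv 0$; your proposed single power $d$ does not satisfy any such congruence. In fact the paper's Remark after the proof explains that the ``obvious'' single-block gadget $\sum_{i,j} u_i v_{ij}^2$ forces the witness to involve $D^{-1/2}$, which obstructs the PC simulation; the two-variable gadget $u_i v_{ij} w_{ij}$ with exponents $d-1$ and $d$ is introduced precisely to sidestep that square root. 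That workaround is the central new idea of this proof, and it is absent from your proposal. Without it, neither the forward direction of the many-one reduction nor the degree-$2d$ PC derivation of the gadget-block equations goes through.

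Finally, the block-diagonality argument in the reverse direction is not ``by rigidity of distinguished marker entries'' alone; the paper's Claims 1 and 2 use rank arguments on linear combinations of frontal slices (the $(n+1) \times (n+1)$ identity blocks in the $(2,3)$ region force rank lower bounds, and the $n \times n$ upper-left corner caps the excess at $n$), and these quantitative rank bounds are what force $Y$ to be block-diagonal with $X$ monomial. You should spell this out if you intend the many-one direction to be more than a gesture.
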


\begin{proof}
Let $f$ be a noncommutative cubic form in variables $u_1, \dotsc, u_n$. Then $r(f)$ will be a new noncommutative cubic form, in $n + 2n(n+1)$ variables $u_1, \dotsc, u_n, v_{11}, v_{12}, \dotsc, v_{n,n+1}, w_{11}, w_{12}, \dotsc, w_{n,n+1}$, which is $r(f) = f + \sum_{i \in [n], j \in [n+1]} u_i v_{ij} w_{ij}$. In terms of the underlying three-way arrays, if we have $f = \sum_{i,j,k \in [n]} T_{ijk} u_i u_j u_k$, then we use $r(T)$ to denote the array underlying $r(f)$, which can be described as follows. The 3-way array $r(T)$ will have size $N \times N \times N$ where $N = n + 2n(n+1)$. Let $T_i$ denote the $i$-th frontal slice of $T_i$, that is, $T_i$ is the matrix such that $(T_i)_{jk} = T_{ijk}$. For $i=1,\dotsc,n$, the frontal slices of $r(T)$ will be defined as:
\[
r(T)_{i} = \left(\begin{array}{c;{2pt/2pt}cccccc;{2pt/2pt}cccccc}
T_i & & & & & &           & &   \\ \hdashline[2pt/2pt]
      & 0_{n+1} & & & & & & 0_{n+1} \\
      & & 0_{n+1} & & & & & & 0_{n+1} \\
      & & & \ddots & &    & & & & \ddots\\
      & & & & 0_{n+1} & & & & & & I_{n+1} \\
      & & & & & \ddots    & & & & & & \ddots \\
      & & & & & & 0_{n+1} &  & & & & & 0_{n+1} \\ \hdashline[2pt/2pt] 
      & 0_{n+1} & & & & & & 0_{n+1} \\
      & & 0_{n+1} & & & & & & 0_{n+1} \\
      & & & \ddots & &    & & & & \ddots\\
      & & & & 0_{n+1} & & & & & & 0_{n+1} \\
      & & & & & \ddots    & & & & & & \ddots \\
      & & & & & & 0_{n+1} &  & & & & & 0_{n+1}
\end{array}\right),
\]
where the $I_{n+1}$ occurs in the $i$-th $(n+1) \times (n+1)$ block of its region. That is, the lower-right $2n(n+1) \times 2n(n+1)$ sub-matrix is the Kronecker product $E_{i,n+i} \otimes I_{n+1}$, where $E_{i,n+i}$ is the $2n \times 2n$ matrix with a 1 in position $(i,n+i)$ and zeros everywhere else. For the slices $i=n+1, \dotsc, N$ we will have $r(T)_i = 0$. 

Our main claim is that the map $(T,T') \mapsto (r(T), r(T)')$ is the reduction claimed in the theorem.

\textbf{Many-one reduction.} Suppose $X \cdot f = f'$ with $X$ monomial. Write $X = PD$ with $D$ diagonal and $P$ a permutation matrix corresponding to the permutation $\pi \in S_n$. Then we claim that 
\[
Y = X \oplus ((P D^{-1}) \otimes I_{n+1}) \oplus (P \otimes I_{n+1})
\]
is an equivalence between $r(f)$ and $r(f')$, where here we assume our variables are ordered as above. For we have 

\begin{align*}
Y \cdot r(f) & = \sum_{ijk \in [n]} T_{ijk} (Y u_i) (Y u_j) (Y u_k) + \sum_{i \in [n],j \in [n+1]} (Y u_i) (Y v_{ij}) (Y w_{ij}) \\
 & = \sum_{ijk \in [n]} T_{ijk} (X u_i) (X u_j) (X u_k) + \sum_{i \in [n], j \in [n+1]} (X u_i) (PD^{-1} v_{ij})(P w_{ij}) \\
 & = X \cdot f  + \sum_{i \in [n],j \in [n+1]} D_{ii} u_{\pi(i)} (D_{ii}^{-1} v_{\pi(i), j}) w_{\pi(i),j} \\
 & = f' + \sum_{i \in [n], j \in [n+1]} u_{\pi(i)} v_{\pi(i),j} w_{\pi(i),j} \\
 & = r(f').
\end{align*}
The final inequality here follows from the fact that $\pi$ is a permutation, so the final sum includes all terms of the form $u_{i} v_{ij} w_{ij}$, just listed in a different order than originally.

Conversely, suppose $Y \cdot r(f) = r(f')$ for an arbitrary invertible $N \times N$ matrix $Y$. To find an equivalence between $f$ and $f'$, here we find it more useful to take the viewpoint of the 3-way arrays $r(T)$ and $r(T')$ corresponding to $r(f)$ and $r(f')$, respectively. 

The way $Y$ acts on the 3-way array $r(T)$ is to first take linear combinations of the frontal slices, say by replacing the $i$-th slice with $\sum_{j \in [N]} Y_{ij} r(T)_j$ (corresponding to the action of $Y$ on the third variable in each monomial), and then to take each slice $S$ and replace it by $Y S Y^t$ (the left multiplication corresponds to the action on the first variable in each monomial, and the right multiplication corresponds to the action on the second variable in each monomial). As this latter transformation preserves the rank of each slice, we will use the ranks of linear combinations of the slices to reason about properties of $Y$. 

\textbf{Claim 1:} $Y$ is a block-diagonal sum of an $n \times n$ matrix $X$ and a $2n(n+1) \times 2n(n+1)$ matrix. 

\begin{proof}[Proof of claim 1.]
First we show that $Y$ is block-triangular. To see this, note that since the last $2n(n+1)$ slices are zero, the action of $Y$ by taking linear combinations of slices cannot send any of the first $n$ slices to the last $2n(n+1)$ slices. That is, $Y$ has the form $Y = \begin{bmatrix} X & Z \\ 0 & W \end{bmatrix}$ where $X$ is $n \times n$ and $W$ is $2n(n+1) \times 2n(n+1)$. It remains to show that $Z$ must be zero.

Since $Y$ is block-diagonal and invertible, we have that $X$ and $W$ are each invertible.

Let $R$ be the tensor gotten from $r(T)$ by having $Y$ act by taking linear combinations of the slices. That is, the $i$-th frontal slices of $R$ is $R_i = \sum_{j \in [N]} Y_{ij} r(T)_j$. Since each slice $r(T)_i$ has its support in the upper-left $n \times n$ sub-matrix and the middle-right $n(n+1) \times n(n+1)$ sub-matrix, so does each slice $R_i$. Write 
\[
R_i = \begin{bmatrix} R_{i}^{(1,1)} & 0 & 0 \\ 0 & 0 & R_i^{(2,2)} \\ 0 & 0_{n(n+1)} & 0 \end{bmatrix},
\]
 where $R_i^{(1,1)}$ is $n \times n$ and $R_i^{(2,2)}$ is $n(n+1) \times n(n+1)$. 

Now consider the action of $Y$ that sends $R_i$ to $Y R_i Y^t = r(T')_i$. We now break up $Y$ further into blocks commensurate with how we wrote $R_i$ above; write 
\[
Y = \begin{bmatrix} X & A & B \\ 0 & C & D \\ 0 & E & F \end{bmatrix}
\qquad 
Z = \begin{bmatrix} A & B \end{bmatrix}
\qquad 
W = \begin{bmatrix} C & D \\ E & F \end{bmatrix}, 
\]
where $A,B$ are $n \times n(n+1)$, and $C,D,E,F$ are each $n(n+1) \times n(n+1)$. Then we have:
\begin{align*}
YR_i Y^t & =  \begin{bmatrix} X & A & B \\ 0 & C & D \\ 0 & E & F \end{bmatrix} 
\begin{bmatrix} R_{i}^{(1,1)} & 0 & 0 \\ 0 & 0 & R_i^{(2,2)} \\ 0 & 0_{n(n+1)} & 0 \end{bmatrix}
\begin{bmatrix} X^t & 0 & 0 \\ A^t & C^t & E^t \\ B^t & D^t & F^t \end{bmatrix}  \\
 & = \begin{bmatrix} X R_i^{(1,1)} & 0 & A R_i^{(2,2)} \\ 0 & 0 & C R_i^{(2,2)} \\ 0 & 0 & E R_i^{(2,2)}\end{bmatrix} 
 \begin{bmatrix} X^t & 0 & 0 \\ A^t & C^t & E^t \\ B^t & D^t & F^t \end{bmatrix} \\
 & = \begin{bmatrix}  XR_i^{(1,1)} X^t + AR_i^{(2,2)} B^t & AR_i^{(2,2)} D^t & AR_i^{(2,2)} F^t \\ 
 C R_i^{(2,2)} B^t & * & * \\
 E R_i^{(2,2)} B^t & * & * \\ 
 \end{bmatrix},
 \end{align*}
 where we have put $*$'s in positions we won't need in the argument. 
 
Next, since each of the first $n$ slices of $r(T')$ must be of this form, and those slices have zeros in each block except the $(1,1)$ and $(2,3)$ blocks, by considering the blocks $(1,2), (1,3), (2,1), (3,1)$ we must have
\[
A R_i^{(2,2)} D^t = 0 \qquad A R_i^{(2,2)} F^t = 0 \qquad C R_i^{(2,2)} B^t = 0 \qquad E R_i^{(2,2)} B^t = 0.
\]
For reasons that will become clear below, we combine these into the two equations
\[
A R_i^{(2,2)} \begin{bmatrix} D^t & F^t \end{bmatrix} = 0 \qquad \begin{bmatrix} C \\ E \end{bmatrix} R_i^{(2,2)} B^t = 0.
\]
Note that the $n(n+1) \times 2n(n+1)$ matrices $\begin{bmatrix} D^t & F^t \end{bmatrix}$ and $\begin{bmatrix} C^t & E^t \end{bmatrix}$ must both be full rank, since otherwise $W = \begin{bmatrix} C & D \\ E & F \end{bmatrix}$ would not be invertible.

The sum of the (2,3) blocks (of size $n(n+1) \times n(n+1)$) of the first $n$ slices of $r(T)$ is precisely the identity matrix $I_{n(n+1)}$. Thus, the linear span of these blocks contains an invertible matrix in it. Since $Y$ is invertible, that linear span is the same as the linear span of the blocks $\{R_i^{(2,2)} : i \in [n]\}$. Thus the latter contains a full-rank matrix, say $\sum_{i=1}^n \alpha_i R_{i}^{(2,2)}$. But since we have $A R_i^{(2,2)} \begin{bmatrix} D^t & F^t \end{bmatrix} = 0$ for all $i$, we may left multiply by $A$ and right-multiply by $\begin{bmatrix} D^t & F^t \end{bmatrix}$ to get $A \left(\sum_{i=1}^n \alpha_i R_i^{(2,2} \right) \begin{bmatrix} D^t & F^t \end{bmatrix} = \sum_{i=1}^n \alpha_i A R_i^{(2,2)} \begin{bmatrix} D^t & F^t \end{bmatrix} = 0$. But now we have that $\sum \alpha_i R_i^{(2,2)}$ is invertible, and $\begin{bmatrix} D^t & F^t \end{bmatrix}$ has full rank $n(n+1)$, so their product also has full rank $n(n+1)$. But then we have that $A$ times a full rank matrix is equal to $0$, hence $A$ must be zero. The same argument, \emph{mutatis mutandis}, using the equation $\begin{bmatrix} C \\ E \end{bmatrix} R_i^{(2,2)} B^t = 0$, gives us that $B = 0$. Hence $Y$ is block-diagonal as claimed.
\end{proof}

Next, we use properties of the ranks of the slices coming from the $I_{n+1}$ gadgets to show that $X$ must in fact be monomial.

\textbf{Claim 2:} $Y = \begin{bmatrix} X & 0 \\ 0 & W \end{bmatrix}$ where $X$ is monomial.

\begin{proof}
In both $r(T)$ and $r(T')$, any linear combination consisting of $k$ of the first $n$ slices (with nonzero coefficients) has rank in the range $[k(n+1), k(n+1) + n]$, for any $k=0,\dotsc,n$. The lower bound can be seen by noting that any such linear combination is block-diagonal with $k$ copies of $I_{n+1}$ on the block diagonal of the $(2,3)$ block. The upper bound comes from the fact that these are the only nonzero blocks in the lower-right $2n(n+1) \times 2n(n+1)$ sub-matrix, and the only other nonzero entries are in the $n \times n$ upper-left sub-matrix, which has rank at most $n$ because of its size.

Using notation from the proof of the preceding claim, since $Y R_i Y^t = r(T')_i$, and the latter has rank in the range $[n+1, 2n+1]$, $R_i$ must also have rank in the same range. But this is only possible if $R_i$ is a linear combination of precisely one of the first $n$ slices of $r(T)$. Thus, $X$ is monomial.
\end{proof}

From claim 2, we thus have that there is a permutation $\pi \in S_n$ and nonzero scalars $d_1, \dotsc, d_n$ such that $R_i = d_i r(T)_{\pi(i)}$ for all $i=1,\dotsc,n$, where $X = DP$ with $D$ the diagonal matrix with diagonal entries $d_i$ and $P$ the permutation matrix corresponding to $\pi$. Finally, in the proof of claim 1, we saw that the upper-left block of $Y R_i Y^t$ was $X R_{i}^{(1,1)} X^t + A R_i^{(2,2)} B^t$, and then learned that $A = B = 0$. Putting these together, and recalling that the upper-left block of $r(T)_i$ is $T_i$, we thus get 
\[
(DP) d_i T_{\pi(i)} (DP)^t = T'_i
\]
for all $i$. In other words, $X$ is a monomial equivalence from $T$ to $T'$ (hence, from $f$ to $f'$). This completes the proof that the construction gives a many-one reduction.

\textbf{Low-degree PC reduction.} To prove the ``furthermore'', suppose that the pair of cubic forms $f,f'$ has the property that any monomial equivalence between them must have its nonzero entries being $d$-th roots of unity, for some $d \geq 1$, and that this can be derived---more specifically, the equations $y_{ij}^{d+1} - y_{ij}$ and similarly for $y'_{ij}$---in degree $d+1$.

Let $Y,Y'$ be the variable matrices for (general) equivalence of $r(f), r(f')$; let $X,X'$ be the variable matrices for monomial equivalence of $f,f'$. Consider the substitution
\begin{equation} \label{eq:mon_to_general_sub}
Y \mapsto \begin{bmatrix} X & 0 \\ 0 & X^{\circ (d-1)} \otimes I_{n+1} \\ 0 & 0 & X^{\circ d} \otimes I_{n+1} \end{bmatrix} \qquad Y' \mapsto \begin{bmatrix} X' & 0 \\ 0 & (X')^{\circ (d-1)} \otimes I_{n-1}  \\ 0 & 0 & (X')^{\circ d} \otimes I_{n+1} \end{bmatrix},
\end{equation}
where $X^{\circ (d-1)}$ denotes the $(d-1)$-fold Hadamard product $X \circ X \circ \dotsb \circ X$, namely, $(X^{\circ (d-1)})_{ij} = x_{ij}^{d-1}$. We will show that the equations for equivalence of $r(f), r(f')$, after this substitution, can be derived from the equations for monomial equivalence of $f,f'$ in low-degree PC. 

(Note that the substitutions above correspond precisely to the forward direction of the many-one reduction, in which $X \oplus (D^{-1} P \otimes I_{n+1} ) \oplus (P \otimes I_{n+1})$ served as an equivalence. For, once we have $x_{ij}^{d+1} - x_{ij}$, we have $X^{\circ (d-1)} = D^{d-1} P = D^{-1} P$, and $X^{\circ d} = D^d P = P$.)

Recall that these equations are $Y \cdot r(f) = r(f')$ and $YY' = Y'Y = \Id$. The latter equations are easier to handle so we begin with those. They become $X^{\circ c} (X')^{\circ c} = (X')^{\circ c} X^{\circ c} = \Id$ for $c \in \{1,d-1,d\}$. For $c=1$, these are some of our starting equations. For $c > 1$, this is similar to the argument in Theorem~\ref{thm:pm1_to_monomial} (see the argument around Equation~(\ref{eq:inverse})), iterated, resulting in a proof of degree $2c$ for any $c$---in this case, $2d$.

Now to the equation(s) $Y \cdot r(f) = r(f')$. After substitution, these become
\begin{equation} \label{eq:running out of names}
\sum_{i,j,k \in [n]} T_{ijk} (X u_i)(X u_j) (X u_k) + \sum_{i \in [n], j \in [n+1]} (X u_i) (X^{\circ (d-1)} v_{ij}) (X^{\circ d} w_{ij}) = \sum_{ijk} T'_{ijk} u_i u_j u_k + \sum_{ij} u_i v_{ij} w_{ij}.
\end{equation}
Focusing on the first summations on both sides of the equation, we see these are precisely the equations $X \cdot f = f'$. After subtracting these off, we now deal with the remaining terms.

We have
\begin{align*}
\sum_{ij} u_i v_{ij} w_{ij} & = \sum_{i \in [n], j \in [n+1]} (X u_i) (X^{\circ (d-1)} v_{ij}) (X^{\circ d} w_{ij}) \\
 & = \sum_{i \in [n], j \in [n+1]} \left(\sum_{k \in [n]} x_{k,i} u_k \right) \left(\sum_{\ell \in [n]} x_{\ell,i}^{d-1} v_{\ell,j} \right)\left(\sum_{h \in [n]} x_{h,i}^{d} w_{h,j}  \right) \\
 & = \sum_{k,\ell \in [n], j \in [n+1]} u_k v_{\ell,j} w_{\ell,j} \left(\sum_{i \in [n]} x_{k,i} x_{\ell,i}^{d-1} x_{\ell,i}^d \right) + \sum_{\substack{k,\ell,h \in [n], j \in [n+1] \\ \ell \neq h}} u_k v_{\ell,j} w_{\ell',j} \left(\sum_{i \in [n]} x_{k,i} x_{\ell,i}^{d-1} x_{h,i}^d \right)
\end{align*}
This becomes the system of equations
\[
\begin{array}{rclr}
\delta_{k,\ell} & = & \sum_{i \in [n]} x_{k,i} x_{\ell,i}^{d-1} x_{\ell,i}^d & \qquad (\forall k,\ell \in [n]) \\
0 & = & \sum_{i \in [n]} x_{k,i} x_{\ell,i}^{d-1} x_{h,i}^{d} & \qquad (\forall k,\ell,h \in [n], \ell \neq h).
\end{array}
\]
(Note that technically we should quantify over all $j \in [n+1]$, but $j$ plays no role in these equations---it just serves to repeat the same equation $n+1$ times. This corresponds to the fact that the lower-right part of our matrices have the form $* \otimes I_{n+1}$.)

When $k \neq \ell$, every term in the first equation is a degree-$2d$ multiple of the monomial axiom $x_{k,i} x_{\ell,i}$. Similarly, every term in the second set of equations is a degree-$2d$ multiple of the monomial axiom $x_{\ell,i} x_{h,i}$. Thus all that remains is the first equation when $k=\ell$, namely, $1 = \sum_{i \in [n]} x_{k,i} x_{k,i}^{d-1} x_{k,i}^d$. This is derived in Lemma~\ref{lem:dotproduct}, with $c=2$ in degree $2d$ (since $d > 1$, we have $\max\{2d,d+2\} = 2d$). This completes the proof that we have a $(d,2d)$-reduction.
\end{proof}

\begin{remark}
There is a slightly simpler and smaller many-one reduction, namely $f \mapsto f + \sum_{i \in [n], j \in [n+1]} u_i v_{ij}^2$. However, in using that reduction, the witness for the forward direction becomes $X \oplus (D^{-1/2} P \otimes I_{n+1})$. This square root introduces a square into the equations that made it difficult to show that it was also a PC reduction. The reduction above fixes this issue.
\end{remark}

\subsection{From cubic forms to tensors}
Our reductions here are those from Futorny--Grochow--Sergeichuk \cite[Cor.~3.4 and Thm.~2.1]{FGS}. The many-one property follows from the results there. We prove that each of these reductions is in fact also a low-degree PC reduction between the corresponding polynomial solvability problems. They reduce first to a problem we call \textsc{Block Tensor Isomorphism}, and then from there to \textsc{Tensor Isomorphism}, so we begin by introducing the former problem and its associated equations.

\begin{definition}[{see Futorny--Grochow--Sergeichuk \cite{FGS}}]
A block $n \times m \times p$ 3-way array is a 3-way array together with a partition of its index sets $\{1,\dotsc,n\} = \{1, \dotsc, n_1\} \sqcup \{n_1 + 1, n_1 + 2, \dotsc, n_1+n_2\} \sqcup \dotsb \sqcup \{\sum_{i=1}^{N-1} n_i + 1, \dotsc, n\}$, and similarly for the other two directions. Two block 3-way arrays are said to be \emph{conformally partitioned} if they have the same size and the same partitions of their index sets. Two conformally partitioned 3-way arrays $T,T'$ with block sizes as above are \emph{block-isomorphic} (called ``block-equivalent'' in \cite{FGS}) if there exist invertible matrices $S_{11}, \dotsc, S_{1,N}, S_{21}, \dotsc, S_{2M}, S_{31}, \dotsc, S_{3P}$, where $S_{1,I}$ is of size $n_I \times n_I$, $S_{2,J}$ is of size $m_J \times m_J$, and $S_{2,K}$ is of size $p_K \times p_K$, such that the block-diagonal matrices give an isomorphism of tensors:
\[
(S_{11} \oplus S_{12} \oplus \dotsb \oplus S_{1N}, S_{21} \oplus \dotsb \oplus S_{2M}, S_{31} \oplus \dotsb \oplus S_{3P}) \cdot T = T'.
\]

Given two block 3-way arrays $T,T'$ as above, the equations for \textsc{Block Tensor Isomorphism} are as folllows. There are $2(\sum_{I \in [N]} n_i + \sum_{J \in [M]} m_J + \sum_{K \in [P]} p_K)$ variables arranged into $2(N + M + P)$ square matrices $X_{I},X'_{I}$ (of size $n_I \times n_I$), $Y_{J}, Y'_J$ (of size $m_J \times m_J$), and $Z_K, Z'_K$ (of size $p_K \times p_K$). Then the equations are:
\[
(X_1 \oplus \dotsb \oplus X_N, Y_1 \oplus \dotsb \oplus Y_M, Z_1 \oplus \dotsb \oplus Z_P) \cdot T = T'
\]
\[
X_I X'_I = X'_I X_I = \Id (\forall I \in [N]) \qquad Y_J Y'_J = Y'_J Y_J = \Id (\forall J \in [M]) \qquad Z_K Z'_K = Z'_K Z_K = \Id (\forall K \in [P]) 
\]
\end{definition}

\begin{lemma} \label{lem:forms to blocks}
The many-one reduction from
\begin{center}
\textsc{Equivalence of Noncommutative Cubic Forms}

to

\textsc{Block Tensor Isomorphism}
\end{center}
in \cite[Cor.~3.4]{FGS} is in fact a linear-size (1,3)-many-one reduction.
\end{lemma}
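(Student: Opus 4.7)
The many-one part is the content of \cite[Cor.~3.4]{FGS}; the work here is to verify that the same reduction is a low-degree PC reduction. Recall the FGS construction: given $T \in \F^{n \times n \times n}$ corresponding to a noncommutative cubic form $f$, one builds a conformally partitioned block $3n \times 3n \times 3n$ array $r(T)$, with three blocks of size $n$ in each direction, by placing $T$ in a designated block position (say $(1,2,3)$) and small identity-matrix gadgets in one or more of the remaining cyclic block positions (say $(2,3,1)$ and $(3,1,2)$), with zeros elsewhere. The gadgets are chosen so that in the forward direction of the many-one reduction, a cubic form equivalence $X \cdot f = f'$ extends to a block isomorphism whose nine block matrices are each determined by either $X$ or $X' := X^{-1}$ according to the cyclic structure of the gadget.

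For the PC part, let $X, X'$ be the $n \times n$ variable matrices of the equations for equivalence of $f, f'$, and let $\{X_I, Y_J, Z_K, X'_I, Y'_J, Z'_K\}_{I,J,K \in [3]}$ be the variable matrices for Block TI on $(r(T), r(T'))$. The substitution is the degree-$1$ forward-witness substitution: each block variable is replaced by either $X$ or $X'$ according to the cyclic identification that the gadget enforces. After substitution, the Block TI tensor-action equation $(X_1 \oplus X_2 \oplus X_3, Y_1 \oplus Y_2 \oplus Y_3, Z_1 \oplus Z_2 \oplus Z_3) \cdot r(T) = r(T')$ splits block by block. The block at position $(1,2,3)$ recovers exactly $(X,X,X) \cdot T = T'$, which is one of the starting axioms. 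Every identity-gadget block, after substitution, becomes a matrix equation whose entries are short monomials in $X$ and $X'$ that either vanish identically or reduce to $XX' = I$; and each block invertibility axiom $X_I X'_I = X'_I X_I = I$ (and the analogues for $Y$ and $Z$) becomes an instance of the original invertibility axioms $XX' = X'X = I$. In each case, the substituted equation is derivable from the starting axioms in degree at most $3$.

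The main obstacle is the bookkeeping for the gadget blocks. The key point that keeps the derivation degree small is that the forward witness built from $X$ satisfies each gadget equation as a polynomial identity modulo a single use of $XX' = I$: because each gadget has all nonzero entries equal to $1$ and lives inside a single $n \times n$ block, expanding the substituted tensor-action equation at a gadget position contracts to a product of at most three of the substituted block matrices, so at most one application of the invertibility axiom is required, giving degree $3$. Combining this with Lemma~\ref{lem:PCred} yields the claimed $(1,3)$-many-one reduction.
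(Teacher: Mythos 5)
The high-level strategy is the same as the paper's: take the degree-$1$ forward-witness substitution (each block variable replaced by $U$, $U' := U^{-1}$, or $1$), observe that the block-diagonal action splits the \textsc{Block TI} equations block by block, and check that each resulting block equation is either literally one of the starting axioms or trivially true. The paper's proof is exactly this, and in fact no real ``derivation'' is needed because after substitution the block equations \emph{are} the starting equations (instances of $(U,U,U)\cdot T=T'$, $UU'=I$, $U'U=I$, and $1=1$).

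However, your recollection of the FGS construction is wrong, and in a way that matters. In \cite[Cor.~3.4]{FGS} the block array $r(T)$ is not a $3n\times 3n\times 3n$ array with three size-$n$ blocks per direction, and $T$ is not placed in block position $(1,2,3)$ with cyclic identity gadgets at $(2,3,1)$ and $(3,1,2)$. The actual $r(T)$ has size $(n+1)\times(2n+1)\times(2n+1)$, partitioned into $2\times 3\times 3$ blocks of sizes $(n,1)\times(n,n,1)\times(n,n,1)$. The tensor $T$ sits in block $(1,1,1)$, identity gadgets sit in blocks $(1,2,3)$, $(1,3,2)$, $(2,1,2)$, $(2,2,1)$, and a scalar $1$ sits in block $(2,3,3)$; equivalently, $r(f)=f+\sum_i(u_iv_iz+u_izv_i+zu_iv_i+zv_iu_i)+z^3$. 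Crucially, the gadgets must tie the block matrix acting on the $(1,1,1)$ block back to the block matrices acting on the other positions. Your proposed gadget (with $T$ at $(1,2,3)$ and identity gadgets in disjoint cyclic positions) would not force $X_1=Y_2=Z_3$: the three gadget blocks constrain nine block matrices in three independent triples, none of which overlap the triple acting on the $T$-block. So it is not even clear that the construction as you have written it is a many-one reduction, let alone the one in \cite[Cor.~3.4]{FGS}. The correct substitution is $X,Y_1,Z_1,Y_2',Z_2'\mapsto U$, $X',Y_1',Z_1',Y_2,Z_2\mapsto U'$, $x,x',y,y',z,z'\mapsto 1$, after which every block equation is literally an axiom. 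Fixing the gadget description (and the resulting substitution) would make your argument correct, and would then coincide with the paper's.
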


\begin{proof}
Given a noncommutative cubic form $f$ in $n$ variables, $f = \sum_{i,j,k \in [n]} T_{ijk} u_i u_j u_k$, we recall the block tensor $r(T)$ from \cite[Cor.~3.4]{FGS}. It is partitioned into $2 \times 3 \times 3$ many blocks, with the rows being partitioned into $n,1$, the columns into $n,n,1$, and the depths also into $n,n,1$; thus its total size is $(n+1) \times (2n+1) \times (2n+1)$. Let $E_{ijk}$ denote the tensor of this size whose only nonzero entry is a 1 in position $(i,j,k)$. Then we define
\[
r(T) = T + \sum_{i \in [n]} \left(E_{i,n+i,2n+1} + E_{i,2n+1,n+i} + E_{n+1,i,n+i} + E_{n+1,n+i,i} \right) + E_{n+1, 2n+1, 2n+1}
\]
If you wanted to think of this as part of the tensor corresponding to a cubic form, that cubic form would have $n+1$ new variables $v_1, \dotsc, v_n, z$, and the form would be:
\[
r(f) := f + \sum_{i \in [n]} (u_i v_i z + u_i z v_i + z u_i v_i + z v_i u_i) + z^3.
\]
(This doesn't quite line up with the above description of a tensor, as the tensor corresponding to $r(f)$ would necessarily have all 3 side lengths the same, $2n+1$. However, there are $n$ of the $2n+1$ rows in that tensor that are entirely zero, namely, the rows corresponding to those monomials that begin with a $v_i$.)

The equations for block isomorphism of $r(T)$ and $r(T')$ have the following variable matrices $X,X'$ are $n \times n$, $x,x'$ are $1 \times 1$, $Y_1, Y_1', Y_2, Y_2'$ are $n \times n$, $y, y'$ are $1 \times 1$, $Z_1, Z_1', Z_2, Z_2'$ are $n \times n$, and $z,z'$ are $1 \times 1$. Let $U,U'$ be the $n \times n$ variable matrices for the equations for equivalence of the noncommutative cubic forms $f,f'$. We consider the following substitution:
\[
X,Y_1, Z_1, Y'_2, Z'_2 \mapsto U \qquad X', Y_1', Z_1', Y_2, Z_2  \mapsto U' \qquad x,x',y,y',z,z' \mapsto 1.
\]
Under this substitution, the equations for block isomorphism of $r(T),r(T')$ become
\begin{align*}
(U,U,U) \cdot T & + \sum_{i \in [n]} \left( (U,U',1) \cdot E_{i,n+i,2n+1} + (U, 1,U') \cdot E_{i,2n+1,n+i} \right. \\
& \left. + (1,U,U') \cdot E_{n+1,i,n+i} + (1, U', U) \cdot E_{n+1,n+i,i} + (1,1,1) \cdot E_{n+1, 2n+1, 2n+1} \right) \\
= & T' +  \sum_{i \in [n]} \left(E_{i,n+i,2n+1} + E_{i,2n+1,n+i} + E_{n+1,i,n+i} + E_{n+1,n+i,i} \right) + E_{n+1, 2n+1, 2n+1}
\end{align*}
Now, because each summand inside the big sum corresponds to an identity matrix in a block (e.g. $\sum_{i \in [n]} E_{i,n+i,2n+1}$ is an identity matrix in rows $\{1,\dotsc,n\}$, columns $\{n+1, \dotsc, 2n\}$, and depth $2n+1$), the above equations give us many instances of $UU' = \Id$ and $U'U = \Id$, which is one of our starting equations. We also get the equation $1=1$, and lastly, $(U, U, U) \cdot T = T'$, which is another one of our starting equations. Thus the equations we get here are in fact precisely the same as the equations we started with. As these are cubic equations and the substitutions were linear, it is a (1,3)-PC reduction.
\end{proof}

\begin{lemma} \label{lem:blocks to tensors}
When the number of blocks is $O(1)$, the many-one reduction from
\begin{center}
\textsc{Block Tensor Isomorphism}

to

\textsc{Tensor Isomorphism}
\end{center}
in \cite[Thm.~2.1]{FGS} is in fact a quadratic-size (1,3)-many-one reduction.
\end{lemma}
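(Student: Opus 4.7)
The plan is to make the FGS construction $(T,T') \mapsto (r(T),r(T'))$ of \cite[Thm.~2.1]{FGS} explicit, to exhibit the natural ``block-diagonal lift'' substitution of the TI variables of $r(T), r(T')$ in terms of the Block TI variables of $T, T'$, and to verify that this substitution is linear (degree $d_1=1$) and that the TI axioms can be derived from the Block TI axioms in PC degree $\leq 3$. Recall that FGS constructs $r(T)$ by placing $T$ in the upper-left corner of a larger 3-tensor and adjoining a small collection of ``block-marker'' gadget slices---built from identity matrices in specific positions and products of standard basis vectors---whose sole purpose is to force any isomorphism of $(r(T), r(T'))$ to respect the given block partition. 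When $N, M, P = O(1)$, the sizes of the marker gadgets are bounded by a polynomial of degree at most $2$ in the input side lengths, yielding the claimed quadratic-size bound.

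First I would make the substitution explicit. For the TI variable matrix $\widetilde X$ acting on the row direction of $r(T)$, the upper-left portion is declared to be the block-diagonal matrix $X_1 \oplus \dotsb \oplus X_N$ assembled from the Block TI variables, the gadget diagonal blocks are declared to be identity matrices of appropriate sizes, and every other entry is set to $0$. The analogous definitions are used for $\widetilde X', \widetilde Y, \widetilde Y', \widetilde Z, \widetilde Z'$, with $X'_I, Y_J, Y'_J, Z_K, Z'_K$ respectively. Every substituted TI variable is either a single Block TI variable or a constant in $\{0,1\}$, so $d_1 = 1$. This substitution is exactly the forward direction of the FGS many-one reduction: any block-diagonal Block TI isomorphism $(X_1 \oplus \dotsb, Y_1 \oplus \dotsb, Z_1 \oplus \dotsb)$ lifts to an isomorphism of $r(T), r(T')$ by acting as the identity on the gadget coordinates.

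Second, I would derive each of the TI axioms for $r(T), r(T')$ from the Block TI axioms for $T, T'$ in PC degree at most $3$ after substitution. The invertibility axioms $\widetilde X \widetilde X' = \widetilde X' \widetilde X = \Id$ decompose under the substitution into block pieces: each diagonal block reduces either to $X_I X'_I$ or $X'_I X_I$ (with $X_I X'_I - \Id, X'_I X_I - \Id$ being Block TI axioms, so the derivation has degree $2$) or to $\Id \cdot \Id - \Id$ on a gadget block (trivially $0$ after substitution); each off-diagonal block becomes a product of zero matrices, likewise $0$. The same holds for $\widetilde Y, \widetilde Z$. For the action equation $(\widetilde X, \widetilde Y, \widetilde Z) \cdot r(T) = r(T')$, the coordinates inside the original $T$-region give back exactly the Block TI action axiom. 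The coordinates in the gadget region, by the design of the FGS gadgets (a constant number of products of basis vectors with identity matrices), reduce to a constant-size list of equations each of which is either a trivial identity (matching constant gadget entries on both sides) or an instance of a Block TI invertibility axiom $X_I X'_I = \Id$, $Y_J Y'_J = \Id$, or $Z_K Z'_K = \Id$ (arising when a gadget coordinate couples a constant marker on one axis with a Block TI variable block on another). Since $r(T)$ is cubic in the TI variables and the substitution is linear, each such derivation has degree at most $3$.

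The main obstacle is the careful bookkeeping of how each gadget coordinate of $r(T)$ contributes to the substituted action equation, and verifying that every resulting identity is in fact derivable from the Block TI axioms in degree at most $3$ rather than, say, requiring a degree-$4$ contraction against an invertibility axiom. Because $N, M, P = O(1)$ by hypothesis, only a constant number of gadget coordinate types arise, and by the structure of the FGS gadgets---built purely from identity matrices and basis vectors---each type reduces either to a trivial identity or to the product of an invertibility axiom (degree $2$) with at most one additional variable (from the substituted action coefficient), keeping the total degree at $3$. Combined with Lemma~\ref{lem:PCred} and Definition~\ref{def:reduction2}, this establishes the claimed quadratic-size $(1,3)$-many-one reduction.
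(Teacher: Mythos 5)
Your overall strategy---exhibit a block-diagonal substitution for the \textsc{TI} variables in terms of the \textsc{Block TI} variables and check degree-3 derivability of each substituted axiom---is the same as the paper's, but the proposal is too vague to be verifiable, and two of its concrete claims are off.

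First, the FGS \textsc{Thm}~2.1 reduction does not add ``block-marker gadget slices built from identity matrices and products of standard basis vectors''; that description matches the \textsc{Cor}~3.4 gadget used in the previous step (Lemma~\ref{lem:forms to blocks}). Theorem 2.1 is proved by applying \textsc{Lem}~2.2 of FGS three times, once per direction, to eliminate one direction's partition at a time. In a single application, for a block tensor of size $n\times m\times p$ with $N\times M\times P$ blocks, one sets $r=\min\{n,m\}+1$, adds $s=\sum_{K\le P}2^{K-1}r$ new rows and $t=\sum_{K\le P}2^{K-1}r\,p_K$ new columns, and places a copy of $I_{2^{K-1}r}$ in the $i_0$-th column sub-block of row-block $K$ of the upper-left $s\times t$ region in every slice $i$ belonging to depth-block $K$ (where $i_0$ is the within-block position). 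The \emph{exponentially growing ranks} $2^{K-1}r$ are the essential feature: they are what force the output $Z$ to be block-diagonal, and they are also where the quadratic size bound (and the need for $P=O(1)$) comes from, since $s,t$ sum a geometric series weighted by the $p_K$'s. None of this is in your proposal, yet it is precisely what the ``(1,3)'' and ``quadratic-size'' claims ride on.

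Second, there is an internal inconsistency between your stated substitution and your claimed derivations. You say the gadget blocks of $\widetilde X,\widetilde Y,\widetilde Z$ are substituted by identity matrices, and then claim some gadget coordinates of $(\widetilde X,\widetilde Y,\widetilde Z)\cdot r(T)=r(T')$ reduce to invertibility axioms such as $Z_KZ_K'=\Id$. But with identities on the gadget rows/columns of $X_1$ and $Y_1$, the gadget part of slice $i'$ after the action is $\sum_i Z_{ii'}E_i$ (the $Z$-mixing of slices survives), and one is left needing to derive $\sum_i Z_{ii'}E_i = E_{i'}$---which, with the $I_{2^{K-1}r}$ gadgets described above, amounts to $Z_K=\Id$, something the Block TI axioms do not give you. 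For your invertibility-axiom claim to be right, the gadget columns of the new $Y_1$ must \emph{not} be substituted by the identity; they must be built from the inverse variables $Z_K'$ (as a direct sum of Kronecker products $Z_K'\otimes I_{2^{K-1}r}$, say) so that they cancel the $Z_K$-mixing, exactly as Lemma~\ref{lem:forms to blocks} substitutes some of the block variables by $U'$ rather than $U$ and thereby turns gadget equations into $UU'=\Id$ instances. You should spell this substitution out explicitly and verify that the resulting gadget-region polynomials are a sum of terms each divisible by an entry of $Z_KZ_K'-\Id$, so the derivation stays at degree 3.
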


Note that the output of the reduction of Lemma~\ref{lem:forms to blocks} has $2 \times 3 \times 3$ many blocks, so the restriction to $O(1)$ many blocks in Lemma~\ref{lem:blocks to tensors} presents no obstacle to our goal.

\begin{proof}
The key is to show how to effectively remove the partition in one of the three directions; then that reduction can be applied three times in the three separate directions. Let $T,T'$ be block tensors of size $n \times m \times p$, with $N \times M \times P$ many blocks. The construction of \cite[Lem.~2.2]{FGS} shows how to construct from this a block tensor of quadratic size with $N \times M \times 1$ many blocks. We recall the construction here and show that it is a (1,3)-PC reduction.

Let $p_1, \dotsc, p_P$ denote the sizes of the parts of the partition in the third direction. Let $r = \min\{n,m\}+1$---this will govern the rank of the identity matrix gadgets we add. Let $s = \sum_{K=1}^P 2^{K-1} r$ and $t = \sum_{K=1}^P 2^{K-1} r p_K$. Then the output tensor will have size $(n + s) \times (m + t) \times p$. (Note that, since $P = O(1)$, we have that $s$ is linearly bounded in $n,m$ and $t$ is quadratic as a function of $n,m,p$.) Let $T_1, \dotsc, T_p$ be the frontal slices of $T$. The $i$-th slice of $r(T)$ will be as follows. Suppose $i$ is in the $K$-th block, and write $i = i_0 + \sum_{k=1}^K p_k$ with $1 \leq i_0 \leq p_{K+1}$. Write the slices $T_i$ as $T_i = \begin{bmatrix} A_i & B_i \\ C_i & D_i \end{bmatrix}$, where $A_i$ is $n_1 \times m_1$---representing the first part of the partition of $T$ into rows and columns, and $D_i$ represents all the other parts.
Then we construct:
\[
r(T)_i := \left[\begin{array}{ccc;{2pt/2pt}c;{2pt/2pt}ccccc;{2pt/2pt}c;{2pt/2pt}ccc;{2pt/2pt}c|c}
0 & \dotsb & 0 &            &   &            &                      &            &    &           &    &            &    & & \\ \hdashline[2pt/2pt]      
              & &    & \ddots &    &            &                     &            &    &           &    &            &    & & \\ \hdashline[2pt/2pt]      
              & &    &            & 0 & \dotsb & I_{2^{K-1}r} & \dotsb & 0 &           &    &            &    & & \\ \hdashline[2pt/2pt]      
              & &    &            &    &            &                    &           &    & \ddots &    &            &    & & \\ \hdashline[2pt/2pt]             
              & &    &            &    &            &                    &           &    &            & 0 & \dotsb & 0 & & \\ \hdashline[2pt/2pt]
              & &    &            &    &            &                    &           &    &            &    &            &    & A_i & B_i \\ \hline
              & &    &            &    &            &                    &           &    &            &    &            &    & C_i & D_i \\              
\end{array}\right],
\]
where the $I_{2^{K-1}r}$ is in the $i_0$-th position within the $K$-th block-row and block-column as indicated by the dashed lines. Here the dashed lines \emph{do not} represent additional parts of the partition, they are just for visual clarity. The solid lines indicate the first part of the new partition into rows and columns. The rows of $C_i$ and $D_i$ are partitioned into blocks the same as they were originally in $T_i$, and the columns of $B_i$ and $D_i$ are partitioned into parts in the same way as they were originally in $T_i$. That is, the entire big gadget in the upper-left gets prepended to the first parts of the row and column partitions. This is the many-one reduction.

Let $X_{1}, \dotsc, X_N$, $Y_1, \dotsc, Y_M$, and $Z$ be variable matrices (with associated primes matrices $X_1'$, etc.), with sizes as follows:
\begin{itemize}
\item $X_1$ has size $(s + n_1) \times (s + n_1)$

\item $X_I$ for $I \geq 2$ has size $n_I \times n_I$

\item $Y_1$ has size $(t + m_1) \times (t + m_1)$

\item $Y_J$ for $J \geq 2$ has size $m_J \times m_J$

\item $Z$ has size $p \times p$.
\end{itemize}

We start from the equations for \textsc{Block Isomorphism} (but now where there is only one block in the third direction), namely
\[
X_I X_I' = X_I' X_I = \Id \qquad Y_J Y_J' = Y_J' Y_J = \Id \qquad ZZ' = Z'Z = \Id 
\]
and 
\[
(X_1 \oplus \dotsb \oplus X_N, Y_1 \oplus \dotsb \oplus Y_N, Z) \cdot r(T) = r(T').
\]
We make the following substitution (with the same substitutions, \emph{mutatis mutandis}, for the primed variables):
\begin{itemize}
\item $X_1 \mapsto I_s \oplus \hat{X}_1$, where $\hat{X}_1$ is a matrix of variables of size $n_1 \times n_1$.

\item For $I \geq 2$, $X_I$ maps to itself.

\item $Y_1 \mapsto I_t \oplus \hat{Y}_1$, where $\hat{Y}_1$ is a matrix of variables of size $m_1 \times m_1$.

\item For $J \geq 2$, $Y_J$ maps to itself.

\item $Z$ maps to a block matrix $Z_1 \oplus \dotsb \oplus Z_P$, where for each $K \in [P]$, we have $Z_K$ is a $p_K \times p_K$ matrix of variables.
\end{itemize}

Under these substitutions, the equations for \textsc{Block Isomorphism} of $r(T), r(T')$ become precisely the original equations for \textsc{Block Isomorphism} of $T,T'$, together with equations of the form $I_s E_i I_t = E_i$, where $E_i$ is the $s \times t$ gadget matrix in the upper-left in the $i$-th slice. Thus we get a $(1,3)$-reduction.

Finally, this is then repeated in the other two directions to reduce the number of blocks in all three directions to one, thus giving an instance of \textsc{Tensor Isomorphism}.
\end{proof}

\subsection{Putting it all together}
Finally, we combine all the above to prove Theorem~\ref{thm:lowerbound}.

\begin{proof}[Proof of Theorem~\ref{thm:lowerbound}]
Let $m = c n$ with $c \geq 10^4$. By Theorem~\ref{thm:random}, random 3XOR instances with clause density $c$ require PC degree $\Omega(n/c^2) = \Omega(n)$ (in our case) to refute. The number of instances that the random distribution assigns nonzero probability is $\binom{2\binom{n}{3}}{m} \sim \binom{n^3}{cn} \geq n^{3cn} / (cn)^{cn} = c^{2cn \log n - cn} \geq c^{\Omega(n \log n)}$.

By Theorem~\ref{thm:XORtoMon}, there is a (1,3)-many-one reduction from those instances to $\{\pm 1\}$\textsc{-Monomial Equivalence of $\{\pm 1\}$ Multilinear Noncommutative Cubic Forms}, where the number of variables in the cubic form is the same as the number of variables in the 3XOR instance. By Theorem~\ref{thm:pm1_to_monomial} there is then a (2,6)-many-one reduction to \textsc{Monomial Equivalence of $\{\pm 1\}$ Noncommutative Cubic Forms}, where the number of variables in the output cubic form is linear in the original number of variables, and such that the output forms have the property that any monomial equivalence between them has all its nonzero entries being 6-th roots of unity. This thus satisfies the hypothesis of Theorem~\ref{thm:monomial to general equivalence} with $d=6$, so there is a (6,12)-many-one reduction to \textsc{Equivalence of Noncommutative Cubic Forms}, where the output has a quadratic number of variables compared to the input. Finally, combining Lemmata~\ref{lem:forms to blocks} and \ref{lem:blocks to tensors}, we get a (1,3) reduction from \textsc{Equivalence of Noncommutative Cubic Forms} to \textsc{Tensor Isomorphism}, which further increases the size quadratically. In total, the size increases multiply, yielding a quartic size increase. The substitution degrees multiply and the derivation degrees we take the max, yielding a (12,12)-many-one reduction from Random 3XOR to \textsc{Tensor Isomorphism} on tensors of size $O(n^4) \times O(n^4) \times O(n^4)$. By Lemma~\ref{lem:PCred}, any PC refutation of these \textsc{Tensor Isomorphism} instances requires degree $\Omega(n)$.
\end{proof}

We note that our lower bound for tensor isomorphism also applies to the stronger Sum-of-Squares proof system. This is due to the fact that there is lower bound for random 3XOR in Sum-of-Squares, as shown by Grigoriev \cite{grigoriev2001linear} and independently by Schoenbeck \cite{schoenebeck2008linear}, which makes the dependence on the clause density explicit.

\begin{theorem}[{\cite[Theorem 12]{schoenebeck2008linear}}]
A random 3-XOR instance with clause density $\Delta = m/n = dn^\epsilon$, for all sufficiently large constants $d$, requires SoS degree $\Omega(n^{1 - \epsilon})$ to refute, with probability $1 - o(1)$. 
\label{thm:random_sos}
\end{theorem}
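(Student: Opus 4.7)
The plan is to follow the pseudo-expectation strategy of Grigoriev and Schoenebeck. By SoS duality, to rule out an SoS refutation of degree $d$ it suffices to exhibit a linear functional $\tilde{\mathbb{E}}$ on polynomials of degree at most $d$ in $x_1,\dotsc,x_n$ satisfying (i) $\tilde{\mathbb{E}}[1]=1$, (ii) $\tilde{\mathbb{E}}[p\cdot a]=0$ whenever $a$ is an axiom $x_{i_\ell}x_{j_\ell}x_{k_\ell}-s_\ell$ or $x_i^2-1$ and $\deg(p\cdot a)\leq d$, and (iii) $\tilde{\mathbb{E}}[q^2]\geq 0$ for every $q$ with $\deg q\leq d/2$. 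I would aim to construct such a $\tilde{\mathbb{E}}$ for $d=\Omega(n^{1-\epsilon})$, and the theorem then follows by SoS duality.

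The first step is a standard expansion bound for the $3$-uniform constraint hypergraph $H$ of the random 3-XOR instance. A first-moment / union-bound computation over subsets of hyperedges shows that, provided $d$ is a sufficiently large constant, with probability $1-o(1)$ every $T\subseteq E(H)$ with $|T|\leq s=\Theta(n^{1-\epsilon})$ has unique-neighbor boundary $|\partial T|\geq \gamma|T|$ for some fixed $\gamma>0$; the density constant $d$ enters only through $\gamma$ and the hidden constants. Expansion implies that the $\F_2$-linear system cut out by any such $T$ has full rank, so every sub-instance on at most $s$ constraints is satisfiable, and its solution set is an affine subspace of $\F_2^n$ of the ``expected'' codimension.

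Given this expansion, I would define $\tilde{\mathbb{E}}$ combinatorially. For a multilinear monomial $x_S$ with $|S|\leq d$, construct a closure $T(S)\subseteq E(H)$ by a greedy peeling procedure that adjoins constraints touching the reached vertex set, and set $\tilde{\mathbb{E}}[x_S]$ equal to the true expectation of $x_S$ under the uniform distribution on satisfying assignments of the sub-system indexed by $T(S)$. Because $|T(S)|$ stays within the expander regime, $T(S)$ is full rank and the expectation is well-defined; axiom satisfaction then holds because $\tilde{\mathbb{E}}$ is locally an honest expectation over assignments that obey the relevant axioms.

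The main obstacle, and the technical heart of the proof, is condition (iii): PSD-ness of the moment matrix $M_{S,T}=\tilde{\mathbb{E}}[x_S x_T]$ indexed by multilinear monomials of degree at most $d/2$. Following Schoenebeck, I would establish a factorization $M=A^{\top}DA$ where $D$ is the genuine Gram matrix of the uniform distribution on satisfying assignments of a \emph{single} enlarged sub-system that simultaneously contains the closures $T(S)\cup T(T)$ for all pairs $S,T$ of size at most $d/2$. Forcing every such union to still fit within the expansion threshold $s$ is exactly what requires $d$ to be a constant fraction of $s=\Theta(n^{1-\epsilon})$, and this closure-under-unions property, together with the initial expansion bound, is where the dependence on $\epsilon$ is extracted.
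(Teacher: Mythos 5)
The paper does not prove this theorem; it cites it directly from Schoenebeck's work \cite{schoenebeck2008linear} (with Grigoriev's parallel result \cite{grigoriev2001linear}), so there is no in-paper proof to compare against. Your outline does capture the correct high-level architecture of the cited proof: SoS duality reduces the task to constructing a degree-$\Omega(n^{1-\epsilon})$ pseudo-expectation; a union-bound expansion argument for the random $3$-uniform constraint hypergraph gives boundary expansion (and hence local full-rank/satisfiability) for sub-systems of size up to $\Theta(n^{1-\epsilon})$; a closure operation defines the pseudo-moments locally; and the remaining work is PSD-ness of the moment matrix.

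However, the specific PSD step you propose has a real gap. You suggest factoring $M = A^{\top} D A$ where $D$ is the Gram matrix of the uniform distribution on solutions of a \emph{single} enlarged sub-system containing $T(S) \cup T(T)$ for \emph{all} pairs $S,T$ of size $\leq d/2$. No such sub-system can exist in the expander regime: there are $n^{\Theta(d)}$ monomial index sets $S$, so the union $\bigcup_{S,T} \bigl(T(S) \cup T(T)\bigr)$ will generically include essentially all $m$ constraints, vastly exceeding the expansion threshold $s = \Theta(n^{1-\epsilon})$---and the full system is unsatisfiable by hypothesis, so there is no distribution over its satisfying assignments at all. Schoenebeck's actual PSD argument is local rather than global: the entry $\tilde{\mathbb{E}}[x_S x_T]$ is defined using only the closure of $S \cup T$, and PSD-ness follows from a careful consistency lemma showing that the local distributions attached to different (pairwise-small) closures agree on their overlaps, allowing the moment matrix to be decomposed into blocks each of which is a genuine covariance matrix of a local distribution. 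That consistency-of-local-marginals lemma, not a single global factorization, is the technical heart; your sketch as written would not go through.
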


In particular, this is a linear $\Omega(n)$ lower bound in the case of constant clause density ($\epsilon = 0$), which matches the PC lower bound of Theorem \ref{thm:random}.

As we observe all of our reductions go through in Sum-of-Squares, since Sum-of-Squares simulates PC over the reals due to Berkholz \cite{berkholz2018relation}. Furthermore, this simulation preserves degrees of proofs up to a constant factor. 

\begin{theorem}[{\cite[Theorem 1.1]{berkholz2018relation}}]
If a system of polynomial equations $\mathcal{F}$ over the reals has a PC refutation of degree $d$ and size $s$, it also has a sum-of-squares refutation of degree $2d$ and size $poly(s)$.
\label{thm:sos_simulates_pc}
\end{theorem}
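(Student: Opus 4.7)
My plan is to convert the PC refutation into a static Nullstellensatz-style certificate, then package the result as an SoS refutation using the paper's formulation (which allows arbitrary polynomial multipliers for equation axioms). Given a PC refutation $p_1, p_2, \ldots, p_m = 1$ of $\mathcal{F} = \{f_1, \ldots, f_r\}$ of degree $d$ and size $s$, I would inductively maintain, for each line $p_k$, a formal decomposition $p_k = \sum_j g_{k,j} f_j$ with $\deg(g_{k,j} f_j) \le \deg(p_k) \le d$. An axiom line sets one $g_{k,j} := 1$; a multiplication rule $p_k = x \cdot p_i$ propagates as $g_{k,j} := x \cdot g_{i,j}$; a linear combination $p_k = a p_i + b p_\ell$ propagates as $g_{k,j} := a g_{i,j} + b g_{\ell,j}$. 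The terminating line gives $1 = \sum_j g_{m,j} f_j$, which rearranges into the SoS certificate $-1 = 0 + \sum_j (-g_{m,j}) f_j$, with $s_0 = 0$ serving as a trivial sum of squares.

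The delicate issue is size. A naive unfolding of the $g_{k,j}$'s can produce exponentially many monomials, since each PC multiplication step potentially rescales every existing monomial by a new variable, and subsequent linear combinations can double the count. To keep the certificate of size $\mathrm{poly}(s)$, I expect the construction to introduce auxiliary SoS ``witnesses'' alongside the multipliers---namely, intermediate polynomials and their squares that compactly encode the dynamic proof structure so that each PC line can be ``named'' and reused rather than unfolded. These squared witnesses contribute to the $s_0$ component of the final certificate and are naturally of degree up to $2d$, which is where I would locate the factor-of-$2$ degree blow-up in the theorem statement: it is the cost of trading an exponential-size Nullstellensatz unwinding for a polynomial-size SoS encoding.

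The main obstacle is to show that this modified construction simultaneously achieves size $\mathrm{poly}(s)$ and degree at most $2d$. The degree bound is the easier half: once the SoS auxiliary structure is in place, each component is essentially a product of polynomials of degree at most $d$. The harder half is the size control, which requires proving that each PC step adds only polynomially many new monomials to the accumulated certificate, so that the total growth is bounded by a polynomial in the proof size rather than blowing up exponentially. Finally, one must verify that the resulting compact object is literally a valid SoS refutation in the formal sense defined in the paper---that is, that after all the auxiliary SoS witnesses and arbitrary polynomial multipliers are assembled, they form an identity of the form $-1 = s_0 + \sum_j t_j f_j$ with $s_0$ an explicit sum of squares.
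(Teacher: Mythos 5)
The paper itself does not prove this result; it is quoted directly from Berkholz (2018) and invoked as a black box to transfer the PC degree lower bound from Theorem~\ref{thm:lowerbound} to the SoS lower bound of Theorem~\ref{thm:sos_lower_bound}. So there is no internal proof to compare against, and the question is whether your sketch would itself constitute a proof.

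It does not. The first half of your plan --- inductively writing each PC line as $p_k = \sum_j g_{k,j} f_j$ with $\deg(g_{k,j}f_j)\le d$ --- is the standard unrolling of a PC refutation into a Nullstellensatz certificate, and you correctly observe that the multipliers $g_{m,j}$ can acquire exponentially many monomials along the way. But your proposed fix is not a construction: SoS is static and has no extension-variable mechanism, so there is no way to literally ``name'' an intermediate line and reuse it, and the ``auxiliary squared witnesses'' you invoke never appear in concrete form. Moreover, the location you assign to the factor-of-$2$ degree blowup is off. In Berkholz's simulation the degree doubles because the inductive invariant is changed from ``$p_k$ lies in the ideal'' to ``there is an SoS derivation of $-p_k^2 \ge 0$ from $\mathcal{F}$'': an axiom $p_k=f_i$ gives $-f_i^2 = (-f_i)f_i$; a multiplication $p_k=xp_i$ gives $-(xp_i)^2 = x^2\cdot(-p_i^2)$; and a linear combination $p_k = ap_i+bp_\ell$ is handled via an identity of the shape $-(ap_i+bp_\ell)^2 = (ap_i-bp_\ell)^2 + 2a^2(-p_i^2) + 2b^2(-p_\ell^2)$, which introduces one fresh explicit square plus nonnegative multiples of the previously certified objects. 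The terminal line $p_m=1$ then yields $-1\ge0$, and the bound $\deg(p_k)\le d$ gives degree at most $2d$ throughout. What remains --- and what you would actually have to supply --- is the argument that the resulting static certificate has size $\mathrm{poly}(s)$ rather than blowing up along the DAG of the PC derivation; you flag this as the hard part but offer no mechanism to control it. Until that is addressed, the proposal restates the problem rather than solving it.
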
 

Hence, by combining Theorems \ref{thm:random_sos}, \ref{thm:sos_simulates_pc} and the PC reduction used to prove \ref{thm:lowerbound}, we obtain the following lower bound for tensor isomorphism in Sum-of-Squares.

\begin{theorem}
Over the real numbers, there is a distribution on $n \times n \times n$ \textsc{Tensor Isomorphism} whose associated equations require SoS degree $\Omega(\sqrt[4]{n})$ to refute with probability 1 - o(1).
\label{thm:sos_lower_bound}   
\end{theorem}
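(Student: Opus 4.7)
The plan is to faithfully re-run the argument of Theorem~\ref{thm:lowerbound}, but with the PC lower bound on Random 3XOR (Theorem~\ref{thm:random}) replaced by its SoS analog (Theorem~\ref{thm:random_sos}). First, I would instantiate Theorem~\ref{thm:random_sos} with $\epsilon = 0$ and a sufficiently large constant clause density $c \geq 10^4$, so that the resulting distribution on 3XOR instances with $n$ variables and $cn$ clauses both (i) falls within the parameter range required by Theorem~\ref{thm:XORtoMon} and (ii) requires SoS refutation degree $\Omega(n)$ with probability $1-o(1)$.

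Second, I would verify that each PC many-one reduction used in the proof of Theorem~\ref{thm:lowerbound}---the $(1,3)$-reduction of Theorem~\ref{thm:XORtoMon}, the $(2,6)$-reduction of Theorem~\ref{thm:pm1_to_monomial}, the $(6,12)$-reduction of Theorem~\ref{thm:monomial to general equivalence}, and the $(1,3)$-reductions of Lemmas~\ref{lem:forms to blocks} and~\ref{lem:blocks to tensors}---lifts to an SoS many-one reduction with at most a factor-$2$ blowup in the derivation degree. A $(d_1,d_2)$-PC many-one reduction is witnessed by polynomial substitutions of degree $\leq d_1$ defining the target variables in terms of the source variables, together with PC derivations of degree $\leq d_2$ of the substituted target axioms from the source axioms. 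Applying Theorem~\ref{thm:sos_simulates_pc} to the derivation step converts each witness into a $(d_1,2d_2)$-SoS witness. The obvious SoS analog of Lemma~\ref{lem:PCred}---obtained by substituting $\mathbf{y}\mapsto r(\mathbf{x})$ into any SoS refutation of the target system and gluing in the SoS-simulated PC derivations of the substituted target axioms---then transfers SoS degree lower bounds along each reduction.

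Finally, I would compose these lifted reductions exactly as in the proof of Theorem~\ref{thm:lowerbound}. Composition multiplies substitution degrees and keeps derivation degrees bounded by a constant (since all individual parameters are constants), so one obtains a constant-parameter SoS many-one reduction from Random 3XOR on $n$ variables to \textsc{Tensor Isomorphism} on tensors of side length $N = O(n^4)$. Combined with the $\Omega(n)$ SoS lower bound from the first step, this yields the claimed $\Omega(\sqrt[4]{N})$ SoS lower bound with probability $1-o(1)$; the probabilistic guarantee is preserved because the reductions are deterministic and the only randomness is in the choice of 3XOR instance. The main---really the only---technical point to confirm is the SoS analog of Lemma~\ref{lem:PCred}, and I expect this to be immediate from Theorem~\ref{thm:sos_simulates_pc}, so there is no substantive obstacle.
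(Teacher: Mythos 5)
Your proposal follows the same route as the paper: combine the Grigoriev/Schoenebeck SoS lower bound for random 3XOR (Theorem~\ref{thm:random_sos}) with Berkholz's simulation of PC by SoS (Theorem~\ref{thm:sos_simulates_pc}) and the chain of PC many-one reductions already assembled for Theorem~\ref{thm:lowerbound}. You are, if anything, slightly more explicit than the paper about what is being used---namely an SoS analog of Lemma~\ref{lem:PCred} obtained by substituting into the SoS refutation of the target and splicing in Berkholz-simulated derivations of the substituted axioms---but this is exactly what the paper's ``all of our reductions go through in Sum-of-Squares'' is asserting, so the approaches coincide.
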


\section{Open Questions}
Beyond Conjecture~\ref{conj:inv}, we highlight several more questions we find interesting about the algebraic proof complexity of \textsc{Tensor Isomorphism}.

\subsection{Degree}
\begin{question}
What is the correct value for the PC degree of rank-$r$ \textsc{Tensor Isomorphism}?
\end{question}

Note that by using the reductions from Section~\ref{sec:lowerbound}, we can produce (random) $r \times r \times r$ tensors that require PC degree $\Omega(r^{1/4})$ to refute. However, the number of variables is $6r^2$, this lower bound is only $\Omega(N^{1/8})$ where $N$ is the number of variables. Since their rank could be as large as $R = \Theta(r^2)$ (and indeed, very likely is), the upper bound we get from Theorem~\ref{thm:lowrank} is only $2^{O(r^4)}$ (without the $x^q - x$ axioms) or $O(r^4)$ (with the $x^q - x$ axioms, with $q = O(1)$). Even in the latter case, this leaves a polynomial gap between the lower and upper bounds (without those the gap is exponential). 

We note that the upper bound in Theorem~\ref{thm:lowrank} without the $x^q-x$ equations already applies to the weaker Nullstellensatz proof system. Is there a polynomial upper bound on PC degree---as a function of rank---without the $x^q-x$ axioms?

\subsection{Size}
In the presence of the Boolean axioms, there is a size-degree tradeoff for PC (or even PCR---a system with the same degree bounds as PC, but is stronger when measuring size by number of monomials or number of symbols) \cite{CEI, ABRW}. This implies that in the presence of the Boolean axioms, a good degree lower bound implies a good size lower bound. But \textsc{TI} does not have the Boolean axioms. 

\begin{question}
Get lower and upper bounds on the \emph{size} of PC proofs for \textsc{Tensor (Non-)Isomorphism}. Are there subexponential size upper bounds, despite the polynomial degree lower bounds?
\end{question}

\subsection{Other matrix problems}
While many different tensor-related problems are all equivalent to \textsc{TI}, in the case of matrices, we have three genuinely different problems: matrix equivalence (2-TI), matrix conjugacy, and matrix congruence. Conjugacy is determined by the Rational Normal Form or Jordan Normal Form, while congruence depends on the field (e.g., over algebraically closed fields it only depends on rank, over $\mathbb{R}$ it depends on the signature, and over finite fields it depends on whether the determinant is a square or not).

\begin{question}
What is the PC complexity (size, degree, etc.) of matrix conjugacy? Of matrix congruence?
\end{question}

More precisely, for conjugacy we have in mind the system of equations:
\[
XM = M'X \qquad XX' = X'X = I,
\]
and for congruence the system of equations:
\[
XMX^T = M' \qquad XX' = X'X = I.
\]
\subsection{Bounded border rank}
Not only can testing a tensor for bounded rank can be done in polynomial time (Remark~\ref{rmk:rank}), testing a tensor for bounded \emph{border}-rank can also be done in polynomial time (see, e.\,g., \cite{GrochowSE}), by evaluating a polynomial number of easy-to-evaluate equations. While several partial results are available, the gap for what is known about the ratio between rank and border rank is quite large: there are 3-tensors known whose rank approaches 3 times their border rank \cite{zuiddam}, but the currently known upper bound is Lehmkuhl and Lickteig \cite{LL}, who show that for tensors of border rank $b$, the ratio of rank to border rank is at most $c^{\Theta(nb)}$. See the Zuiddam's introduction \cite{zuiddam} for more details.

\begin{question}
What is the PC degree of testing isomorphism of tensors of bounded border-rank? Can such tests be done (by any method) in polynomial time?
\end{question}

\subsection{Relating different reductions from \textsc{Graph Isomorphism}}
While we chose a particular reduction from \textsc{GI} to \textsc{TI} for the lower bound in Section~\ref{sec:lowerboundGI}, we are aware of several others, including: 
\begin{itemize}
\item \textsc{GI} to \textsc{Permutational Code Equivalence} \cite{PR, Luks, Miyazaki}, then to \textsc{Matrix Lie Algebra Conjugacy} \cite{GrochowLie}, then to \textsc{TI} \cite{FGS};
 
\item \textsc{GI} to \textsc{Semisimple Matrix Lie Algebra Conjugacy} \cite{GrochowLie}, and then to \textsc{TI} \cite{FGS}; 

\item \textsc{GI} to \textsc{Alternating Matrix Space Isometry} \cite{GQ, HQ}, then to \textsc{TI} \cite{FGS}; 

\item \textsc{GI} to \textsc{Algebra Isomorphism} \cite{grigorievGI, AS05}, then to \textsc{TI} \cite{FGS}.
\end{itemize}
We believe all of these can be realized as low-degree PC reduction as well. In the first arXiv version of \cite{GQ}, they asked which of these might be equivalent in some sense (though there the final target was \textsc{Alternating Matrix Space Isometry}, another $\mathsf{TI}$-complete problem, rather than \textsc{TI} itself). Here we make this question slightly more precise, in terms of PC reductions:

\begin{question}
Which, if any, of the reductions above from \textsc{Graph Isomorphism} to \textsc{Tensor Isomorphism} are equivalent under low-degree PC?
\end{question}

\bibliographystyle{alphaurl}
\bibliography{tensor-bib}

\end{document}